\documentclass[11pt]{article}


\usepackage{amsmath,amssymb,amsthm,mathtools}

    \usepackage{thmtools}
    \usepackage{thm-restate}
	\usepackage{a4,geometry}

\usepackage{graphicx}
\usepackage{paralist}
\usepackage{bm}
\usepackage{xspace}
\usepackage{url}
\usepackage{fullpage, prettyref}
\usepackage{boxedminipage}
\usepackage{wrapfig}
\usepackage{ifthen}
\usepackage{color}
\usepackage[usenames,dvipsnames]{xcolor}
\usepackage[colorlinks,citecolor=blue,linkcolor=BrickRed]{hyperref}\usepackage[colorlinks,citecolor=blue,linkcolor=BrickRed]{hyperref}
\usepackage{framed}

\usepackage{algpseudocode}
\usepackage[ruled,vlined, linesnumbered]{algorithm2e}
\usepackage{titlesec}
\titlespacing*{\subsubsection}{0pt}{0.2em}{0.2em}

\usepackage{thmtools}
\usepackage{thm-restate}
\usepackage{cleveref}

\newtheorem{theorem}{Theorem}[section]

\newtheorem{lemma}[theorem]{Lemma}

\newtheorem{question}[theorem]{Question}
\newtheorem{corollary}[theorem]{Corollary}
\newtheorem{definition}[theorem]{Definition}
\newtheorem{proposition}[theorem]{Proposition}
\newtheorem{fact}[theorem]{Fact}

\newtheorem{remark}[theorem]{Remark}
\newtheorem{abort}[theorem]{Stopping Condition}
\newtheorem{mergerule}[theorem]{Merging Rule}

\newcommand{\ignore}[1]{}

\newcommand{\p}{\Pr}
\newcommand{\R}{\mathbb R}

\newcommand{\poly}{\mathsf{poly}}
\newcommand{\polylog}{\mathsf{polylog}}

\newcommand{\diag}{\mathsf{diag}}
\newcommand{\Cov}{\mathsf{Cov}}

\newcommand{\GlobalIntervalTree}{\mathcal{T}_{\mathsf{Global}}}

\newcommand{\ModifiedGlobalIntervalTree}{\mathcal{T}_{\mathsf{Global}}^{\infty}}

\newcommand{\E}{\mathbb{E}}

\newcommand\restr[2]{{
  \left.\kern-\nulldelimiterspace
  #1 
  \vphantom{\big|} 
  \right|_{#2} 
 }}

\newcommand{\Sec}[1]{\hyperref[sec:#1]{\S\ref*{sec:#1}}} 
\newcommand{\Eqn}[1]{\hyperref[eq:#1]{(\ref*{eq:#1})}} 
\newcommand{\Fig}[1]{\hyperref[fig:#1]{Fig.\,\ref*{fig:#1}}}
\newcommand{\Tab}[1]{\hyperref[tab:#1]{Tab.\,\ref*{tab:#1}}}
\newcommand{\Thm}[1]{\hyperref[thm:#1]{Theorem\,\ref*{thm:#1}}} 
\newcommand{\Fact}[1]{\hyperref[fact:#1]{Fact\,\ref*{fact:#1}}} 
\newcommand{\Lem}[1]{\hyperref[lem:#1]{Lemma~\ref*{lem:#1}}} 
\newcommand{\Prop}[1]{\hyperref[prop:#1]{Proposition~\ref*{prop:#1}}} 
\newcommand{\Cor}[1]{\hyperref[cor:#1]{Corollary~\ref*{cor:#1}}} 
\newcommand{\Conj}[1]{\hyperref[conj:#1]{Conjecture~\ref*{conj:#1}}} 
\newcommand{\Rem}[1]{\hyperref[rem:#1]{Remark~\ref*{rem:#1}}} 
\newcommand{\Def}[1]{\hyperref[def:#1]{Definition~\ref*{def:#1}}} 
\newcommand{\Alg}[1]{\hyperref[alg:#1]{Alg.~\ref*{alg:#1}}} 
\newcommand{\Ex}[1]{\hyperref[ex:#1]{Ex.~\ref*{ex:#1}}} 
\newcommand{\Clm}[1]{\hyperref[clm:#1]{Claim~\ref*{clm:#1}}} 
\newcommand{\Step}[1]{\hyperref[step:#1]{Step~\ref*{step:#1}}} 
\newcommand{\Obs}[1]{\hyperref[obs:#1]{Observation~\ref*{obs:#1}}} 
\renewcommand{\Alg}[1]{\hyperref[alg:#1]{Algorithm~\ref*{alg:#1}}} 

\newcommand{\disc}{\mathsf{disc}}

\newcommand{\err}{\mathsf{err}}
\newcommand{\ASI}{\mathsf{ASI}}
\newcommand{\Error}{\mathsf{Error}}
\newcommand{\Bad}{\mathsf{Bad}}
\newcommand{\bad}{\mathsf{bad}}

\graphicspath{{./Figures/}}

\newcommand{\Tr}{\text{Tr}}

\title{Near-Optimal Constructive Bounds for $\ell_2$ Prefix Discrepancy and Steinitz Problems via Affine Spectral Independence}

\date{}

\author{
Kunal Dutta\thanks{University of Warsaw, Warsaw, Poland. \texttt{kdutta@mimuw.edu.pl}. Supported by the Polish NCN OPUS grant nr. 2023/51/B/ST6/02989.}
\and 
Agastya Vibhuti Jha\thanks{University of Chicago, Chicago, IL, USA. \texttt{agastyajha@uchicago.edu}.}
\and
Haotian Jiang\thanks{University of Chicago, Chicago, IL, USA. \texttt{jhtdavid@uchicago.edu}.}}

\begin{document}

\allowdisplaybreaks
\begin{titlepage}
\maketitle

\begin{abstract}
A classical result of Steinitz from 1913 \cite{Ste13}, answering an earlier question of Riemann and L\'evy (e.g., \cite{Lev05}), states that for any norm $\|\cdot\|$ in $\mathbb{R}^d$ and any set of vectors $v_1, \cdots, v_n \in \R^d$ satisfying $\sum_{i=1}^n v_i = 0$, there exists an ordering $\pi: [n] \rightarrow [n]$ such that every partial sum along this order is bounded by $O(d)$, i.e., $\big\| \sum_{i=1}^t v_{\pi(i)} \big\| \leq O(d)$ for all $t \in [n]$. 

Steinitz's bound is tight up to constants in general, but for the $\ell_2$ norm $\|\cdot\|_2$, it has been conjectured that the best bound is $O(\sqrt{d})$.
Almost a century later, a breakthrough work of Banaszczyk \cite{Ban12} gave a bound of $O(\sqrt{d} + \sqrt{\log  n})$ for the $\ell_2$ Steinitz problem, matching the conjecture under the mild assumption that $d \geq \Omega(\log n)$. Banaszczyk's result is non-constructive, and the previous best algorithmic bound was $O(\sqrt{d \log n})$, due to Bansal and Garg \cite{BG17}. 

In this work, we give an efficient algorithm that matches the conjectured $O(\sqrt{d})$ bound for the $\ell_2$ Steinitz problem under the slightly worse, yet still polylogarithmic, condition of $d \geq \Omega(\log^7 n)$. As in prior work, our result extends to the harder problem of $\ell_2$ prefix discrepancy.  

We employ the framework of obtaining the desired ordering via a discrete Brownian motion, guided by a semidefinite program (SDP). To obtain our results, we use the new technique of ``Decoupling via Affine Spectral Independence'', proposed by Bansal and Jiang \cite{BJ26} to achieve substantial progress on the Beck-Fiala and Koml\'os conjectures, together with a ``Global Interval Tree'' data structure that simultaneously controls the deviations for all prefixes. 
\end{abstract}

 \thispagestyle{empty}
\end{titlepage}

\thispagestyle{empty}
{\hypersetup{linkcolor=BrickRed}
 \tableofcontents
}

\thispagestyle{empty}
\newpage
\setcounter{page}{1}

\section{Introduction}
The Steinitz problem is a fundamental question in combinatorial discrepancy theory that studies how well one can control the discrepancy of partial sums of vectors under permutations. 
Formally, given a norm $\|\cdot\|$ and a set of vectors $v_1, \cdots, v_n \in \R^d$ satisfying $\|v_i\| \leq 1$ for each $i \in [n]$\footnote{Throughout, we use $[n]$ to denote the set $\{1, \cdots, n\}$.} and $\sum_{i \in [n]} v_i = 0$, the Steinitz problem asks for the smallest number $S(\| \cdot \|)$, depending only on the norm $\|\cdot\|$, such that there exists an ordering $\pi: [n] \rightarrow [n]$ for which all partial sums along this ordering are at most $S(\|\cdot\|)$, i.e., $\max_{t \in [n]} \big\|\sum_{i \in [t]} v_{\pi(i)}\big\| \leq S(\|\cdot\|)$.  
This was originally a question of Riemann and L\'evy \cite{Lev05} from the early 1900s, and it has later found many surprising applications in areas such as graph theory \cite{AB86}, integer programming \cite{BMMP12,DFG12,EW19,JR19}, and scheduling \cite{Bar81,Sev94}. 
We refer readers to the surveys \cite{HA89,Sev94,Bar08} for more details on the history and applications of the Steinitz problem.

A classical result of Steinitz \cite{Ste13} showed that $S(\|\cdot\|) \leq 2d$, regardless of the norm. Subsequently, this bound was improved to $1.5d$ in \cite{Bar81} and then further to slightly better than $d$ \cite{GS80,Ban87}. 
In particular, the proof by Grinberg and Sevastyanov in \cite{GS80} uses a clever iterated rounding argument, which also implies an efficient algorithm for finding a good ordering $\pi$.  

The above bound of $O(d)$ is the best possible up to constants for general norms, e.g., $S(\|\cdot\|_1) \geq d/2$. But for the $\ell_2$ norm, the best lower bound is only $S(\|\cdot\|_2) \geq \Omega(\sqrt{d})$ \cite{Ber31,GS80}. While no  $o(d)$ bound is known for $\ell_2$, it was conjectured that $\Theta(\sqrt{d})$ should be the right answer.\footnote{Note that one may assume $d \leq n^2$, as otherwise the trivial discrepancy bound of $n$ already achieves the conjectured $O(\sqrt{d})$ bound. We make the assumption $d \leq n^2$ throughout this work.} 
In breakthrough work, Banaszczyk \cite{Ban12} proved that $S(\|\cdot\|_2) \leq O(\sqrt{d} + \sqrt{\log n})$, matching the conjectured bound up to constants under the mild assumption of $d \geq \log n$. 
Unlike the $O(d)$ bound above, Banaszczyk's proof is non-constructive and does not give an efficient algorithm for finding a good ordering (we discuss more on algorithmic aspects of discrepancy theory in \Cref{subsec:related_work}). 
On the algorithmic front, \cite{HS14} gave a constructive proof of $O(\sqrt{d} \log^{2.5} n)$, and the previous best algorithmic bound was $O(\sqrt{d \log n})$ due to Bansal and Garg \cite{BG17}. 
Unlike Banaszczyk's result, these algorithmic bounds don't match the conjectured bound for $\ell_2$ Steinitz in any regime, and it is a natural open question  to algorithmically attain Banaszczyk's bound for this problem (e.g., see \cite{BDGL18}).

\smallskip
\noindent \textbf{Prefix Discrepancy.}
A closely-related question is the $\ell_2$ {\em prefix discrepancy}\footnote{The prefix discrepancy problem is also referred to as the {\em signed series} problem in the literature.} problem, introduced by Spencer \cite{Spe77}: given a sequence of vectors $v_1, \cdots, v_n \in \mathbb{R}^d$ with $\|v_i\|_2 \leq 1$ for each $i \in [n]$, the goal is to find a coloring $x \in \{\pm 1\}^n$ so that the $\ell_2$ norm of all partial signed sum is bounded by a number $E(\|\cdot\|_2)$, i.e., $\max_{t \in [n]} \big\| \sum_{i=1}^t x_i v_t \big\|_2 \leq E(\|\cdot\|_2)$.
Variants of this prefix problem have been well-studied and they have many applications to online algorithms \cite{JKS19,BJSS20,BJM+21,ALS21}, Tusn\'ady's problem \cite{Nik17}, and flow time scheduling \cite{BRS22}.

For $\ell_2$ prefix discrepancy, Spencer \cite{Spe77} showed a bound that only depends on $d$. Using iterated rounding, B\'ar\'any and Grinberg \cite{BG81} gave an algorithmic bound of $2d-1$.
Similar to $\ell_2$ Steinitz, it has been conjectured that the correct bound for $E(\|\cdot\|_2)$ should be $\Theta(\sqrt{d})$.
The $\ell_2$ prefix discrepancy problem is actually known to be harder than the Steinitz problem --- Chobanyan \cite{Cho94} gave a reduction between the two problems which shows that $S(\|\cdot\|_2) \leq E(\|\cdot \|_2)$, and this reduction can be made algorithmic \cite{HS14}. In fact, both the non-constructive bound of $O(\sqrt{d} + \sqrt{\log n})$ in \cite{Ban12} and the algorithmic bound of $O(\sqrt{d \log n})$ in \cite{BG17} above were given for the harder $\ell_2$ prefix discrepancy problem and then they used this reduction as a black box.

\subsection{Our Results}

Our main result is an improved algorithmic bound for $\ell_2$ prefix discrepancy that matches the conjectured $\Theta(\sqrt{d})$ bound under the slightly worse, yet still polylogarithmic, condition of $d \geq \log^7 n$ (recall that in Banaszczyk's non-constructive bound, the condition was $d \geq \log n$). 

\begin{restatable}{theorem}{elltwotoelltwo}(Algorithmic $\ell_2$ to $\ell_2$ prefix discrepancy)
\label{thm:main-ell2-to-ell2}
Given $v_1, \cdots, v_n \in \R^d$ with $\|v_i\|_2 \leq 1$ for each $i \in [n]$, one can efficiently find $x \in \{\pm 1\}^n$ such that $\big\|\sum_{i=1}^t x_i v_i\big\|_2 \leq O(\sqrt{d} + d^{1/4} \log^{7/4} n)$ for all prefix $t \in [n]$. In particular, the bound matches the conjectured $\Theta(\sqrt{d})$ when $d \geq \log^7 n$. 
\end{restatable}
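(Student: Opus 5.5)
We will run an SDP-guided discrete Brownian motion, following Bansal--Garg \cite{BG17}, but with the decoupling technique of Bansal--Jiang \cite{BJ26} in place of the usual union bound over coordinates and prefixes. The fractional coloring starts at $x_0 = 0 \in [-1,1]^n$. In each step, with alive set $A_t$ (coordinates with $|x_i|<1-\delta$), we take a small Gaussian step $x_{t+1} = x_t + \gamma\, U_t g_t$. The covariance $\Sigma_t = U_tU_t^\top$ is a PSD matrix supported on $A_t$ that satisfies three conditions:
\begin{enumerate}
\item[(i)] $\Sigma_t(i,i)=1$ for a constant fraction of alive coordinates;
\item[(ii)] the prefix vector sums have bounded variance in every direction, i.e.\ for each prefix $t'$ and each unit $\theta$, $\theta^\top V_{t'} \Sigma_t V_{t'}^\top \theta \le O(1/d)\cdot \sum_{i \le t', i\in A_t}\langle \theta,v_i\rangle^2 \Sigma$-style;
\item[(iii)] the ``affine spectral independence'' condition, meaning $\Sigma_t$ is orthogonal to a small subspace $W_t$ of dimension a small constant fraction of $|A_t|$.
\end{enumerate}
Feasibility of this SDP comes from a Gram-matrix/dimension-counting argument in the style of \cite{BG17,BJ26}. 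When all coordinates are frozen we round the leftover $\delta$ part, which costs only $O(1)$ per prefix after taking $\delta\approx 1/n$.

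The analysis needs the discrepancy $\|\sum_{i\le t'}x_iv_i\|_2$ to stay $O(\sqrt d)$ simultaneously for all $n$ prefixes. We plan to organize the prefixes with a Global Interval Tree over $[n]$. Each prefix $[1,t']$ decomposes into $O(\log n)$ dyadic intervals. For each dyadic interval $I$ the walk tracks the $d$-dimensional martingale $D_I(t)=\sum_{i\in I}x_i(t)v_i$. The affine-independence constraint (iii) is meant to decouple: we put into $W_t$ the top eigen-directions of the current ``energy'' matrices $\sum_{i\in I\cap A_t} v_iv_i^\top$ for the heavy intervals. The remaining quadratic variation of $\|D_I\|_2^2$ is then spread nearly isotropically. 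This lets us show, via a Freedman-type martingale inequality applied to $\|D_I\|_2^2 - \E$, that the discrepancy on each node is $O(\sqrt{d_I})$ plus a tail term, with failure probability $1/\poly(n)$.

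Summing over the $O(\log n)$ nodes of a prefix naively loses a $\log n$ factor. Instead, we weight levels geometrically, so that in (ii) node $I$ at level $\ell$ receives a variance budget decaying in the distance from the level where $I$ is "active". The contributions then sum to $O(\sqrt d)$ in the main term. The error terms come from three sources: the Freedman tail $\sqrt{\text{var}\cdot\log n}$, the spectral-gap loss from projecting out $W_t$, and the $\log n$ tree levels. Balancing them should give the additive term $d^{1/4}\log^{7/4} n$; this arises as the geometric mean of $\sqrt d$ and $\log^{7/2}n$ in the optimized parameter choice.

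The main obstacle is the concentration step: controlling $\|D_I\|_2$ rather than each coordinate. Bounding all $d$ coordinates separately gives $\sqrt{d\log n}$, which is the \cite{BG17} bound. We will try to control the increment of $\|D_I\|_2^2$ via its drift $\gamma^2\Tr(V_I\Sigma_tV_I^\top)$ and its martingale part $2\gamma\langle D_I, V_IU_tg_t\rangle$. The variance of the martingale part is $\gamma^2 D_I^\top V_I\Sigma_t V_I^\top D_I$. Adding the current direction $D_I/\|D_I\|$ (restricted to each heavy node) into $W_t$ kills this variance for the nodes where it matters. The issue is that the number of such nodes across the tree must stay $o(|A_t|)$ at all times. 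We plan to resolve this by amortizing: only the $O(|A_t|/\log n)$ nodes whose live mass is large are protected, and the small nodes are handled by the trivial bound. This counting/amortization is where the polylogarithmic loss enters, and it is the part I expect to need the most care.
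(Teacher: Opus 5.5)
Your plan has a genuine gap at the first step. Nothing in it limits the set of columns that ever contribute to a given prefix's discrepancy to $O(d)$, and without that the $\sqrt d$ main term is out of reach.

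In your scheme each dyadic node $I\subseteq[n]$ carries $D_I$, and you propose to block $D_I/\|D_I\|_2$ and some top eigendirections of $\sum_{j\in I\cap A_t}v_jv_j^\top$. Blocking the current direction removes only the martingale part $2\langle D_I,V_Iu_t\rangle$. The It\^o drift $\|V_Iu_t\|_2^2\,dt\ge 0$ keeps accumulating. Nothing in your constraints keeps its total below the $\ell_2^2$-mass of $I$ in the unblocked directions, which is $\Theta(|I|)$ for near-isotropic nodes, unless all $d$ coordinates of $V_Iu_t$ are blocked.

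Fully blocking every node with more than $Cd$ alive columns costs $d$ constraints per node. That is about $|A_t|/C$ constraints per level and $|A_t|\log n/C$ in total, which forces $C\gtrsim\log n$. A single node can then already have discrepancy of order $\sqrt{d\log n}$ before you sum over $\log n$ nodes. The ``geometric weighting of levels'' does not change what any single node's discrepancy is.

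The ingredient you are missing is the sliding window. The update $u_t$ is supported on the first $10d$ alive columns $W_t$, with $\sum_{j\in W_t}A_i(j)u_t(j)=0$ for every $i\in[d]$; this costs only $d$ constraints in total. Consequently a prefix $\mathcal{P}$ changes only while $\mathcal{P}\in W_t$, and only through the at most $10d$ columns of $W_{t^*}\cap[\mathcal{P}]$. This is what gives $Q_t=O(d+\log n)$.

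Separately, your condition (ii) as written, with factor $O(1/d)$ imposed for every prefix, is infeasible. For the prefix ending at the first alive column it forces that diagonal entry of $\Sigma_t$ to vanish, and inductively all of them vanish, contradicting (i).

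The decoupling step is also misidentified. Affine spectral independence is not orthogonality to a subspace; that is the blocking constraint $H_t$. It is the spectral constraint $E_tUE_t^\top\preceq\gamma_{\ASI}\diag(E_tUE_t^\top)$, where the rows of $E_t$ are the rows $A_i$ restricted to at most $\gamma_{\ASI}$ guarded prefixes chosen by the interval tree among the columns of $W_t$. It yields $\E_t[\langle d\varphi_t^{\ASI},\varphi_t\rangle^2]\le\gamma_{\ASI}\sum_i\varphi_t(i)^2\,\E_t[d\varphi_t^{\ASI}(i)^2]\lesssim\gamma_{\ASI}\|\varphi_t\|_\infty^2\sum_j\E_t[u_t(j)^2]\,dt$.

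For $\ell_2$-bounded columns this bound is only useful together with a separate $\ell_\infty$ prefix bound $\lambda=O(\log^{3/2}n)$. The paper gets it from a second tree of blocking constraints on intervals of row-wise $\ell_2^2$-mass $\Theta(\log n)$. Unguarded prefixes are handled by splitting $dL_t=dL_t^{\ASI}+dL_t^{\err}$ and using the merging rule to bound the error set by $O(\gamma_{\ASI}^{-1}d\log^2n)$.

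The additive term then comes from balancing two costs:
\begin{itemize}
\item the ASI deviation $\gamma_{\ASI}\lambda^2\log n\approx\gamma_{\ASI}\log^4 n$;
\item the error cost $\beta_{\err}|\Error|\approx\gamma_{\ASI}^{-1}d\log^3 n$.
\end{itemize}
They match at $\gamma_{\ASI}\approx\sqrt{d/\log n}$, giving $\tau^2\approx d+\sqrt d\log^{7/2}n$. None of these ingredients appears in your plan (the ASI spectral constraint, the $\ell_\infty$ bound, the guarded/unguarded split with an error-set bound), and your $d^{1/4}\log^{7/4}n$ is read off the statement rather than derived.
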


Unlike the constructive bound in \cite{BG17} that loses a multiplicative $O(\sqrt{\log n})$ factor, the bound in \Cref{thm:main-ell2-to-ell2} (as well as the other results below) is only off from the conjectured bound by an {\em additive} factor, which is of lower order when $d \geq \polylog(n)$. 
While not stated explicitly, our bounds are also never worse than those in \cite{BG17}, as our algorithms are built on top of theirs.

Using the algorithmic reduction in \cite{Cho94,HS14} as a black box, \Cref{thm:main-ell2-to-ell2} also implies the same algorithmic improvement for the $\ell_2$ Steinitz problem. This gets close to answering the question of algorithmically matching Banaszczyk's bound for $\ell_2$ Steinitz in \cite{BDGL18}. 

\begin{corollary}[Algorithmic $\ell_2$ Steinitz] 
\label{thm:main-ell2-Steinitz}
Given vectors $v_1, \cdots, v_n \in \R^d$ such that $\sum_{i=1}^n v_i = 0$ and $\|v_i\|_2 \leq 1$ for each $i \in [n]$, one can efficiently find a permutation $\pi: [n] \rightarrow [n]$ such that $\big\|\sum_{i=1}^t v_{\pi(i)}\big\|_2 \leq O(\sqrt{d} + d^{1/4} \log^{7/4} n)$ for all prefix $t \in [n]$. This bound matches the conjectured $\Theta(\sqrt{d})$ bound when $d \geq \log^7 n$.
\end{corollary}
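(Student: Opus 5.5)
This corollary follows from \Cref{thm:main-ell2-to-ell2} through the algorithmic version, due to \cite{HS14}, of Chobanyan's reduction \cite{Cho94}. The plan is to turn the signed-prefix guarantee into an ordering guarantee by iterated refinement. Start from an arbitrary ordering $\pi_0$ whose prefix sums may be large. Given the current ordering $\pi$, apply \Cref{thm:main-ell2-to-ell2} to the sequence $v_{\pi(1)},\dots,v_{\pi(n)}$ to get signs $x\in\{\pm1\}^n$ with all signed prefixes bounded by $E:=O(\sqrt d + d^{1/4}\log^{7/4} n)$. Then build a new ordering $\pi'$: list the indices with $x_i=+1$ in their $\pi$-order, followed by the indices with $x_i=-1$ in reverse $\pi$-order.

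The key step is the standard identity, which I would verify directly. Write $S_t=\sum_{i\le t} v_{\pi(i)}$ and $D_t=\sum_{i\le t} x_i v_{\pi(i)}$. The sum of the $+1$ vectors among the first $t$ is $\tfrac12(S_t+D_t)$. Every prefix of $\pi'$ lying inside the $+$ block therefore equals $\tfrac12(S_t+D_t)$ for some $t$, so its norm is at most $\tfrac12(\|S_t\|+E)$. For a prefix of $\pi'$ that extends into the $-$ block, use $\sum_i v_i=0$: the prefix equals minus the sum of the remaining $-$ vectors. Those remaining vectors are exactly the $-$ vectors among the first $t$ in $\pi$-order, for some $t$, and their sum is $\tfrac12(S_t-D_t)$. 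So that prefix has norm at most $\tfrac12(\|S_t\|+E)$ as well. Hence $M(\pi')\le \tfrac12 M(\pi)+\tfrac12 E$, where $M(\pi)=\max_t\|S_t\|_2$.

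Next, iterate. Since $M(\pi_0)\le n$, after $k=O(\log n)$ rounds we have $M(\pi_k)\le 2^{-k}n+E\le E+1=O(E)$. Each round costs one call to the polynomial-time algorithm of \Cref{thm:main-ell2-to-ell2} plus linear-time bookkeeping, so the whole procedure is efficient. I do not expect a real obstacle. The only points needing care are the indexing in the second case, which relies on the zero-sum hypothesis, and checking that the $O(\log n)$ iterations cost only an additive constant in the final bound.
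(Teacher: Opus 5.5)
Your proof is correct and follows the same route as the paper, which simply invokes the algorithmic version of Chobanyan's reduction \cite{Cho94,HS14} as a black box on top of \Cref{thm:main-ell2-to-ell2}. You have unpacked that reduction: put the $+$ block first and the reversed $-$ block second to get $M(\pi')\le \tfrac12(M(\pi)+E)$, then iterate $O(\log n)$ times. One small addition: the signing algorithm is randomized and succeeds with probability $1-1/\poly(n)$, so a union bound over the $O(\log n)$ rounds (or checking the signed prefix sums and rerunning on failure) keeps the whole procedure efficient.
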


It is an intriguing open question to improve the condition of $d \geq \log^7 n$ in Theorems \ref{thm:main-ell2-to-ell2} and \ref{thm:main-ell2-Steinitz} to get close to the condition of $d \geq \log n$ in Banaszczyk's non-constructive bound \cite{Ban12}. 
We give such an improvement in the case where coordinates of the vectors $v_i$ are ``well-spread'', namely, each coordinate of every $v_i$ has magnitude at most $1/\sqrt{d}$. 
Equivalently, one may scale the vectors $v_1, \cdots, v_n$ up by a $\sqrt{d}$ factor and assume that each $v_i$ satisfies $\|v_i\|_\infty \leq 1$. 
Under this scaling, the conjectured bound becomes $\Theta(d)$, where the lower bound is witnessed by a Hadamard matrix. 
The previous best constructive bound for this setting was $O(d \sqrt{\log n})$ by Bansal and Garg \cite{BG17}.

\begin{restatable}{theorem}{ellinftoelltwo}(Algorithmic $\ell_\infty$ to $\ell_2$ prefix discrepancy)
\label{thm:ell_inf-to-ell_2}
Given $v_1, \cdots, v_n \in \R^d$ with $\|v_i\|_\infty \leq 1$ for each $i \in [n]$, one can efficiently find $x \in \{\pm 1\}^n$ such that $\big\|\sum_{i=1}^t x_i v_i\big\|_2 \leq O(d + d^{3/4} \log n + d^{1/4}\log^{3/2}n)$ for all prefix $t \in [n]$. This matches the conjectured $\Theta(d)$ bound when $d \geq \log^4 n$. 
\end{restatable}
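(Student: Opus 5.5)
We will prove \Cref{thm:ell_inf-to-ell_2} by running the same SDP-guided discrete Brownian motion that underlies \Cref{thm:main-ell2-to-ell2}, and changing only how the per-coordinate deviations are accounted for. The algorithm maintains a fractional coloring $x_t \in [-1,1]^n$ that starts at $0$. In each step it solves an SDP, in the style of Bansal--Garg \cite{BG17}, over the currently alive elements and takes a small Gaussian step $x_{t+1} = x_t + \gamma U_t g_t$. The SDP asks for the following.
\begin{itemize}
\item The step covariance $\Sigma_t = U_t U_t^\top$ has a constant fraction of its trace on the alive elements.
\item For every prefix $s$ (more precisely, for every canonical interval of the Global Interval Tree $\GlobalIntervalTree$), the $d\times d$ covariance of the increment of $\sum_{i\le s} x_i v_i$ is dominated by a small multiple of the identity, or of a suitably chosen "budget" PSD matrix.
\end{itemize}
Elements freeze when they reach $\pm1$. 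The target is to bound, for every $s$, the quantity $\|\sum_{i\le s} x_i v_i\|_2^2 = \sum_{j=1}^d \langle w_j^{(s)}, x\rangle^2$, where $w_j^{(s)}$ is row $j$ restricted to $[s]$ and has entries in $[-1,1]$.

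Step 1 controls the martingale drift. For each prefix $s$, the quantity $\E\|\sum_{i\le s}x_i v_i\|^2$ increases each step by $\gamma^2 \Tr(V_s^\top \Sigma_t V_s)$. The SDP's affine spectral constraint makes the total accumulated expected contribution $O(d^2)$. The reason is that each coordinate $j$ of the prefix sum behaves like a weighted martingale in the elements, and a single element contributes at most $\|v_i\|_2^2\le d$ to the trace over its active lifetime. The entrywise bound $\|v_i\|_\infty\le1$ is what allows the per-coordinate variance budget to be $O(d)$ rather than $O(d\log n)$. Summing over the $d$ coordinates gives $\E\|\cdot\|_2^2 = O(d^2)$, which matches the target main term $O(d)$.

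Step 2 converts this into a uniform high-probability bound over all $n$ prefixes. This is where the additive terms arise. We will use the decoupling via Affine Spectral Independence from \cite{BJ26}. Conditioned on the SDP constraints, the $d$ coordinate martingales along each tree node are "spectrally independent", so that $\|\cdot\|_2^2$ concentrates like a sum of $d$ nearly independent squared martingales. Freedman or Bernstein for this sum has two parts:
\begin{itemize}
\item a sub-Gaussian part with variance proxy $O(d^3)$, since each of $d$ coordinates carries variance $O(d)$ and squaring contributes a factor of the scale $d$;
\item a sub-exponential part with scale $O(d)$ per coordinate.
\end{itemize}
With $\log n$ for the union bound, the deviation of $\|\cdot\|_2^2$ is $O(d^{3/2}\sqrt{\log n} + d\log n)$. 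Taking square roots and adding the mean $d^2$ gives $O(d + d^{1/2}\log^{1/2}n\cdot \ldots)$. The $\GlobalIntervalTree$ then decomposes each prefix into $O(\log n)$ canonical intervals. The cross terms between intervals, with union bounds across levels, cost an extra $\sqrt{\log n}$ factor in each term. This produces the claimed $d^{3/4}\log n$ term (from $(d^{3/2}\log^2 n)^{1/2}$) and the $d^{1/4}\log^{3/2}n$ term (from the lower-order sub-exponential part combined with the tree depth).

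Step 3 finishes the rounding. After $O(\log n/\gamma^2)$ steps, a constant fraction of elements freezes each phase. The remaining $O(d)$ or $\polylog$ many elements are rounded using the $2d$ iterated-rounding bound of B\'ar\'any--Grinberg \cite{BG81}, whose $\ell_\infty\to\ell_2$ analogue costs $O(d)$ additively.

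The main obstacle is Step 2: showing that the decoupling lemma from the $\ell_2$ setting still applies when the budget matrices are defined with $\ell_\infty$-normalized rows, so that the spectral constraint yields per-coordinate variance $O(d)$ rather than the weaker bound obtained by converting $\|v_i\|_\infty\le 1$ to $\|v_i\|_2\le\sqrt d$. I would first try to reuse the $\ell_2$ proof on the rescaled vectors $v_i/\sqrt d$. I would then check exactly where the worst-case per-coordinate mass $1/\sqrt d$ (rather than $1$) improves the tail parameter, since that improvement is what moves the threshold from $\log^7 n$ to $\log^4 n$.
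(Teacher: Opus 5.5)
There are genuine gaps here. First, you leave out the mechanism that makes prefixes tractable at all. You let every alive element move. The paper (following \cite{BG17}) instead restricts $u_t$ to the window $W_t$ of the first $10d$ alive columns and imposes $\sum_{j\in W_t}A_i(j)u_t(j)=0$ for every row $i$. This is what guarantees that a prefix $\mathcal{P}$ receives discrepancy only from the at most $10d$ columns of $W_{t^*}\cap[\mathcal{P}]$. That, together with $|A_i(j)|\le 1$ and spectral independence, is why each coordinate has variance $O(d)$ and why $Q_t=O(d^2+d\log n)$. In your scheme, coordinate $i$ of prefix $s$ collects variance from all $s$ elements, so the $O(d)$ per-coordinate budget in your Step~1 does not follow. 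Step~3 is also wrong. B\'ar\'any--Grinberg gives $2d-1$ for vectors of norm at most $1$ in the norm being measured. Here $\|v_i\|_2\le\sqrt d$, so that step costs $O(d^{3/2})$, not $O(d)$. The paper needs no such cleanup: a column dies once $|x_t(j)|\ge 1-1/(2nd)$, and rounding all dead columns costs $O(1)$.

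Second, and more importantly, Step~2 is not an argument, and its central premise fails. No available SDP makes the increment of every prefix, or of every canonical tree node in the window, nearly decoupled at $O(1)$ cost. What \Cref{fact:sdpFeasibility} (from \cite{BJ26}) gives is $E_tUE_t^\top\preceq\gamma_{\ASI}\diag(E_tUE_t^\top)$, with $\gamma_{\ASI}$ proportional to the number of guarded prefixes, and this dependence is unavoidable in general. Guarding all $\Theta(d)$ prefixes in the window therefore forces $\gamma_{\ASI}=\Theta(d)$, which gives nothing beyond Bansal--Garg. Conversely, if your ``$d$ nearly independent squared martingales'' heuristic held with $O(1)$ loss, you would get Banaszczyk's $O(d+\sqrt{d\log n})$, not the stated bound.

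The additive terms come from a specific tradeoff that you never set up:
\begin{itemize}
\item Guard only $|\mathcal{I}_t|\le\gamma_{\ASI}$ prefixes, maintained by $\GlobalIntervalTree$ with base intervals of size $s=\Theta(\gamma_{\ASI}^{-1}d\log n)$.
\item Split $dL_t=dL_t^{\ASI}+dL_t^{\err}$.
\item Bound the guard part by Freedman, using $\gamma_{\ASI}$-ASI, $|A_i(j)|\le 1$ and $\|\varphi_t\|_2\le\tau$, with regularizer $\beta_{\ASI}=\Theta(\gamma_{\ASI}\log n)$.
\item Bound the error part by Cauchy--Schwarz, using $|\Error^{\mathcal{P}}|=O(\gamma_{\ASI}^{-1}d\log^2 n)$, with $\beta_{\err}=\Theta(d\log n)$.
\item Balance at $\gamma_{\ASI}=\Theta(\sqrt d\log n)$. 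The requirements $\tau^2\gtrsim\beta_{\ASI}d$ and $\tau^2\gtrsim\beta_{\err}|\Error^{\mathcal{P}}|$ produce the $d^{3/4}\log n$ term, and $\tau^2\gtrsim\beta_{\ASI}\log n$ produces $d^{1/4}\log^{3/2}n$.
\end{itemize}
Your numerology does not reproduce these terms. For example, an extra $\sqrt{\log n}$ on $d^{3/2}\sqrt{\log n}$ gives $d^{3/4}\log^{1/2}n$ after the square root, not $d^{3/4}\log n$. Finally, rescaling to $v_i/\sqrt d$ and reusing the $\ell_2$ proof yields only $O(d+d^{3/4}\log^{7/4}n)$. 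The paper proves the $\ell_\infty$ case directly and uses neither the $\ell_\infty$ prefix bound nor $\ModifiedGlobalIntervalTree$.
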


\subsection{Our Approach in a Nutshell}  
\label{subsec:nutshell}

We view the given vectors $v_1, \cdots, v_n \in \R^d$ as a matrix $A \in \R^{d \times n}$ whose $j$th column is $v_j$. Denote by $A_i \in \R^n$ its $i$th row. 
Throughout, we use $v(\ell)$ to denote the $\ell$th entry of a vector $v$ and $\log$ means logarithm to the base $2$. 
We always use $i \in [d]$ to index rows of $A$, and $j \in [n]$ for columns, and use $\mathcal{P} \subseteq [n]$ to denote an arbitrary prefix whose discrepancy we will bound.

Similar to prior works, our algorithm starts with $x_0 = 0^n$ and evolves a fractional coloring $x_t \in [-1,1]^n$ using carefully chosen random tiny increments $\Delta x_t$. 
As in \cite{BG17}, to handle all prefixes, our algorithm maintains a {\em sliding window} $W_t \subseteq [n]$ containing the first $10 d$ {\em alive} columns $j$ for which $x_t(j)$ has not yet reached $\pm 1$, and ensures that (1) $\Delta x_t$ is supported only on $W_t$, and (2) the discrepancy update of the full sliding window is $0$, i.e., $\sum_{j \in W_t} A_i(j) \Delta x_t(j) = 0$ for every row $i \in [d]$. 
This guarantees that each prefix $\mathcal{P} \in [n]$ only incurs non-zero discrepancy while $\mathcal{P} \in W_t$.

Roughly speaking, prior algorithms \cite{BDG19,BG17,BLV22} choose the increment $\Delta x_t$ to be $O(1)$-spectrally independent, which ensures that the discrepancy $\varphi_t^{\mathcal{P}}(i)$ for each prefix $\mathcal{P} \in [n]$ and each row $i \in [d]$ is $O(d)$-subgaussian.\footnote{A mean-zero random vector $\varphi \in \mathbb{R}^m$ is called $\sigma^2$-subgaussian, if for any unit test vector $\theta \in \R^m$, 
it satisfies the Gaussian-type tail bound $\p(|\langle \theta, \varphi\rangle| \geq t) \leq 2\exp(-t^2/(2\sigma^2))$.}
This suffices for Bansal-Garg's $O(d \sqrt{\log n})$ bound for $\ell_\infty$ to $\ell_2$ prefix discrepancy --- subgaussianity ensures that each row prefix discrepancy $|\varphi_t^{\mathcal{P}}(i)| \leq O(\sqrt{d \log n})$ with probability $1- 1/\poly(n)$, which implies $\|\varphi_t^{\mathcal{P}}\|_2 \leq O(d\sqrt{ \log n})$ by taking a union bound over all rows $i \in [d]$ and prefixes $\mathcal{P} \in [n]$. A similar argument can also recover\footnote{The original proof of the $\ell_2$ prefix discrepancy bound in \cite{BG17} was done by analyzing the dynamics of $d \|\varphi_t^{\mathcal{P}}\|_2^2$, but a row-wise analysis already recovers their bound and is  technically slightly simpler.}  Bansal-Garg's bound for $\ell_2$ to $\ell_2$ prefix discrepancy, but we will focus on the $\ell_\infty$ to $\ell_2$ setting here for simplicity.

The main bottleneck for Bansal-Garg's bound is that while each $\varphi_t^{\mathcal{P}}(i)$ is $O(d)$-subgaussian, they may have arbitrarily correlations, and this is why a row-wise analysis is needed to control $\varphi_t^{\mathcal{P}}(i)$ up to a $1/\poly(n)$ tail probability (and hence loses a $O(\sqrt{\log n})$ multiplicative factor) to union bound over all rows and prefixes. 
To avoid this loss, we observe that if the $\varphi_t^{\mathcal{P}}(i)$ were mutually independent, standard concentration inequalities would 
give an {\em additive} deviation, and recover Banaszczyk's non-constructive $O(d + \sqrt{d\log n})$ bound.
Although it is clearly impossible to achieve full independence for $\varphi_t^{\mathcal{P}}(i)$, we leverage the recent technique of ``Decoupling via Affine Spectral Independence'' in \cite{BJ26} to ensure that the change in prefix discrepancy $\Delta \varphi_t^{\mathcal{P}}(i)$ is {\em affine spectrally independent} --- this weaker form of independence turns out to be sufficient for an additive $\ell_2$ discrepancy bound on $\varphi_t^{\mathcal{P}}$, albeit with a worse additive term (than the fully independent case). 

Nonetheless, the use of affine spectral independence posts a technical challenge --- the independence degrades as one controls more rows/prefixes. Naively, at each time $t$, one needs to control $d$ rows for each of the $|W_t| = 10d$ prefixes in the sliding window, but including all these $\Theta(d^2)$ row-prefix pairs would result in too weak of an independence to even improve upon the Bansal-Garg bound. 
To bypass this issue, we only enforce affine spectral independence for a carefully chosen set of $o(d)$ prefixes in $W_t$ called {\em ASI-guarded prefixes}. 
To maintain this set of ASI-guarded prefixes dynamically and to bound the discrepancies of the remaining {\em unguarded} prefixes, we design a {\em Global Interval Tree} data structure to simultaneously control their additive deviations.

\smallskip
\noindent \textbf{Roadmap.}
The rest of this paper is organized as follows. 
Some further related work will be discussed in \Cref{subsec:related_work}.  We then give an overview of our approach in \Cref{sec:overview}. A formal presentation of our algorithmic framework, data structure, and meta analysis will appear in \Cref{sec:framework}. 
Then in \Cref{sec:ell_inf-to-ell_2}, we prove our result for $\ell_\infty$ to $\ell_2$ prefix discrepancy in \Cref{thm:ell_inf-to-ell_2}.
Finally, we prove our $\ell_2$ to $\ell_2$ prefix discrepancy bound in \Cref{thm:main-ell2-to-ell2} in \Cref{sec:ell2-to-ell2}.
We will conclude and mention some open problems in \Cref{sec:conclusion_open_problems}.
Some missing details will be given in the appendix. 

\subsection{Further Related Work}
\label{subsec:related_work}

\smallskip
\noindent \textbf{Combinatorial Discrepancy.} 
Combinatorial discrepancy theory is a well-studied topic with many connections and applications to both mathematics and computer science, and we refer readers to the textbooks \cite{Cha00,Mat09,CST14}.
A classical question in combinatorial discrepancy is the following vector balancing problem: given a norm $\|\cdot\|$ and vectors $v_1, \cdots, v_n \in \mathbb{R}^d$, the goal is to find a coloring $x \in \{\pm 1\}^n$ to minimize $\disc_{\|\cdot\|}(v_1, \cdots, v_n) := \min_{x \in \{\pm 1\}^n} \big\| \sum_{i=1}^n x_i v_i \big\|$. 
This question is easier than prefix discrepancy, as one only bounds the total signed sum.
For instance, for the $\ell_2$ norm, a classical result of B\'ar\'any and Grinberg \cite{BG81} shows that $\disc_{\|\cdot \|_2}(v_1, \cdots, v_n) \leq O(\sqrt{d})$ when each $\|v_i\|_2 \leq 1$ and this bound is tight up to constants.

\smallskip
\noindent \textbf{$\ell_\infty$ Discrepancy and Algorithmic Aspects.} While our focus here is on $\ell_2$ discrepancy, we remark that $\ell_\infty$ discrepancy is also very natural, and its study has led to many interesting techniques and  developments in discrepancy theory. 
A seminal result of Spencer \cite{Spe85} (and independently, Gluskin \cite{Glu89}) shows that if each $\|v_i\|_\infty \leq 1$ and $d = n$, then $\disc_{\|\cdot\|_\infty}(v_1, \cdots, v_n) \leq O(\sqrt{n})$. This beats the $O(\sqrt{n \log n})$ bound for a random coloring, and is optimal up to constants. 

When each $\|v_i\|_2 \leq 1$, the long-standing Koml\'os conjecture, generalizing a seminal conjecture of Beck and Fiala \cite{BF81}, asserts that $\disc_{\|\cdot\|_\infty}(v_1, \cdots, v_n) \leq O(1)$. 
For the Koml\'os problem, Spencer \cite{Spe85} gave an $O(\log n)$ bound. This was improved to $O(\sqrt{\log n})$ by Banaszczyk \cite{Ban98}. His results for prefix discrepancy and Steinitz problems in \cite{Ban12} are crucially based on this earlier work.

While the above approaches for $\ell_\infty$ discrepancy have become prominent in discrepancy theory, these methods were originally non-constructive and did not give an efficient algorithm for finding a good coloring. Following a breakthrough of Bansal \cite{Ban10}, many elegant algorithms have been developed that matches these non-constructive bounds \cite{BS13,HSS14,LM15,DGLN16,Rot17,  LRR17,ES18,BDGL18,BDG19,ALS21,BLV22,JSS23, PV23,HSSZ24,BJ25a}, leading to surprising applications in many different areas (e.g., see the excellent survey \cite{Ban22}). Interestingly, they also provide new insights into discrepancy theory and lately, building on top of these algorithmic developments, Bansal and Jiang \cite{BJ25b,BJ26} gave an improved $O((\log^{1/4} n) \cdot \poly(\log\log n))$ bound for the Koml\'os problem and almost resolved the Beck-Fiala conjecture. 

\smallskip
\noindent \textbf{$\ell_\infty$ Prefix Discrepancy.} The $\ell_\infty$ prefix discrepancy problem, where the objective is to bound $\max_{t \in [n]} \|\sum_{i=1}^t x_i v_i\|_\infty$, is also well-studied. 
When each $\|v_i\|_\infty \leq 1$, it was conjectured that the correct bound should be $\Theta(\sqrt{d})$. 
Banaszczyk \cite{Ban12} gave a non-constructive bound of $O(\sqrt{d \log n})$, and this bound was later matched algorithmically in \cite{BG17}. 

The case where each $\|v_i\|_2 \leq 1$ (i.e., the prefix version of Koml\'os problem) is more elusive. The best non-constructive bound is $O(\sqrt{\log n})$ \cite{Ban12}, while the best algorithmic bound is only $O(\log n)$ \cite{ALS21}. It was asked in \cite{BJM+21} whether the Koml\'os conjecture generalizes to this prefix setting, i.e., whether the best bound is $O(1)$, but there is no clear consensus on this question.

\section{Overview of Our Approach}
\label{sec:overview}

In this section, we give an overview of our approach. Our focus here will be on the $\ell_\infty$ to $\ell_2$ prefix discrepancy bound in \Cref{thm:ell_inf-to-ell_2}, since it is technically simpler (than $\ell_2$ to $\ell_2$ prefix discrepancy) but already contains most of the ideas. 
We start with the algorithmic framework that we will use throughout the paper.  
Here we adopt the same notation as in \Cref{subsec:nutshell}.

\subsection{Algorithmic Framework: SDP-Guided Walk with A Sliding Window}
\label{subsec:basic_framework}
As in many previous discrepancy  algorithms \cite{BG17,BDG19,BLV22,Ban24,BJ25b,BJ26}, our algorithms start with $x_0 = 0^n$, and evolve a {\em fractional coloring} $x_t \in [-1,1]^n$ over time using tiny random increments, until some final coloring in $\{\pm 1\}^n$ is reached. The time $t$ will range from $0$ to $n$, and is updated in discrete increments of size $dt$. We will set $dt = 1/\poly(n)$ so that the algorithm runs in polynomial time, but it is useful to view $dt$ as infinitesimally small. 

\smallskip
\noindent \textbf{Sliding Window and Coloring Update.} At each time $t$, a column $j \in [n]$ is {\em alive} if $|x_t(j)| \leq 1- 1/(2nd)$, and is called {\em dead} otherwise. Notice that rounding a dead column to either $1$ or $-1$ incurs a negligible $O(1/(nd))$ discrepancy per column, and thus will be ignored henceforth. 
Let $W_t \subseteq [n]$ be the first $10d$ alive columns, which we will refer to as the {\em sliding window}.
The columns in $W_t$ will be called {\em active}, and alive columns not contained in $W_t$ are called {\em dormant}. The algorithm will only update the coloring $x_t(j)$ for $j \in W_t$, and ensures that the discrepancy update of the whole sliding window is $0$. 

Specifically, at time $t$, the algorithm chooses a random vector $u_t \in \R^{W_t}$ satisfying $\E[u_t]=0$, $\|u_t\|_2 = 1$, $x_t \perp u_t$, and $\sum_{j \in W_t} A_i(j) u_t(j) = 0$ for all rows $i \in [d]$ (and various other properties that will be specified later).\footnote{We sometimes abuse notation and also view $u_t$ as a vector in $\R^n$ with $u_t(j) = 0$ for all $j \notin W_t$.}
The coloring $x_t$ is updated to $x_{t+dt}= x_t + dx_t$ with
\[d x_t = u_t \sqrt{dt}.\]

Note that $\|u_t\|_2=1$ and $x_t \perp u_t$ ensures that $\|x_t\|_2^2=t$ for all $t$, so the process ends by $t=n$. 
Also notice that $\sum_{j \in W_t} A_i(j) u_t(j) = 0$ for all $i \in [d]$ ensures that the sliding window incurs $0$ discrepancy change, and thus any prefix $\mathcal{P} \in [n]$ only incurs non-zero discrepancy when $\mathcal{P} \in W_t$. 

\medskip

\begin{figure}[h]
    \centering
    \includegraphics[width=0.7\textwidth]{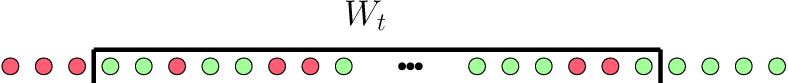}
    \caption{Sliding window. Green dots denote alive columns, and red dots denote dead ones.}
    \label{fig:TimeWindow}
\end{figure}

\medskip
\noindent{\bf Choosing $u_t$ via an SDP.}
The power of this framework comes from the flexibility to choose $u_t$ adaptively at each time $t$. 
To do this, the algorithm computes a PSD matrix $U_t$ by solving a semidefinite program (SDP), and samples $u_t$ with $U_t$ as the covariance matrix. 
As this approach of sampling $u_t$ via solving an SDP is already standard, to simplify the discussion here, we postpone the details of the SDP and how to sample $u_t$ from $U_t$ to \Cref{sec:framework}.

\subsection{Spectral Independence and the Bansal-Garg Bound}
\label{subsec:SI-BGbound-Overview}

A useful property ensured by the SDP in prior algorithms is that $u_t$ is $O(1)$-{\em spectrally independent}: 
\begin{align} \label{eq:SI-overview}
\Cov(u_t) \preceq O(1) \cdot \diag(\Cov(u_t)). \tag{$\mathsf{SI}$} 
\end{align}
Since $\Cov(u_t) = \E[u_t u_t^\top]$ for mean-zero $u_t$, this is equivalent to saying that for any vector $a \in \R^{W_t}$, 
\begin{align} \label{eq:pairwise-ind-overview}
 \E\left[\langle a, u_t \rangle ^2 \right] = \E \Big[ \big(\sum_{j \in W_t} a(j) u_t(j) \big)^2\Big] \leq O(1) \cdot \sum_{j \in W_t} a(j)^2 u_t(j)^2 ,
\end{align}
which intuitively means that the coordinates of $u_t$ is almost pairwise independent (up to a constant factor loss). 
Surprisingly, this seemingly weak form of independence implies strong subgaussian-type concentration for discrepancy in the algorithmic framework above \cite{BDG19,BG17}. In particular, if $u_t$ is $O(1)$-spectrally independent for all time $t$, then for any test vector $a \in \R^n$, its discrepancy change over any time period $[t_1, t_2]$ is essentially $O(\|a\|_2)$-subgaussian, i.e., 
\begin{align} \label{eq:spec-ind_subg}
\p(|\langle a, x_{t_2} - x_{t_1} \rangle| \geq c \cdot \|a\|_2) \leq \exp(- \Omega(c^2)) .
\end{align}

\smallskip
\noindent \textbf{The Bansal-Garg Bound.} The tail bound in \eqref{eq:spec-ind_subg} turns out to be sufficient to recover Bansal and Garg's results in \cite{BG17}. Here, we show their $O(d \sqrt{\log n})$ bound for $\ell_\infty$ to $\ell_2$ prefix discrepancy (their $O(\sqrt{d \log n})$ bound for $\ell_2$ to $\ell_2$ prefix discrepancy follows from a similar argument). 
Recall that Banaszczyk's non-constructive bound is $O(d + \sqrt{d \log n})$ in the $\ell_\infty$ to $\ell_2$ setting.

Fix a prefix $\mathcal{P} \in [n]$. Denote $A_i^{\mathcal{P}} \in [-1,1]^\mathcal{P}$ the row $A_i$ restricted to the prefix $\mathcal{P}$, and denote $\varphi_t^\mathcal{P}(i) = \langle A_i^{\mathcal{P}}, x_t \rangle$ the discrepancy of row $i$ for prefix $\mathcal{P}$. We will show that with probability $1 - 1/\poly(n)$, $|\varphi^{\mathcal{P}}_n(i)| \leq O(\sqrt{d \log n})$ for all row $i \in [d]$, which implies that $\|\varphi^{\mathcal{P}}_n\|_2 \leq O(d \sqrt{\log n})$. 
Taking a union bound over all prefixes $\mathcal{P} \in [n]$ then gives the Bansal-Garg result. 

To bound $\varphi_n^{\mathcal{P}}(i)$, let $t^*$ be the first time when column $\mathcal{P}$ enters the sliding window, and let $a_i^{\mathcal{P}} \in \R^{W_{t^*}}$ be the vector $A_i^{\mathcal{P}}$ restricted to that particular sliding window $W_{t^*}$. Clearly,  we have $\|a_i^{\mathcal{P}}\|_2 \leq \sqrt{10 d}$ as the sliding window only has $10d$ alive columns.   
Since prefix $\mathcal{P}$ does not incur any discrepancy before it enters the sliding window, we have $\varphi_{t^*}^{\mathcal{P}}(i) = 0$. 
Note that only the coloring update of columns in $W_{t^*}$ can contribute to $\varphi_t^{\mathcal{P}}$ after time $t^*$, and thus $\varphi_n^{\mathcal{P}}(i) = \langle a_i^{\mathcal{P}}, x_n - x_{t^*} \rangle$. Then using \eqref{eq:spec-ind_subg} with $c = O(\sqrt{\log n})$ for large enough constants and $\|a_i^{\mathcal{P}}\|_2 \leq O(\sqrt{d})$ gives the desired bound
\[
\p(|\varphi_n^{\mathcal{P}}(i)| \geq O(\sqrt{d\log n})) \leq \exp(- \Omega(c^2)) = 1/\poly(n) .
\]

\smallskip
\noindent \textbf{Where Can We Improve?} 
The main reason for the  $O(\sqrt{\log n})$ factor loss in the Bansal-Garg bound is the following. Although each $\varphi_n^{\mathcal{P}}(i)$ is $O(d)$-subgaussian, the different $\varphi_n^{\mathcal{P}}(i)$'s may have arbitrary correlations, so they need to bound each $\varphi_n^{\mathcal{P}}(i)$ up to a tail probability of $1/\poly(n)$ to take a union bound over all rows $i \in [d]$ and prefixes $\mathcal{P} \in [n]$. 
This {\em row-wise analysis} is generally the best possible without any additional control over the  correlations among the $\varphi_n^{\mathcal{P}}(i)$'s.

However, in an ideal scenario where the discrepancies of all rows $\varphi_n^{\mathcal{P}}(i)$ are evolving independently,  standard concentration inequalities (e.g., see \cite[Theorem 3.1.1]{Ver18book}) give the following stronger tail bound for $\|\varphi_n^{\mathcal{P}}\|_2$ (than the row-wise analysis above) with an {\em additive} deviation:
\begin{align*}
\p\big(\|\varphi_n^{\mathcal{P}}\|_2 \geq O(\sqrt{d}(\sqrt{d} + c))\big) \leq \exp(- \Omega(c^2)) .
\end{align*}
Taking $c = O(\sqrt{\log n})$ here allows for a union bound over all prefixes $\mathcal{P} \in [n]$, and this would already recover Banaszczyk's non-constructive $O(d + \sqrt{d \log n})$ result. 

Of course, as the discrepancies $\varphi_n^{\mathcal{P}}(i)$ of different rows $i \in [d]$ are generated by the same dynamics $d x_t$, it is not realistic to assume that they are independent. 
Nonetheless, this dependency among row discrepancies is exactly the issue encountered by Bansal and Jiang \cite{BJ25b,BJ26} when studying the Beck-Fiala and Koml\'os conjectures. To bypass the issue and achieve substantial progress towards these long-standing conjectures, they devised a new technique to ``decouple'' the discrepancies of different rows using a new set of SDP constraints for $u_t$ called {\em affine spectral independence}. 

\subsection{Affine Spectral Independence and ASI-Guarded Prefixes}
\label{subsec:ASI-overview}

The idea in \cite{BJ26} is  to add a new set of affine spectral independence (ASI) constraints to the SDP to ensure that $d \varphi_t^{\mathcal{P}} = \varphi_{t + dt}^{\mathcal{P}} - \varphi_t^{\mathcal{P}}$ is $\gamma_{\mathsf{ASI}}$-spectrally independent (see \Cref{sec:framework} for details), i.e., 
\begin{align} \label{eq:ASI-overview}
\Cov(d \varphi_t^{\mathcal{P}}) \preceq \gamma_{\mathsf{ASI}} \cdot \diag(\Cov(d \varphi_t^{\mathcal{P}})).  \tag{$\mathsf{ASI}$} 
\end{align}
Similar to \eqref{eq:pairwise-ind-overview}, this intuitively says that the discrepancy change $d \varphi_t^{\mathcal{P}}(i)$ of different rows $i \in [d]$ are almost pairwise independent up to a $\gamma_{\mathsf{ASI}}$ factor, and we show that this weaker form of independence suffices for an additive bound. 
In particular, we analyze the dynamics of $d \|\varphi_t^{\mathcal{P}}\|_2^2$ as in \cite{BG17}, 
\[
d \|\varphi_t^{\mathcal{P}}\|_2^2 = 2 \underbrace{\langle \varphi_t^{\mathcal{P}}, d \varphi_t^{\mathcal{P}} \rangle}_{dL^{\mathcal{P}}_{t}} + \underbrace{\langle d \varphi_t^{\mathcal{P}}, d \varphi_t^{\mathcal{P}} \rangle}_{dQ^{\mathcal{P}}_{t}} .
\]
\cite{BG17} already gave a bound for the quadratic term $Q_t^{\mathcal{P}}$ matching Banaszczyk's bound, using \eqref{eq:SI-overview} and a Freedman-type martingale analysis (see \Cref{subsecn:proofFramework}), and the main bottleneck there lies in controlling the linear term $L_t^{\mathcal{P}}$. 
It turns out that the \eqref{eq:ASI-overview} property reduces the {\em quadratic variation} of $L_t^{\mathcal{P}}$, which allows us to show that with probability $1 - 1/\poly(n)$, 
\begin{align} \label{eq:ell_infty-to-ell_2-overview}
\|\varphi_t^{\mathcal{P}}\|_2 \leq O(d +  \sqrt{\gamma_{\mathsf{ASI}} \cdot d \log n}). 
\end{align}
We defer the technical details to \Cref{sec:ell_inf-to-ell_2}, but this essentially recovers Banaszczyk's non-constructive bound if the ASI factor $\gamma_{\mathsf{ASI}}$ could be set to $O(1)$ as in the spectral independence property \eqref{eq:SI-overview}.

\smallskip
\noindent \textbf{Controlling All Prefixes Blows Up the ASI Factor.} Unfortunately, while we can indeed set $\gamma_{\mathsf{ASI}} = O(1)$ in \eqref{eq:ASI-overview} for a {\em fixed} prefix $\mathcal{P}$, a crucial issue arises when controlling multiple prefixes --- the $\gamma_{\mathsf{ASI}}$ factor suffers from the number of prefixes one needs to control simultaneously \cite{BJ26}. As the sliding window $W_t$ has size $10d$, naively satisfying \eqref{eq:ASI-overview} for all prefixes $\mathcal{P} \in W_t$ would result in setting $\gamma_{\mathsf{ASI}} = \Theta(d)$, for which the bound in \eqref{eq:ell_infty-to-ell_2-overview} is no better than the Bansal-Garg bound. 

To bypass this issue, we control a carefully chosen set of  $o(d)$ prefixes $\mathcal{I}_t \subseteq W_t$ that we call {\em ASI-guarded  prefixes} --- this allows us to choose $\gamma_{\mathsf{ASI}} =|\mathcal{I}_t| = o(d)$ (and we will think of $\gamma_{\ASI} = |\mathcal{I}_t|$ henceforth), for which the bound in \eqref{eq:ell_infty-to-ell_2-overview} improves upon Bansal-Garg. 
However, this forces us to lose the \eqref{eq:ASI-overview} control over {\em most} prefixes (i.e., those that are not ASI-guarded, which we will refer to as {\em unguarded prefixes}), and this poses two further technical challenges: 
\begin{enumerate}
    \item [(i)] As the sliding window $W_t$ evolves over time, we need to update the  set of $o(d)$ many ASI-guarded prefixes dynamically. 
    \item [(ii)]  We need to control the discrepancies of all the $10d - o(d)$ unguarded prefixes. In particular, some of the previous ASI-guarded prefixes might become unguarded later on, and we can no longer rely on \eqref{eq:ell_infty-to-ell_2-overview} to bound their discrepancies. 
\end{enumerate}

To get around these challenges, we design a data structure for maintaining the set of $o(d)$ ASI-guarded prefixes that we call {\em Global Interval Tree} (and denoted as $\GlobalIntervalTree$). 

\subsection{Global Interval Tree: Controlling Unguarded Prefixes} 
\label{subsec:IntTree-overview}

In this subsection, we give an overview of the $\GlobalIntervalTree$ data structure and discuss how it overcomes the above challenges (see Sections \ref{subsec:globalIntervalTree} and \ref{subsecn:proofFramework} for more details).

\smallskip
\noindent \textbf{The Interval Representation.} Denote $\mathcal{I}_t \subseteq W_t$ the set of $\gamma_{\ASI} = o(d)$ ASI-guarded prefixes that $\GlobalIntervalTree$ maintains at each time $t$. 
Note that $\mathcal{I}_t$ naturally corresponds to a collection of consecutive intervals in $[n]$, where each pair of adjacent ASI-guarded prefixes $\mathcal{P}_1, \mathcal{P}_2 \in \mathcal{I}_t$ induces the interval $(\mathcal{P}_1, \mathcal{P}_2] := \{\mathcal{P}_1 + 1, \mathcal{P}_1 + 2,  \cdots, \mathcal{P}_2\}$. 
This collection of consecutive intervals essentially forms a partition (except for a final interval between $\max\{\mathcal{I}_t\}$ and $\max\{W_t\}$) of the columns in $[W_t]$, where we use $[W_t] := [\min\{W_t\}, \max\{W_t\}]$ to denote the set of all columns, alive or dead, that lies inside the range of the current sliding window $W_t$. We refer to $[W_t]$ as the {\em complete sliding window}, and note that $W_t$ only contains all the alive columns in $[W_t]$. 
\begin{figure}[h]
    \centering
    \includegraphics[width=0.6\textwidth]{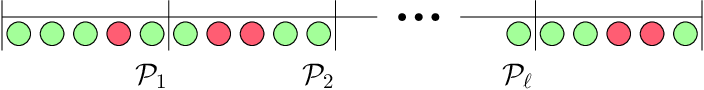}
    \caption{Partition of the complete sliding window $[W_t]$ by ASI-guarded prefixes $\mathcal{I}_t = \{\mathcal{P}_1, \cdots, \mathcal{P}_{\ell} \}$. Throughout, green dots denote alive columns, and red dots denote dead ones. \label{fig:TimeWindowPartition}}

\end{figure}

From this interval perspective, an equivalent task is to maintain a set  of $o(d)$ intervals that partition $[W_t]$, which we abuse notation and also denote as $\mathcal{I}_t$. 

As the sliding window $W_t$ progresses over time, the set of intervals $\mathcal{I}_t$ needs to be updated dynamically. For our purposes, we will perform two operations: 

(1) add new intervals into $\mathcal{I}_t$ when new columns enter $[W_t]$, and (2) merge consecutive intervals in $\mathcal{I}_t$ when their sizes, i.e., the number of alive columns they currently contain, become too small (and thus too expensive to maintain separately).\footnote{Note that merging two consecutive intervals corresponds to unguarding the ASI-guarded prefix between them.} 

\smallskip
\noindent \textbf{Adding and Merging Intervals.} To describe these operations formally, we partition $[n]$ into $n/s$ {\em base intervals} of size $s = \omega(1)$ each at time $0$, and use $\mathcal{B}$ to denote this collection of base intervals. We call a base interval $I \in \mathcal{B}$ {\em active} at time $t$ if $I \subseteq [1, \max\{W_t\}]$ and $I \cap W_t \neq \emptyset$, i.e., when $I$ contains at least one active column and no column in $I$ is dormant. 
At time $0$, the initial sliding window $W_0$ (as well as $\mathcal{I}_0$) contains the first $10d/s = o(d)$ base intervals. 
We add a base interval $I \in \mathcal{B}$ to $\mathcal{I}_t$ whenever it becomes active, and we will {\em only add base intervals to $\mathcal{I}_t$ throughout}.

An interval $I \in \mathcal{I}_t$ is said to be {\em small} if its size is at most $s/2$. 
Ideally, we would like to merge two consecutive intervals when both of them are small to ensure that there are no small intervals, but we will need to do this carefully --- as we will see in the following discussion, merging an interval too many times may blow up the discrepancy of the prefixes it contains.

\smallskip
\noindent \textbf{Two Conflicting Goals.} Before describing our merging strategy, we first address the challenges mentioned towards the end of \Cref{subsec:ASI-overview}. 
For challenge (i), the total number of intervals in $\mathcal{I}_t$ must be $o(d)$, which is equivalent to the average size of intervals in $\mathcal{I}_t$ being $\omega(1)$: 
\begin{align*}
   \text{ Goal (i): the average interval size in $\mathcal{I}_t$ should be $\omega(1)$. }
\end{align*}
To control the discrepancy of an unguarded prefix $\mathcal{P} \in [W_t]$ as posted in challenge (ii), let $I \in \mathcal{I}_t$ be the interval containing $\mathcal{P}$. The left endpoint of $I$ is an ASI-guarded prefix which we will call the {\em ASI-guard} for $\mathcal{P}$ at time $t$.
From this perspective, the discrepancy change for prefix $\mathcal{P}$ can be decomposed as $d \varphi_t^{\mathcal{P}} = d \varphi_t^{\mathsf{ASI}} + d \varphi_t^{\mathsf{err}}$, where $d \varphi_t^{\mathsf{ASI}}$ is the discrepancy change of its ASI-guard, and $d \varphi_t^{\mathsf{err}}$ is the discrepancy change of the (alive) columns between the ASI-guard and $\mathcal{P}$, which we will refer to as the {\em error columns} or the {\em error set} (see \Cref{fig:disc_decomposition} for an illustration). 

\medskip

\begin{figure}[h]
    \centering
    \includegraphics[width=0.6\textwidth]{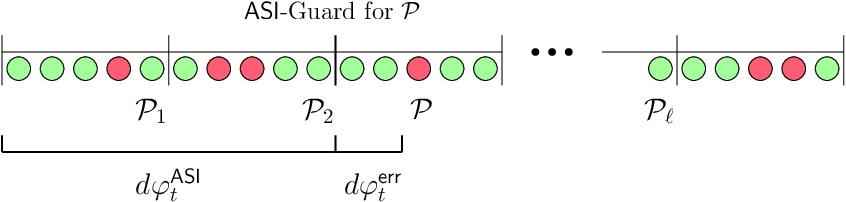}
    \caption{\label{fig:disc_decomposition} ASI-guard for prefix $\mathcal{P}$, and the corresponding ASI and error components of $d \varphi_t^{\mathcal{P}}$.}
\end{figure}

\smallskip

Roughly speaking, we will show (see \Cref{subsecn:proofFramework} for details) that 
the contribution from the ASI-guard part $d \varphi_t^{\mathsf{ASI}}$ can still be bounded as in \eqref{eq:ell_infty-to-ell_2-overview}, even though the ASI-guard for $\mathcal{P}$ might change over time --- thus as long as goal (i) is achieved, one has $\gamma_{\ASI} = |\mathcal{I}_t| = o(d)$ and this suffices for bounding the ASI-guard part $\varphi_t^{\ASI}$ better than the Bansal-Garg bound.

However, the discrepancy contribution of the error part $d \varphi_t^{\mathsf{err}}$ depends on the total number of columns that have ever been part of the error set for prefix $\mathcal{P}$ at any time --- it can be bounded by $O(\sqrt{(\text{total \# of error columns}) \cdot \log n})$ using the Freedman-type martingale analysis in \cite{BG17}. 
Note that the total number of error columns for prefix $\mathcal{P}$ depends on how many times the interval $I$ containing $\mathcal{P}$ merges with the interval on its left (see Figure \ref{fig:growthOfErrorSet}), and can be upper bounded by
\[
(\text{total \# of times interval containing $\mathcal{P}$ merges}) \cdot (s/2) ,
\]
where recall that $s/2$ is the threshold size for a small interval. This leads to the following goal:  

\begin{align*}
    \text{Goal (ii): for every prefix $\mathcal{P}$, the number of times the interval containing it merges is small. }
\end{align*}
Note that there is an obvious tension between the two goals above --- avoiding small-sized intervals as in goal (i) naturally leads to substantial number of merges --- but we will be able to achieve both goals using the strategy of merging via a global binary tree. 
\begin{figure}[h]
    \centering
    \includegraphics[width=0.45\textwidth]{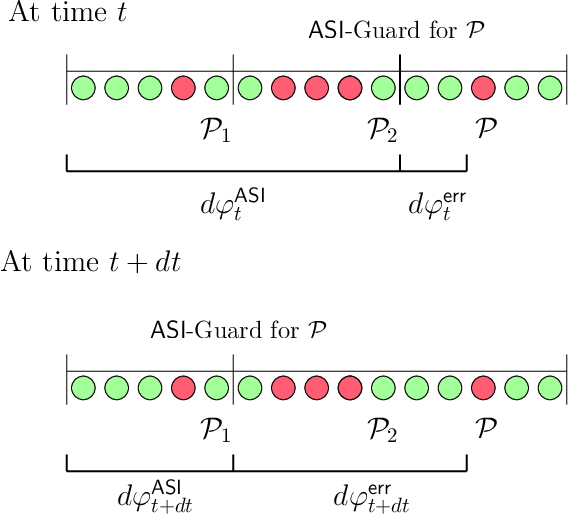}
    \caption{\label{fig:growthOfErrorSet} The error set grows after the interval containing $\mathcal{P}$ merges with the previous interval.}
\end{figure}

\smallskip
\noindent \textbf{Merging via a Global Binary Tree.} 
At time $0$, we view the $n/s$ base intervals as the leaves of a complete global binary tree $\mathsf{Tree}_0$, whose depth is $\log(n/s) =  O(\log n)$ (here we pretend that $n/s$ is a power of $2$ for simplicity). 
At each time $t$, we maintain $\mathsf{Tree}_t$ to be a subtree of $\mathsf{Tree}_0$, such that the leaf nodes of $\mathsf{Tree}_t$ (restricted to $[W_t]$) correspond to all the intervals maintained in $\mathcal{I}_t$.

Roughly speaking, when two base intervals corresponding to sibling leaves in $\mathsf{Tree}_0$ are both small, we will merge them into a single interval represented by their parent (and delete the two sibling leaves from $\mathsf{Tree}_0$). 
More generally, our rule for merging intervals is the following:\footnote{For technical reasons, the actual merging rule that we use is slightly more sophisticated (see \Cref{mergerule:global_binary_tree}).}
\begin{quote}
{\bf Merging Rule:} Whenever two consecutive intervals $I_1, I_2 \in \mathcal{I}_t$ are both small at time $t$ and they correspond to the two sibling children of a parent node $I_0 \in \mathsf{Tree}_t$, we will merge $I_1$ and $I_2$ into an interval for $I_0$, and delete the two sibling children from $\mathsf{Tree}_t$. 
\end{quote}

We will show that this merging rule ensures that each node only merges at most $O(\log n)$ times as the depth of $\mathsf{Tree}_0$ is $O(\log n)$, while at the same time, the average interval size is at least $\Omega(s/\log n)$. 
Setting $s =\omega(1)$ properly then achieves our two goals above. 
See \Cref{subsec:globalIntervalTree} for details.

\section{Preliminaries and Our Framework}
\label{sec:framework}

In this section, we formally describe our algorithmic framework and the meta-analysis that we use to achieve our results.
We start with some notation. 

\smallskip
\noindent \textbf{Notation.} 
As in \Cref{subsec:nutshell}, we view the vectors $v_1, \cdots, v_n \in \R^d$ as a matrix $A \in \R^{d \times n}$ whose $j$th column is $v_j$, and use $A_i \in \R^n$ to denote its $i$th row. 
For any $W \subseteq [n]$, denote $A_i(W) \in \R^W$ the row $A_i$ restricted to columns in $W$; for a vector $u \in \R^W$, we sometimes abuse notation and view it as a vector in $\R^n$ with all-zero coordinates in $[n] \setminus W$. 
For any prefix $\mathcal{P} \in [n]$, define $A^{\mathcal{P}} \in \R^{d \times \mathcal{P}}$ (resp. $A_i^{\mathcal{P}} \in \R^{\mathcal{P}}$) as the matrix $A$ (resp. row $A_i$) restricted to the columns in prefix $\mathcal{P}$.  

We use $\varphi$ for discrepancy, and $\varphi^{\mathcal{P}}_t$ to denote the discrepancy vector for prefix $\mathcal{P}$ at time $t$.  
We always use $i \in [d]$ to index rows of $A$, and $j \in [n]$ to index its columns.

Throughout,  $v(\ell)$ is the $\ell$th entry of vector $v$ and $\log$ means logarithm to the base $2$. 

We write $\lesssim$ (resp. $\gtrsim$) when the left-hand side is at most (resp. at least) a constant times the right-hand side.

We define a notion of intervals for subsets of $[n]$. 
An interval in $[n]$ is a consecutive set of elements in $[n]$. For $\ell,r \in [n]$ with $\ell \leq r$, we denote $[\ell, r] := \{\ell, \cdots, r \}$, $(\ell, r] := \{\ell+ 1, \cdots, r\}$, $[\ell, r) = \{\ell, \cdots, r-1\}$ the closed, left-open, and right-open intervals with boundaries $\ell$ and $r$. 
For a set $W \subseteq [n]$, we denote $[W] := [\min_{\ell \in W} \{\ell\}, \max_{\ell \in W} \{\ell\}]$ the smallest interval containing $W$.

\subsection{Algorithmic Framework: the SDP and Coloring Update}\label{subsecn:updateFramework}

As described in \Cref{subsec:basic_framework}, our algorithms find the desired coloring by evolving a fractional coloring $x_t \in [-1,1]^n$, with $x_0 = 0^n$, over time $t$ that increments in tiny steps of size $d t= 1/\poly(n)$. 
The fractional coloring $x_t$ is updated as $d x_t =  u_t \sqrt{dt}$, where $u_t \in \R^{W_t}$ is a carefully-chosen random unit vector supported on the sliding window $W_t \subseteq [n]$ consisting of the first $10d$ alive columns,\footnote{\label{footnote:Wt-final}If there are less than $10d$ alive columns left, the sliding window $W_t$ will contain all the remaining alive columns. This will happen after $W_t$ hits the last column of the input matrix $A$, and in this case, we will not enforce the constraint $\sum_{j \in W_t} A_i(j) u_t(j) = 0$ for any row $i \in [d]$. None of our analysis will be affected by this change of the size $|W_t|$ at the end of the algorithm, and we will ignore this nuance throughout for simplicity.} 

so that $|W_t| = 10d$. In particular, $u_t$ will be chosen so that $u_t \perp x_t$ and that the discrepancy change of the whole sliding window $W_t$ is $0$, i.e., $\sum_{j \in W_t} A_i(j) u_t(j) = 0$ for all rows $i \in [d]$.

As mentioned earlier, to choose $u_t$ at each time $t$, our algorithms first computes a PSD matrix $U_t \in \R^{W_t \times W_t}$ by solving an SDP, and then samples $u_t$ with $U_t$ as its covariance matrix (up to a constant scaling factor so that $\|u_t\|_2 = 1$). 
We first describe this SDP below and then explain the intuition behind its constraints. 
In prior works, this SDP has a few more parameters, but here we hard-code most of them to be constants to simplify our presentation. 

\subsubsection{The SDP and Its Feasibility} 
\label{subsubsec:SDP_feasibility}
Let the factor $\gamma_{\ASI} = o(d)$ in \eqref{eq:ASI-overview} be a parameter, to be specified later (depending on the problem setting) but fixed throughout the algorithm. 
As mentioned in \Cref{subsec:ASI-overview}, our algorithm maintains a set of at most $\gamma_{\ASI}$ many ASI-guarded prefixes $\mathcal{I}_t$ using the $\GlobalIntervalTree$ data structure.\footnote{The upper bound of $|\mathcal{I}_t| \leq \gamma_{\ASI}$ is to ensure that the SDP remains feasible at every time step $t \geq 0$. } We discussed it briefly in \Cref{subsec:IntTree-overview} and will give more details in \Cref{subsec:globalIntervalTree}.

Based on the $\GlobalIntervalTree$ data structure, at each time $t$, the algorithm chooses a subspace $H_t \subseteq \R^{W_t}$ with $\dim(H_t) \leq 0.1 |W_t| \leq d$, and a matrix $E_t \in \R^{(0.1\gamma_{\ASI} |W_t|) \times W_t}$. The rows of $E_t$ exactly corresponds to the $d$ rows of $A$ restricted to the at most $ 0.1 \gamma_{\ASI} |W_t|/d \leq \gamma_{\ASI}$ many ASI-guarded prefixes $\mathcal{I}_t$.

Consider the following SDP with matrix variable $U \in \R^{W_t \times W_t}$ and parameter $\gamma_{\ASI}$.

\begin{subequations}\label{eq:FrameworkSDP}
\begin{align}
U_{j,j} &\leq 1 
\quad &&\text{for all } j \in W_{t}, \label{eq:sdp:diag} \\
\operatorname{Tr}(U) &\geq 0.1 \cdot |W_t|, \label{eq:sdp:trace} \\
\langle U , A_i(W_t) (A_i(W_t))^\top\rangle &= 0 
\quad &&\text{for all } 1 \leq i \leq d, \label{eq:sdp:block-rows} \\
\langle U, ww^\top \rangle &= 0 
\quad &&\text{for all } w \in H_t, \label{eq:sdp:block-W} \\
U &\preceq 10 \cdot \operatorname{diag}(U),  \label{eq:sdp:SI} \tag{4e SI} \\
E_{t} U E_t^{\top} &\preceq \gamma_{\ASI} \cdot \operatorname{diag} ( E_t U E_t^{\top}) , \tag{4f ASI} \label{eq:sdp:ASI}  \\
U &\succeq 0 . \tag{4g} \label{eq:sdp:psd}
\end{align}
\end{subequations}

To understand SDP \eqref{eq:FrameworkSDP}, consider a feasible solution $U_t$ and a mean-zero random coloring update $u_t \in \R^{W_t}$ with $U_t \propto \Cov(u_t)$ (up to a scaling factor).
We will show how to sample $u_t$ given the PSD matrix $U_t$ in \Cref{subsubsec:sampling_vt}. Here, we first explain what the SDP constraints imply for $u_t$. 

\smallskip
\noindent \textbf{Spreadness, Sliding Window, and Blocking Constraints.} 
The first four constraints are standard and easy to parse. 
Constraints \eqref{eq:sdp:diag} and \eqref{eq:sdp:trace} ensures that the coloring update $u_t$ ``uniformly spreads out'' among its coordinates, and avoids the trivial solution $U = 0$. 
Constraint \eqref{eq:sdp:block-rows} is equivalent to $\langle A_i(W_t), u_t \rangle = \sum_{j\in W_t} A_i(j) u_t(j) = 0$, and this ensures that the whole sliding window has $0$ discrepancy update,\footnote{As mentioned in \Cref{footnote:Wt-final}, this sliding window constraint will be dropped when $W_t$ hits the last input vector.\label{footnote:drop_sliding_window_constrarint}} as mentioned earlier. 
Constraint \eqref{eq:sdp:block-W} enforces that $u_t \perp H_t$, and this can be understood as {\em blocking} any vector $w \in H_t$ so that its discrepancy does not change, i.e., $\langle w, u_t \rangle = 0$. In particular, our algorithm always chooses $x_t \in H_t$ to ensure that $u_t \perp x_t$, as promised earlier. 

\smallskip
\noindent \textbf{Spectral Independence and Affine Spectral Independence.} 
Corresponding to \eqref{eq:SI-overview}, the spectral independence constraint \eqref{eq:sdp:SI} guarantees that the random coloring update vector $u_t$ is $10$-spectrally independent. As explained in \Cref{subsec:SI-BGbound-Overview}, this intuitively says that the coordinates of $u_t$ is almost pairwise independent up to a constant factor. 

Analogously, corresponding to \eqref{eq:ASI-overview}, the affine spectral independence constraint \eqref{eq:sdp:ASI} 
guarantees that the discrepancy update $d \varphi_t^{\mathcal{P}}$ is $\gamma_{\ASI}$-spectrally independent for all ASI-guarded prefixes $\mathcal{P} \in \mathcal{I}_t$. Intuitively, this says that the coordinates of $d \varphi_t^{\mathcal{P}}$ are almost pairwise independent up to a $\gamma_{\ASI}$ factor. 
Note that the ASI factor $\gamma_{\ASI}$ is proportional to the number of ASI-guarded prefixes in $\mathcal{I}_t$, and this dependency is shown to be unavoidable in general \cite{BJ26}.

Using standard duality arguments, the feasibility of the SDP \eqref{eq:FrameworkSDP} was given in \cite{BJ26}. 

\begin{fact}[SDP feasibility, \cite{BJ26}] \label{fact:sdpFeasibility}
For any subspace $H_t \subseteq \R^{W_t}$ with $\dim(H_t) \leq 0.1|W_t|$ and matrix $E_t \in \R^{(0.1 \gamma_{\ASI} |W_t|) \times W_t}$, the SDP \eqref{eq:FrameworkSDP} given by constraints \eqref{eq:sdp:diag}-\eqref{eq:sdp:psd} is feasible. 
\end{fact}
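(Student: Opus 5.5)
The plan is to prove feasibility by conic (SDP) duality, in the style of the spectral-independence feasibility arguments of \cite{BDG19,BG17,BLV22}, adapted to the extra ASI block as in \cite{BJ26}. Let $m = |W_t|$. First I would homogenize: replace \eqref{eq:sdp:diag} and \eqref{eq:sdp:trace} by the requirement that $U \succeq 0$ is nonzero with $\operatorname{Tr}(U) \geq 0.1 \cdot m \cdot \max_j U_{j,j}$, and then rescale. Write $S \subseteq \R^{W_t}$ for the span of $H_t$ and the $d$ vectors $A_i(W_t)$. By the hypothesis on $\dim(H_t)$ and the fact that $m = 10d$, we have $\dim(S) \leq 0.2m$. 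Constraints \eqref{eq:sdp:block-rows} and \eqref{eq:sdp:block-W} together say exactly that the range of $U$ lies in $S^\perp$. So we look for $U = P X P$, with $P$ the projection onto $S^\perp$ and $X \succeq 0$.

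Next comes the dual. Suppose no feasible $U$ exists. By Farkas' lemma for SDPs, applied to the compact convex set $\{U \succeq 0,\ \operatorname{Tr}(U) = 1,\ \mathrm{range}(U) \subseteq S^\perp\}$ and the convex cone cut out by \eqref{eq:sdp:SI}, \eqref{eq:sdp:ASI} and the spreadness conditions, there are dual multipliers witnessing infeasibility. These are: a nonnegative diagonal $D$ (for \eqref{eq:sdp:diag}), a scalar $\lambda \geq 0$ (for \eqref{eq:sdp:trace}), a PSD matrix $Y$ (for \eqref{eq:sdp:SI}), and a PSD matrix $Z$ of size $0.1\gamma_{\ASI} m$ (for \eqref{eq:sdp:ASI}). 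They satisfy, on $S^\perp$,
\[
\lambda I - D + Y - 10\,\diag(Y) + E_t^\top \big(Z - \gamma_{\ASI}\diag(Z)\big) E_t \preceq 0 ,
\]
together with $\operatorname{Tr}(D) < 0.1\lambda m$.

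The key step is to test this inequality against a well-chosen large subspace and take traces to get a contradiction. A PSD matrix $M$ obeys $M \preceq k\,\diag(M)$-type bounds only on few directions: the eigenvectors of $10\,\diag(Y) - Y$ with negative eigenvalues (more precisely, the space where $Y$ dominates $10\,\diag(Y)$) have dimension at most $m/10$. This follows by a trace argument: $\operatorname{Tr}(Y) = \operatorname{Tr}(\diag(Y))$, so $Y - 10\,\diag(Y)$ can be positive on only a $1/10$ fraction of the dimensions. Analogously, $E_t^\top(Z - \gamma_{\ASI}\diag(Z))E_t$ is a pushforward of a matrix of size $0.1\gamma_{\ASI} m$ whose positive part has rank at most $0.1\gamma_{\ASI} m / \gamma_{\ASI} = 0.1m$. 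I expect this to be the main obstacle: proving that the positive part of $Z - \gamma_{\ASI}\diag(Z)$ has rank at most $(\text{size})/\gamma_{\ASI}$. I would attempt it by writing $\diag(Z)^{-1/2} Z \diag(Z)^{-1/2}$, which has trace equal to its size, so at most $\text{size}/\gamma_{\ASI}$ of its eigenvalues exceed $\gamma_{\ASI}$. The negative parts only help.

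Now remove from $S^\perp$ the at most $0.1m$ directions from $Y$, the at most $0.1m$ directions from $Z$, and the at most $0.2m$ directions already excluded through $S$. This leaves a subspace $V$ of dimension at least $0.6m$ on which $\lambda I \preceq D$ holds after discarding the nonpositive parts. Taking the trace over $V$ gives
\[
0.6\,\lambda m \leq \operatorname{Tr}(\Pi_V D \Pi_V) \leq \operatorname{Tr}(D) < 0.1\,\lambda m ,
\]
which is a contradiction. We therefore have $\lambda = 0$, and the remaining degenerate case is handled by the same argument, using that $\operatorname{Tr}(U) = 1$. Hence a feasible $U$ exists. Rescaling it so that $\max_j U_{j,j} = 1$ gives \eqref{eq:sdp:diag} and \eqref{eq:sdp:trace}. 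The other constraints are homogeneous and are therefore preserved, so the SDP \eqref{eq:FrameworkSDP} is feasible.
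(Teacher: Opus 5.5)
Your route is genuinely different from the paper's. The paper does not prove this fact directly. It invokes Theorem A.4 of \cite{BJ26} as a black box and only checks the budget $\kappa+\delta+\eta+\eta_1=0.1+0.2+0.1+0.1<1$, with $r_1=0.1\gamma_{\ASI}$ so that $r_1/\eta_1=\gamma_{\ASI}$. You instead reprove this special case by conic duality plus a dimension count. Your count is exactly the mechanism behind that budget condition: at most $0.2m$ directions go to $S$, at most $0.1m$ each to the negative parts of the SI and ASI dual terms, leaving $\dim V\ge 0.6m$ against a trace budget of $0.1m$. The citation is shorter. Your version is self-contained and shows why $\dim(H_t)\le 0.1|W_t|$ and $0.1\gamma_{\ASI}|W_t|$ rows of $E_t$ are the natural hypotheses. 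Your eigenvalue counts are correct: normalize by $\diag(Y)^{-1/2}$, resp.\ $\diag(Z)^{-1/2}$, on the support of the diagonal, and apply Sylvester's law of inertia.

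Two points need repair. First, the SI and ASI terms in your dual inequality have the wrong sign. The multiplier $Y\succeq 0$ for $10\diag(U)-U\succeq 0$ contributes $\langle 10\diag(Y)-Y,U\rangle\ge 0$, and similarly for $Z$, so the certificate is
\[
P\big[\lambda I-D+10\diag(Y)-Y+E_t^\top\big(\gamma_{\ASI}\diag(Z)-Z\big)E_t\big]P\preceq 0 .
\]
As you wrote it, the system is always solvable: take $Z=0$, $D=0$, $\lambda=1$, $Y=cI$ with $c\ge 1/9$. Moreover, on your $V$ the extra terms are nonpositive, so dropping them does not give $\lambda I\preceq D$. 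With the corrected signs, the directions you remove are exactly where those terms can be negative. These are the directions where $Y$ dominates $10\diag(Y)$, and the range of $E_t^\top N_-E_t$ with $N_-$ the negative part of $\gamma_{\ASI}\diag(Z)-Z$. On $V$ both terms are then PSD and can be dropped, and your trace computation goes through.

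Second, the closing remark about $\lambda=0$ is confused. With $D\succeq 0$ and $\Tr(D)<0.1\lambda m$, $\lambda=0$ cannot occur. The real issue is whether an exact certificate exists at all, since the SDP Farkas lemma can fail without a compactness or interior condition. Your trace normalization settles this:
\begin{itemize}
\item The slice $\{U\succeq 0,\ \Tr(U)=1,\ \mathrm{range}(U)\subseteq S^\perp\}$ is compact.
\item Strict separation from the closed cone cut out by the remaining homogeneous constraints therefore gives a functional that is strictly negative on the slice.
\item This is an open condition, so you can move the functional from the closure of the sum of dual cones into the sum itself.
\end{itemize}
This yields the display above with $\prec$ on $S^\perp$ and $\Tr(D)=0.1\lambda m$. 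Your trace argument then gives $0.6\lambda m<0.1\lambda m$, which is impossible for every $\lambda\ge 0$.
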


In fact, \Cref{fact:sdpFeasibility} is a special case of Theorem A.4 from \cite{BJ26}. 
For completeness, we include the statement of Theorem A.4 from \cite{BJ26} and explain how it implies \Cref{fact:sdpFeasibility} in \Cref{appendix:SDPFeasibility}.

\subsubsection{Sampling $u_t$ from an SDP Solution}
\label{subsubsec:sampling_vt}

Once a  feasible SDP solution $U_t \in \R^{W_t \times W_t}$ is computed, the random coloring update vector $u_t$ with $\Cov(u_t) \propto U_t$ can be sampled using  the following standard approach (e.g., \cite{BJ26}).  
In particular, let $U_t = Q_t \Lambda_t Q_t^\top$ be its spectral decomposition. We choose 
\begin{align} \label{eq:find_vt}
    u_t := (\Tr(U_t))^{-1/2} \, U_t^{1/2} Q_t r_t = (\Tr(U_t))^{-1/2}\, Q_t \Lambda_t^{1/2} r_t,
\end{align}
where $r_t \in \R^{W_t}$ is a random vector with i.i.d.~Rademacher random variables (taking values $1$ or $-1$ with probability $1/2$ each). Note that this ensures $\E[u_t] = 0$, and also $\|u_t\|_2=1$ as  
\begin{align*} 
\|u_t\|_2^2 = u_t^\top u_t = \frac{1}{\Tr(U_t)} r_t^\top \Lambda_t^{1/2} Q_t^\top Q_t \Lambda_t^{1/2} r_t = \frac{\Tr(\Lambda_t)}{\Tr(U_t)} = 1 .
\end{align*}

\subsection{Global Interval Tree}
\label{subsec:globalIntervalTree}
As mentioned in \Cref{subsecn:updateFramework}, the subspace $H_{t}$ in the blocking constraint \eqref{eq:sdp:block-rows} and the matrix $E_{t}$ in the ASI constraint \eqref{eq:sdp:ASI} are chosen based on the $\mathcal{T}_{\mathsf{Global}}$ data structure. In this section, we formally describe this data structure and state its properties that we will need.

To simplify our presentation here, we only describe the $\mathcal{T}_{\mathsf{Global}}$ data structure in the context of choosing the set of $ \gamma_{\mathsf{ASI}}$ many ASI-guarded prefixes $\mathcal{I}_t \subseteq W_{t}$, corresponding to the rows of $E_t$ in the ASI constraint \eqref{eq:sdp:ASI}. The choice of the subspace $H_t$ for the blocking constraint \eqref{eq:sdp:block-rows} requires a slight modification of the data structure given here. But as it is only needed for the proof of the $\ell_2$ to $\ell_2$ prefix discrepancy in \Cref{thm:main-ell2-to-ell2}, we will defer the details to \Cref{sec:ell2-to-ell2}.

In \Cref{subsubsec:intervalRepresentationAndOperation}, we describe how $\mathcal{T}_{\mathsf{Global}}$ maintains and updates a partition of the complete sliding window $[W_t]$ over time,  
following the discussion from Section \ref{subsec:IntTree-overview}. Then, in Section \ref{subsubsec:average-interval-size} and Section \ref{subsubsec:prefixErrorSet}, we prove the properties of $\mathcal{T}_{\mathsf{Global}}$ that we need.

\subsubsection{Interval Representation and Operations} \label{subsubsec:intervalRepresentationAndOperation}

Recall from Section \ref{subsec:IntTree-overview} that a set of ASI-guarded prefixes $\mathcal{I}_{t} \subseteq W_t$ can be equivalently viewed as
a collection of consecutive intervals that partition the complete sliding window  $[W_{t}]$.
Formally, $\mathcal{I}_t = \{\mathcal{P}_1, \mathcal{P}_2, \ldots, \mathcal{P}_\ell\}$ splits  $[W_t]$ into consecutive intervals $(\mathcal{P}_j, \mathcal{P}_{j+1}]$, for $j = 0, \cdots, \ell-1$ and $\mathcal{P}_0 := \min\{W_t \}$, 
along with a final interval $(\mathcal{P}_\ell, \max\{W_t\}]$ containing all columns in $[W_t]$ larger than $\mathcal{P}_\ell$ (see Figure~\ref{fig:TimeWindowPartition}). 
We abuse notation and also use $\mathcal{I}_{t}$ to denote the set of intervals in this partition of $[W_{t}]$ (we will ignore the final interval $(\mathcal{P}_\ell, \max\{W_t\}]$) when the context is clear.

The data structure maintains a set $\mathcal{I}_t$ of at most $\gamma_{\mathsf{ASI}}$ intervals that partitions $[W_t]$ at each step.
Recall from \Cref{subsec:IntTree-overview} that as the sliding window $W_t$ progresses, the data structure updates the intervals in $\mathcal{I}_t$ via two operations: (1) add new intervals to $\mathcal{I}_t$ containing newly activated columns in $[W_t]$, or (2) merge consecutive intervals already in $\mathcal{I}_t$ when their sizes (i.e., number of alive columns contained in them) become too small. We describe these operations formally below. 

\medskip
\noindent \textbf{Initialization and Adding Intervals.} Before the process starts, we partition $[n]$ into $n/s$ {\em base intervals} of size $s = 20\gamma_{\mathsf{ASI}}^{-1}\,d \log n$ each, and use $\mathcal{B}$ to denote this collection of base intervals. 
At time $0$, we initialize $\mathcal{I}_0$ to be the first $|W_0|/s=  10d/s$ base intervals contained in $W_0$. 
At each time $t$, a base interval $I \in \mathcal{B}$ is called {\em active} if $I \subseteq [1, \max\{W_t\}]$ and $I \cap W_t \neq \emptyset$ (i.e., $I$ contains at least one active column and no column in $I$ is dormant).
We add a base interval $I \in \mathcal{B}$ to $\mathcal{I}_t$ whenever it becomes active, and we will only add base intervals to $\mathcal{I}_t$ throughout.

\begin{remark} \label{remark:WindowEndConstraint}
To ensure that columns in the final interval $(\mathcal{P}_t, \max\{W_t\}]$ are also controlled by the \eqref{eq:sdp:ASI} constraint, the data structure always includes $\max\{W_t\}$ as a ``special'' ASI-guarded prefix in $\mathcal{I}_t$. This ensures that the entire sliding window $W_t$ also satisfies \eqref{eq:sdp:ASI}, and doesn't affect the SDP feasibility. 
To simplify our presentation, we will not make this  extra ASI constraint explicit. 

\end{remark}

\medskip
\noindent \textbf{Merging via a Global Binary Tree.} 

The intervals in $\mathcal{I}_t$ are merged 
 
using a global binary tree, denoted by $\mathsf{Tree}_t$, that evolves over time. 
Initially, $\mathsf{Tree}_0$ is the complete binary tree of height $\log (n/s)$ whose leaves correspond to all the base intervals $\mathcal{B}$ (see Figure~\ref{fig:BinaryTree}), and at each time $t$, tree $\mathsf{Tree}_t$ is a sub-tree of $\mathsf{Tree}_0$. 

\begin{figure}[htbp]
    \centering
    \includegraphics[width=0.8\textwidth]{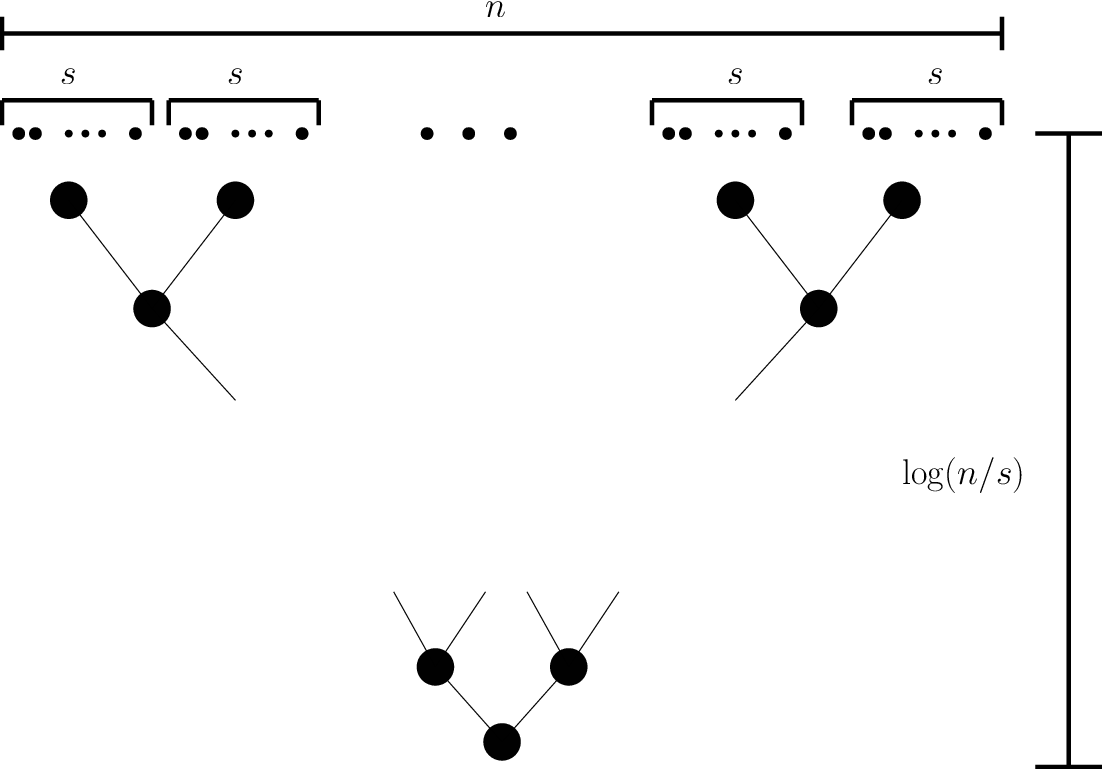}
    \caption{$\mathsf{Tree}_0$ is a complete binary tree whose leaves correspond to all base intervals $\mathcal{B}$.}
    \label{fig:BinaryTree}
\end{figure}

Note that the leaves of $\mathsf{Tree}_0$ correspond exactly to the first $10d/s$ base  intervals $\mathcal{I}_0$ when restricted to the sliding window $W_0$, and the rest of the leaves to base intervals not yet added to $\mathcal{I}_0$.
We maintain this property throughout --- namely, we ensure
that the leaves of $\mathsf{Tree}_t$ correspond either to  intervals in $\mathcal{I}_t$, or to base intervals in $\mathcal{B}$ that have not been added to $\mathcal{I}_t$ yet. 
We call the former type of leaves of $\mathsf{Tree}_t$ {\em active}, and refer to them and the intervals in $\mathcal{I}_t$ interchangeably.

For each active leaf $v \in \mathsf{Tree}_t$, we use $\mathsf{size}(v)$ to denote the number of alive columns contained in the interval of $\mathcal{I}_t$ corresponding to it, and call $v$ {\em small} if $\mathsf{size}(v) < s/2$. 
We refer to the two children of each node in $\mathsf{Tree}_t$ as {\em left} and {\em right} child as they appear in Figure~\ref{fig:BinaryTree}. 
A leaf is called a {\em left leaf} (resp. {\em right leaf}) if it is the left (resp. right) child of its parent. 
At any time $t$, we update $\mathsf{Tree}_t$ by deleting certain active leaves (or equivalently, merging the corresponding interval in $\mathcal{I}_{t}$ with one of its neighboring intervals). 
Formally, our merging rule is the following.\footnote{It may seem natural to use the simple rule of merging two sibling leaves whenever they are both small, but this might create a sequence of $\Theta(\log n)$ small left leaves (corresponding to a root-leaf path) if the sibling right leaf at the end of this sequence is not yet active. While this sequence of small left leaves is fine for the purpose of maintaining the ASI-guarded prefixes $\mathcal{I}_t$, it will create some technical issues when one tries to adapt the data structure to maintain the subspace $H_t$ in \Cref{sec:ell2-to-ell2}. This is why we use the more sophisticated \Cref{mergerule:global_binary_tree} instead.}

\begin{mergerule}[Merging via global binary tree] \label{mergerule:global_binary_tree}
At any time $t$: 
\begin{enumerate}
    \item [(i)] For any small left leaf $v \in \mathsf{Tree}_t$, delete it and merge its interval with the interval at the next active leaf (see Figure \ref{fig:mergeTree} for an illustration); do nothing if $v$ is the last active leaf.\footnote{\label{footnote:last-active-left-leaf}The case where an active small left leaf $v$ is the last active leaf in $\mathsf{Tree}_t$ will be ignored throughout our analysis. One can essentially view $v$ as being deleted and merged with the next base interval when it becomes active.} In particular, if the right sibling of $v$ is also an active leaf, then also delete the right sibling and use their parent to represent the merged interval (see \Cref{fig:rightChildDelete}). 
    \item [(ii)] For any small right leaf $v \in \mathsf{Tree}_t$, if the left sibling of $v$ has been deleted, then delete $v$ and use its parent to represent its interval (see \Cref{fig:rightChildDeleteCaseII} for an illustratoin).
\end{enumerate}
\end{mergerule}

\begin{figure}[htbp]
    \centering
    \includegraphics[width=0.8\textwidth]{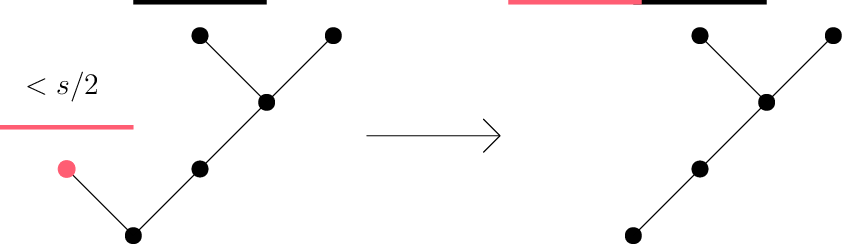}
    \caption{When a left leaf becomes small, it is deleted and merged with the next active leaf.}
    \label{fig:mergeTree}
\end{figure}

\begin{figure}[htbp]
    \centering
    \includegraphics[width=0.7\textwidth]{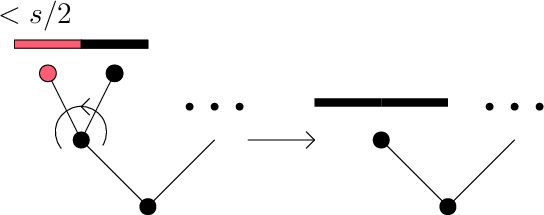}
    \caption{When the right sibling of a small left leaf is also an active leaf, both leaves get deleted and the parent represents the merged interval.}
    \label{fig:rightChildDelete}
\end{figure}

\begin{figure}[htbp]
    \centering
    \includegraphics[width=0.7\textwidth]{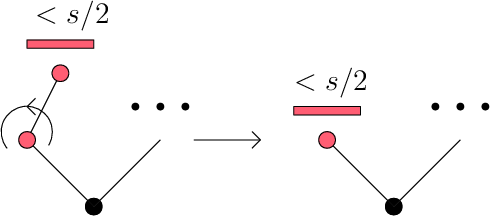}
    \caption{When the left sibling of a small right leaf has been deleted, the small right left gets deleted and the parent represents its interval. Note that the parent is now a small left leaf, and it will be merged subsequently according to \Cref{mergerule:global_binary_tree} (i), as illustrated in \Cref{fig:mergeTree}.}
    \label{fig:rightChildDeleteCaseII}
\end{figure}

Intuitively, one can view the merging by \Cref{mergerule:global_binary_tree} as happening from left to right --- as (i) describes, whenever a left active leaf becomes small, it immediately merges with the active interval on its right if there exists one.
The deletion in (ii) is mainly for the purpose of cleaning up $\mathsf{Tree}_t$ when only a single right child of certain nodes in the tree survives.

\subsubsection{Bounding the Number of Intervals}
\label{subsubsec:average-interval-size}

\Cref{mergerule:global_binary_tree} allows us to lower bound the average size of active leaves in $\mathsf{Tree}_t$, and hence to upper bound their number, which is equal to the number of ASI-guarded prefixes $|\mathcal{I}_t|$.

\begin{proposition}[Bounding the number of intervals]\label{prop:boundASIConstraints}
At any time $t$, consider $\mathsf{Tree}_t$ after \Cref{mergerule:global_binary_tree} is completed. Then the average size of active leaves in $\mathsf{Tree}_t$ is at least $s/(2 \log n)$. Consequently, the total number of active leaves is   $|\mathcal{I}_{t}| \leqslant (2|W_t| \log n)/s \leq\gamma_{\mathsf{ASI}}$. 
\end{proposition}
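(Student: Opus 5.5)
After \Cref{mergerule:global_binary_tree} has been applied, I plan to show that there are few small active leaves and that they can be grouped so that each group carries a non-small leaf of total size at least $s/2$. The counting bound then follows because the sizes of all active leaves add up to at most $|W_t|$, the number of alive columns in $[W_t]$. Indeed, an average size of at least $s/(2\log n)$ gives $|\mathcal{I}_t| \le |W_t| \cdot 2\log n / s = 10d \cdot 2\log n \cdot \gamma_{\ASI}/(20 d\log n) = \gamma_{\ASI}$, using $s = 20\gamma_{\ASI}^{-1} d\log n$.

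\textbf{Step 1: structural claim.} I first claim that after the merging rule stabilizes, every small active leaf $v$ is a right leaf whose left sibling is still present in $\mathsf{Tree}_t$, as an active leaf or as an internal node. Ignoring the last active leaf, as in \Cref{footnote:last-active-left-leaf}, the argument goes as follows. A small left leaf would be deleted by rule (i). A small right leaf whose left sibling has been deleted would be deleted by rule (ii). So the left sibling of $v$ survives. Since merging proceeds left to right, the surviving left sibling subtree contains active leaves, and its rightmost active leaf $w$ lies immediately to the left of $v$.

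\textbf{Step 2: charging scheme.} I charge each small right leaf $v$ to a non-small active leaf. Walk from $v$ leftward: the active leaf immediately preceding $v$ is the rightmost leaf of its left sibling subtree $T$. If that leaf is non-small, charge $v$ to it. Otherwise it is itself a small right leaf, and I recurse inside $T$: its left sibling subtree is strictly deeper and lies strictly inside $T$. Each recursion step goes one level deeper in the tree, so after at most $\log(n/s) \le \log n$ steps the chain reaches a non-small leaf. The chain cannot end at a small left leaf, since those have been deleted. The key structural fact I need is that the rightmost active leaf of any nonempty surviving subtree whose root is a left child is either non-small or a small right leaf with a surviving left sibling. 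In the latter case the recursion descends and the depth strictly increases.

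Next I argue the charges are spread out. The chain from $v$ consists of consecutive active leaves in which every leaf except the last (non-small) one is a small right leaf, and these leaves occupy strictly increasing depths. So each non-small leaf $u$ is charged by at most one maximal run of consecutive small right leaves immediately following it, and that run has length at most $\log n - 1$, because it occupies distinct depths along the nested subtrees. Grouping each non-small leaf with its run gives groups of at most $\log n$ leaves, each of total size at least $s/2$. Hence the average size is at least $s/(2\log n)$.

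\textbf{Main obstacle.} The delicate part is the interaction of deletions, where a parent node now represents a merged interval and becomes a leaf, with the claim that the rightmost surviving leaf of a left sibling subtree behaves as asserted. I would verify this invariant by induction over the sequence of merge operations. The invariant is that every internal node of $\mathsf{Tree}_t$ restricted to the active region has a surviving left child, or is about to be collapsed by rule (ii). The left-to-right nature of the rule should make this induction go through.
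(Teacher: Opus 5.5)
Your proof is correct and follows the paper's overall argument. After the rule stabilizes there are no small left leaves (except possibly the last), every small right leaf $v$ has a surviving left sibling $w$, and each such $v$ is charged to a non-small leaf below $w$, with fewer than $\log n$ charges per non-small leaf. The difference is in how the charges are assigned and counted. The paper charges $v$ to some non-small \emph{left} leaf $w'$ below $w$, and bounds the load on $w'$ by noting that every leaf charged to it is the sibling of one of its ancestors. You charge $v$ to the nearest non-small active leaf on its left, and bound the load by showing that a maximal run of small right leaves has strictly decreasing depth from left to right.

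Your version is slightly more robust, because the subtree of $w$ need not contain a non-small left leaf. Suppose $w$ has a leaf child $a$ and a right child $b$ with leaf children $c,d$. Then $a$ may become small and be merged into $c$ by rule (i), and later $c$ may become small and be merged with $d$ into $b$. This leaves $b$ as a non-small right leaf and the only leaf below $w$. The paper's ancestor count still works verbatim once $w'$ is allowed to be any non-small leaf, which is exactly what your chain finds.

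The same example refutes the invariant in your last paragraph: $w$ ends up with no left child, and nothing collapses it. You do not need that invariant, though. The fact used in Step 2 follows directly from Step 1, because the rightmost leaf below $w$ is either $w$ itself (a left leaf, hence non-small) or a proper descendant of $w$ that, if small, is a right leaf with a surviving left sibling. The predecessor claim only requires two things: every surviving node has a leaf below it, and each merge combines intervals of leaves adjacent in tree order. Together these ensure that the tree order of the active leaves always matches the left-to-right order of their intervals.
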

\begin{proof}
Note that there is no small left leaves in $\mathsf{Tree}_t$ after \Cref{mergerule:global_binary_tree} is completed (except for possibly the last active leaf, which we will ignore as mentioned in \Cref{footnote:last-active-left-leaf}), as they all get merged by (i). 
There might be many small right leaves in $\mathsf{Tree}_t$, but we will control their average sizes by charging them to certain non-small active left leaves as follows.

For each small right leaf $v$, consider its left sibling $w$, which must exist as otherwise $v$ will be deleted by (ii). 
Note that in the subtree of $\mathsf{Tree}_t$ rooted at $w$ (which may contain the single node $w$ if it is an active leaf), there must exist an active non-small left leaf $w'$, as otherwise the subtree rooted at $w$ will get merged by (i). We will charge $v$ to this non-small left leaf $w'$. 

Note that each small right leaf that gets charged to $w'$ is the sibling of one of its ancestors. As $w'$ can have $<\log n$ ancestors (corresponding to the entire root-leaf path), $<\log n$ small right leaves can be charged to each non-small left leaf $w'$. 
As $w'$ has size at least $s/2$, averaging over itself and all the $<\log n$ small right leaves that are charged to it, the average sizes of active leaves in $\mathsf{Tree}_t$ is at least $s/(2 \log n)$.
The upper bound on $|\mathcal{I}_t|$ then follows immediately, as the intervals corresponding to the active leaves of $\mathsf{Tree}_t$ form a partition of the sliding window $W_t$. 

\end{proof}

\subsubsection{Bounding the Error Sets for Prefixes}
\label{subsubsec:prefixErrorSet}

Recall from \Cref{subsec:IntTree-overview} (also see \Cref{subsecn:proofFramework}) that for any $\mathcal{P} \in [W_t]$ that is not an ASI-guarded prefix in $\mathcal{I}_t$, we call it {\em unguarded}, and assign $\mathcal{P}_t := \max \{\mathcal{I}_t \cap [\mathcal{P}]\}$, i.e., the maximum-indexed ASI-guarded prefix before $\mathcal{P}$, as its ASI-guard at time $t$ (see Figure \ref{fig:disc_decomposition}). 
For an ASI-guarded prefix $\mathcal{P} \in \mathcal{I}_t$, its ASI-guard at time $t$ is defined to be itself. 
 
Recall that the discrepancy change for prefix $\mathcal{P}$ is decomposed as $d \varphi_t^{\mathcal{P}} = d \varphi_t^{\mathsf{ASI}} + d \varphi_t^{\mathsf{err}}$, where $d \varphi_t^{\mathsf{ASI}}$ is the discrepancy change of its ASI-guard $\mathcal{P}_t$, and $d \varphi_t^{\mathsf{err}}$ is the that of the alive columns between the ASI-guard and $\mathcal{P}$. 
We refer to this latter set the {\em error} columns for $\mathcal{P}$ at time $t$. 
The union of error columns for $\mathcal{P}$ at all time steps is called its {\em error set}, and is formally defined below.

\begin{definition}[Error set for $\mathcal{P}$] \label{defn:error-set}
The error set for $\mathcal{P}$ up till time $t$, denoted as $\mathsf{Error}^{\mathcal{P}}_t$, consists of all error columns for $\mathcal{P}$ for all time steps $t' \leq t$. 
The error set for $\mathcal{P}$ is defined as $\mathsf{Error}^{\mathcal{P}} := \mathsf{Error}^{\mathcal{P}}_n$. 
\end{definition}

Note that the error set for $\mathcal{P}$ depends on the outcomes of the randomness of the algorithm.
As discussed in \Cref{subsec:IntTree-overview}, our bound on $\varphi_t^{\err}$ will depend on how large the error set for $\mathcal{P}$ is. 
\Cref{mergerule:global_binary_tree} allows us to give the following upper bound on $\Error^{\mathcal{P}}$.

\begin{proposition}[Bounding the error set]\label{prop:boundASIErrorSet}
We always have $|\mathsf{Error}^{\mathcal{P}}|\leq (s/2) \log n = O(\gamma_{\mathsf{ASI}}^{-1}\,d\log^{2}n)$.
\end{proposition}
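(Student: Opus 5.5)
The error set for $\mathcal{P}$ only grows when the interval containing $\mathcal{P}$ absorbs columns lying between its old left endpoint and $\mathcal{P}$, and this can happen only through a merge under \Cref{mergerule:global_binary_tree}. The plan is to tie each such growth event to a step up a fixed root-leaf path of $\mathsf{Tree}_0$. Each event will be charged at most $s/2$ alive columns, which gives $|\Error^{\mathcal{P}}| \le (s/2)\log n$. Substituting $s = 20\gamma_{\ASI}^{-1} d\log n$ then yields the $O(\gamma_{\ASI}^{-1} d \log^2 n)$ form.

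\textbf{Step 1: the node containing $\mathcal{P}$ climbs a fixed path.} Let $b \in \mathcal{B}$ be the base interval containing $\mathcal{P}$, so $\mathcal{P}$ lies in leaf $\ell_b$ of $\mathsf{Tree}_0$. While $\mathcal{P} \in [W_t]$, the interval of $\mathcal{I}_t$ containing $\mathcal{P}$ is represented by an active leaf $v_t$. I claim $v_t$ is always an ancestor of $\ell_b$, and that it changes only by moving to its parent.
\begin{itemize}
\item Under rule (ii), a right leaf is replaced by its parent.
\item Under rule (i), with both siblings active, the right sibling is replaced by the parent.
\item Under rule (i), when a left leaf $v$ merges into the next active leaf $v'$ whose sibling is not active, the interval of $v'$ grows but the representing node does not change. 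Here I must check that $v'$ is the leftmost active leaf in a subtree whose root is a right relative of an ancestor of $v$. The cleanest way is to track the representing node as the lowest common ancestor of the base intervals the interval covers, restricted to the path.
\end{itemize}
I expect this bookkeeping, especially reconciling case (i) merges that cross subtrees, to be the main obstacle. I would try to show that the merged interval of $v'$ is always represented by an ancestor of $\ell_b$ at depth at most one higher per event, or else to charge such cross-subtree merges directly to the nearest common ancestor on the path.

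\textbf{Step 2: each growth event adds at most $s/2$ error columns.} Error columns for $\mathcal{P}$ are the alive columns between its ASI-guard and $\mathcal{P}$. Merges remove ASI-guards but never add new ones to the left of $\mathcal{P}$, since new base intervals are appended on the right. So the guard moves left only at merge events, and between events the error columns at later times are a subset of those at earlier times: columns only die.

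When the guard moves left because a small interval merges into $\mathcal{P}$'s interval, the newly included alive columns are exactly the alive columns of that small interval. By the definition of small, there are fewer than $s/2$ of them.

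The one subtle situation is a merge in which $\mathcal{P}$'s own interval is the small one and merges to the right. In that case the guard does not move, so no columns are added.

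\textbf{Step 3: count and conclude.} By Step 1, each growth event raises the height of the representing node on the root-leaf path of $\ell_b$ by at least one, so there are at most $\log(n/s) \le \log n$ events. Summing the per-event bound from Step 2 over all events gives
\[
|\Error^{\mathcal{P}}| \le (s/2)\log n = 10\gamma_{\ASI}^{-1} d \log^2 n,
\]
deterministically, for every realization of the algorithm's randomness.
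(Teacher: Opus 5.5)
\textbf{Gap: the event count in Steps 1 and 3.} Your Step 2 is essentially right. Guarded prefixes to the left of $\mathcal{P}$ are only ever removed, so $\Error^{\mathcal{P}}$ grows only when a small interval (fewer than $s/2$ alive columns) is absorbed by the interval containing $\mathcal{P}$. The problem is how you count these events. The height of the node holding $\mathcal{P}$'s interval is not a valid potential: that node need not be an ancestor of $\ell_b$, and a growth event need not move it.

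For instance, take base intervals $b_1,\dots,b_{16}$ and $\mathcal{P}\in b_{11}$, with $b_{11}$ never small. The node $D$ covering $b_9,b_{10}$ can become a small left leaf whose right sibling (covering $b_{11},b_{12}$) is still internal. Rule (i) then merges $D$ into the next active leaf, which is $b_{11}$ itself. Later the node covering $b_1,\dots,b_8$ can do the same and again lands on $b_{11}$. These are two growth events, and the representing node is $b_{11}$ throughout. This is exactly your third bullet, and it contradicts Step 3's assertion that every event raises the height.

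Likewise, if $\mathcal{P}\in b_9$ and $D$ holds its interval, $D$'s merge hands that interval to $b_{11}$, which is not an ancestor of $b_9$. So Step 1 is false as stated, and your first proposed repair (``at most one level higher per event'') cannot hold. Step 3 therefore gives no bound on the number of events.

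\textbf{The missing idea.} Charge each event to the small left leaf $x$ that is deleted, not to the receiving node; this is what the paper does.
\begin{itemize}
\item Every active leaf $v$ holds an interval ending at $r(v)$, the right end of its subtree, and all operations preserve this.
\item Let $y$ be the right sibling of $x$. While $x$ is alive, $r(y)$ stays guarded, because the only node whose deletion removes that guard is a proper ancestor of $x$.
\item Hence if $x$'s interval is absorbed by the interval containing $\mathcal{P}$, then $r(x)<\mathcal{P}\le r(y)$.
\item So $y$ is a right child on the root-to-$\ell_b$ path, and $x$ lies in the set $\mathcal{G}_{\mathcal{P}}$ of fewer than $\log n$ left siblings of such nodes.
\item Each node is deleted at most once, so there are at most $|\mathcal{G}_{\mathcal{P}}|$ such events.
\end{itemize}
This is the paper's decomposition of $[\mathcal{P}]$ into the subtrees $G_{\mathcal{P}}(k)$ plus a partial base interval. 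Your fallback of charging to the nearest common ancestor on the path points this way, but it has to be the actual argument.

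\textbf{Two smaller fixes.}
\begin{itemize}
\item Your sum over events omits the error columns already present inside $\mathcal{P}$'s own base interval before any merge. There can be up to $s$ of them, and the paper adds them as a separate $+s$ term.
\item If $\mathcal{P}=r(x)$ is itself guarded and its small interval merges rightward, the guard \emph{does} move, contrary to your Step 2. This happens at most once and adds fewer than $s/2$ columns.
\end{itemize}
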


\begin{proof}
Our goal is to find a set $|\mathcal{G}_\mathcal{P}| < \log n$ of nodes in $\mathsf{Tree}_0$ that may contribute to $\Error^{\mathcal{P}}$, and show that the subtree rooted at each node in $\mathcal{G}_\mathcal{P}$ can contribute at most $s/2$ to $\Error^{\mathcal{P}}$. 

To define the node set $\mathcal{G}_\mathcal{P}$, let $v_\mathcal{P} \in \mathcal{B}$ be the leaf of $\mathsf{Tree}_0$ (equivalently, base interval) that contains $\mathcal{P}$. 
Consider the root-leaf path from the root $\mathsf{root}_0$ of $\mathsf{Tree}_0$ to $v_\mathcal{P}$. 
Starting from $\mathsf{root}_0$ and following along this root-leaf path, whenever a node $v$ on the root-leaf path is the right child of its parent, we add the left sibling of $v$ to $\mathcal{G}_\mathcal{P}$ (see Figure \ref{fig:prefixPartition} for an illustration).

As the depth of $\mathsf{Tree}_0$, and hence the length of the root-leaf path, is $\log (n/s)$, we have $\ell := |\mathcal{G}_\mathcal{P}| < (\log n) - 1$.  
Denote the nodes in $\mathcal{G}_{\mathcal{P}}$ as $G_{\mathcal{P}}(1), \cdots, G_{\mathcal{P}}(\ell)$.
We abuse notation and also use $G_{\mathcal{P}}(k)$ to denote the union of all leaf intervals in the subtree rooted at node $G_{\mathcal{P}}(k)$.
Note that the intervals $G_{\mathcal{P}}(1), \cdots, G_{\mathcal{P}}(\ell)$, together with the (partial) base interval $v_{\mathcal{P}} \cap [\mathcal{P}]$, form a partition of $[\mathcal{P}]$, and these are the only intervals that can contribute to $\Error^{\mathcal{P}}$. 
We bound their contributions below.

For each $k \in [\ell]$, note that as long as the node $G_\mathcal{P}(k)$ is not deleted, none of the intervals in its subtree will merge with the interval containing $\mathcal{P}$, and thus contribute no column to $\mathsf{Error}^{\mathcal{P}}$. 
Only when $G_\mathcal{P}(k)$ becomes an active (left) leaf that is small can it merge with the interval containing $\mathcal{P}$, but in this case $\mathsf{size}(G_\mathcal{P}(k)) < s/2$ and thus it contributes at most $s/2$ columns to $\Error^{\mathcal{P}}$. 
As the partial base interval $v_{\mathcal{P}} \cap [\mathcal{P}]$ can contribute at most $\mathsf{size}(v_{\mathcal{P}}) \leq s$ columns to $\Error^{\mathcal{P}}$, the size of $\Error^{\mathcal{P}}$ is bounded by $(s/2) \ell + s \leq (s/2) \log n$, and this completes the proof. 

\end{proof}

\begin{figure}[htbp]
    \centering
    \includegraphics[width=0.6\textwidth]{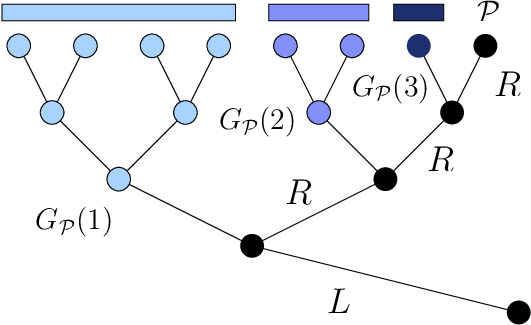}
    \caption{Decomposition of prefix $\mathcal{P}$ based on its root-leaf path. Each node $G_{\mathcal{P}}(k)$ added to $\mathcal{G}_{\mathcal{P}}$ is the left sibling of a node on the root-leaf path that is the right child of its parent's.}
    \label{fig:prefixPartition}
\end{figure}

\subsection{A Meta-Analysis}
\label{subsecn:proofFramework}

In this subsection, we give a meta-analysis of our algorithms that is common to the proofs of Theorems \ref{thm:main-ell2-to-ell2} and \ref{thm:ell_inf-to-ell_2}. 
The specific instantiations of this meta-analysis and the detailed proofs in these two different settings will be given in the next two sections.

Fix an arbitrary prefix $\mathcal{P} \in [n]$. Our goal is to obtain a bound on the $\ell_2$ norm of its discrepancy vector $\varphi_t^{\mathcal{P}}$ (with entries $\varphi_t^{\mathcal{P}}(i) = \langle A_i^{\mathcal{P}},x_t\rangle$) with probability $1 - 1/\poly(n)$, so that we can union bound over all prefixes. 
Let $t^*$ be the first time step when $\mathcal{P} \in W_{t^*}$. 
As our algorithm ensures that the discrepancy update for the whole sliding window is $0$ by SDP constraint \eqref{eq:sdp:block-W}, $\varphi_{t^*}^{\mathcal{P}} = 0$ and the discrepancy vector $\varphi_t^{\mathcal{P}}$ is entirely due to the coloring update of the alive columns in $W_{t^*}$ after time $t^*$.  
In the following, we will condition on $W_{t^*}$ and analyze the discrepancy update after $t^*$.
To keep our notation simple, we drop the superscript $\mathcal{P}$ whenever the context is clear.

As mentioned in \Cref{subsec:ASI-overview}, we will analyze the dynamics of $\|\varphi_t\|_2^2$ given by
\begin{align}\label{eq:IncrementNormDecomposition}
d\lVert \varphi_{t} \rVert_{2}^{2} = \lVert \varphi_{t + dt} \rVert_{2}^{2} - \lVert \varphi_{t} \rVert _{2}^{2} = 2\underbrace{\langle d\varphi_{t}, \varphi_{t}\rangle}_{dL_{t}} + \underbrace{\langle d\varphi_{t}, d\varphi_{t}\rangle}_{dQ_{t}},
\end{align}
and bound the processes $L_t$ and $Q_t$ (with increments $d L_t$ and $d Q_t$) separately. 
It turns out that just using spectral independence \eqref{eq:sdp:SI}, \cite{BG17} already gave a bound on $Q_t$ that matches Banaszczyk's non-constructive bound for the (more general) $\ell_2$ to $\ell_2$ prefix discrepancy setting. 

\begin{restatable}[$Q_t$ bound for $\ell_2$ to $\ell_2$ prefix discrepancy, \cite{BG17}]{fact}{QuadraticTerm} \label{fact:quadratic_term}
Consider the setting of \Cref{thm:main-ell2-to-ell2}. For the algorithm in \Cref{subsecn:updateFramework}, one has $Q_t \leq O(d + \log n)$ with probability $1 - 1/\poly(n)$. 
\end{restatable}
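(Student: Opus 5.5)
We need to show that $Q_t=\sum_{t'\le t}\|d\varphi_{t'}\|_2^2$ is at most $O(d+\log n)$ with high probability, for a fixed prefix $\mathcal{P}$ and conditioned on $W_{t^*}$. The plan is a Freedman-type martingale argument that uses only \eqref{eq:sdp:SI}. We first compute the conditional expectation of each increment, then control the fluctuations.

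\textbf{Step 1: the drift of $Q_t$.} Write $a_i\in\R^{W_{t^*}}$ for row $A_i^{\mathcal{P}}$ restricted to $W_{t^*}$, and write $d\varphi_t(i)=\langle a_i,u_t\rangle\sqrt{dt}$. Since $\Cov(u_t)=U_t/\Tr(U_t)$ and \eqref{eq:sdp:SI} gives $U_t\preceq 10\,\diag(U_t)$, we get
\[
\E_t[dQ_t]=\frac{dt}{\Tr(U_t)}\sum_i a_i^\top U_t a_i\le \frac{10\,dt}{\Tr(U_t)}\sum_{j}U_t(j,j)\,\|v_j^{\mathcal{P}}\|_2^2 .
\]
Here $v_j^{\mathcal{P}}$ is the column restricted to the rows, and it vanishes for $j>\mathcal{P}$. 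Using $\|v_j\|_2\le1$, \eqref{eq:sdp:diag}, \eqref{eq:sdp:trace} and $|W_t|=10d$, we obtain $\E_t[dQ_t]\le O(1/d)\sum_{j\in W_t\cap[\mathcal{P}]}U_t(j,j)\,dt$.

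\textbf{Step 2: charging the drift to column progress.} We compare this drift with the energy $\sum_{j\in W_{t^*}}x_t(j)^2$ of the columns in $W_{t^*}\cap[\mathcal{P}]$. Its increment is
\[
d\big(x_t(j)^2\big)=2x_t(j)u_t(j)\sqrt{dt}+u_t(j)^2dt, \qquad \E_t[u_t(j)^2]=U_t(j,j)/\Tr(U_t).
\]
Hence $\sum_j u_t(j)^2dt$ summed over time is at most $|W_{t^*}|=10d$, up to a martingale term. This gives the bound on the compensator
\[
\sum_t \E_t[dQ_t]\lesssim \frac{1}{d}\cdot O(d)\cdot(\text{normalization}),
\]
so the total expected quadratic increment is $O(d)$ scale. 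The normalization must be tracked carefully: it comes from $1/\Tr(U_t)\le 1/d$ and from each column contributing total variance at most $1$.

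\textbf{Step 3: concentration.} We apply Freedman's inequality to $M_t=Q_t-\sum\E[dQ]$ and to the martingale part of $\sum_j x_t(j)^2$. The increments are tiny because $dt$ is polynomially small. The predictable variance of $Q$ is $\sum(dQ)^2$-scale, at most $O(1)\cdot dt$ summed, times the current increment size. The cleaner route is a self-bounding argument: $\Var_t(dQ_t)\lesssim \|a\|\cdot\E_t[dQ_t]\cdot(\text{small})$. Freedman then gives deviation $O(\sqrt{(d)\log n}+\log n)\le O(d+\log n)$ with probability $1-1/\poly(n)$.

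\textbf{Main obstacle.} The main obstacle is Step 3. The fluctuation terms $2x_t(j)u_t(j)\sqrt{dt}$ have variance governed by $x_t(j)^2\le1$, so their total quadratic variation is only $O(d)$. This yields the additive $\sqrt{d\log n}$, which AM–GM turns into $O(d+\log n)$. We would handle this by an exponential supermartingale $\exp(\lambda Q_t-\lambda C\sum_j u^2dt)$ with $\lambda=\Theta(1)$, which directly yields the $O(d+\log n)$ tail.
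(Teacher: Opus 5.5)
Your Steps 1--2 have the right ingredients, and they are cleaner than you state them. By \eqref{eq:sdp:SI} and $\sum_i A_i(j)^2\le 1$, one has directly $\E_t[dQ_t]\le 10\sum_{j\in\mathsf{Corr}}\E_t[u_t(j)^2]\,dt$. This is $10$ times the drift of the column energy $\Phi_t:=\sum_{j\in\mathsf{Corr}}x_t(j)^2\le 10d$. Detouring through $1/\Tr(U_t)\le 1/d$ is what produces the ``normalization'' you leave open.

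The genuine gap is Step~3. What must be controlled is $V_t:=\sum_{t'\le t}\sum_{j\in\mathsf{Corr}}u_{t'}(j)^2\,dt=\Phi_t-\Phi_{t^*}-N_t$, where $N_t:=\sum_{t'\le t}2\sum_j x_{t'}(j)u_{t'}(j)\sqrt{dt}$. Your claim that $N$ has quadratic variation ``only $O(d)$'' is circular. By SI its quadratic variation is at most roughly $40V_t$, and $V_t$ is not deterministically $O(d)$: a coordinate can oscillate and accumulate $\int u_t(j)^2\,dt\gg 1$. Bounding $V_t$ is the whole problem.

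The exponential supermartingale $\exp(\lambda Q_t-\lambda C\sum_j u^2\,dt)$ you propose does not address this. Both $dQ_t$ and $\sum_j u_t(j)^2\,dt$ are of order $dt$ per step, so the exponent has no $\sqrt{dt}$ martingale part. It only certifies the essentially deterministic comparison $Q_t\lesssim V_t$, and leaves $V_t$ unbounded. Similarly, the self-bounding you sketch for $\mathrm{Var}_t(dQ_t)$ is aimed at the wrong process, since the fluctuations of $Q$ itself are negligible.

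The missing idea is to put the energy $\Phi_t$ itself, rather than $\int\sum_j u^2$, into the potential. This is the same regularization the paper uses for $L_t^{\ASI}$ and $L_t^{\err}$ in \Cref{lem:bound_ell_infASI} and \Cref{lem:bound_ellinf_error}. Take $Z_t:=Q_t-C\,\Phi_t$ with, say, $C=20$. Since $\E_t[u_t]=0$, you get $\E_t[dZ_t]\le-(C-10)\sum_j\E_t[u_t(j)^2]\,dt$. Up to $O(dt^{3/2})$ terms, $\E_t[(dZ_t)^2]=4C^2\,\E_t\big[(\sum_j x_t(j)u_t(j))^2\big]dt\le 40C^2\sum_j\E_t[u_t(j)^2]\,dt$, using SI and $|x_t(j)|\le 1$. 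Hence \Cref{fact:freedman} applies with $\delta=\Theta(1/C)$. Taking $\xi=\Theta(\log n)$, with probability $1-1/\poly(n)$ you get $Q_t\le C(\Phi_t-\Phi_{t^*})+\xi\le 10Cd+O(\log n)$ directly, with no $\sqrt{d\log n}$ intermediate.

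Your two-martingale route can also be repaired by stopping at the first time $V_t\ge 10d+a$. Before that time $N$ has quadratic variation at most $40(10d+a)$, and Freedman forces $a=O(\sqrt{d\log n}+\log n)$. For comparison, the paper gives no argument of its own for this Fact; it simply invokes Claim~27 of \cite{BG17}.
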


The above statement follows from Claim~27 in the proof of Theorem~24 of \cite{BG17}.

\smallskip
\noindent \textbf{Bounding the Linear Term $L_t$.} 
The bottleneck for the analysis in \cite{BG17}, which results in their sub-optimal bound, is the linear term $L_t$. To apply Freedman's inequality (see \Cref{fact:freedman}) for the martingale $L_t$, they need to bound its quadratic variation $\E_t[(d L_t)^2]$, where $\E_t$ takes the expectation conditional on the events on before time $t$. However, as they have no control over the vector $d \varphi_t$, they need to apply the Cauchy-Schwartz inequality
\begin{align} \label{eq:linear_CauchySchwartz_BG17}
\E_t\big[(d L_t)^2 \big] \leq \E_t \big [\|d \varphi_t\|_2^2\big]  \cdot \|\varphi_t\|_2^2 .
\end{align}
Note that \eqref{eq:linear_CauchySchwartz_BG17} is quite wasteful when $d \varphi_t$ has non-trivial randomness, and our key improvement over \cite{BG17} comes from the randomness guaranteed by affine spectral independence \eqref{eq:sdp:ASI} for the set of ASI-guarded prefixes in $\mathcal{I}_t$. 

However, as the prefix $\mathcal{P}$ may not be an ASI-guarded prefix, we don't have almost pairwise independence for $d \varphi_t$ and thus cannot  directly improve over \eqref{eq:linear_CauchySchwartz_BG17}. 
Instead, recall from \Cref{subsubsec:prefixErrorSet} that we let $\mathcal{P}_t := \max \{\mathcal{I}_t \cap [\mathcal{P}]\}$ be the ASI-guard for prefix $\mathcal{P}$ at time $t$, and decompose $d \varphi_t$ as 
\begin{align}\label{eq:IncrementDiscrepancyVectorDecomposition}
d\varphi_t = d\varphi_t^{\ASI} + d\varphi_t^{\err},
\end{align}
where $d\varphi_t^{\ASI} := d \varphi_t^{\mathcal{P}_t}$ is the discrepancy change of the ASI-guard $\mathcal{P}_t$, and $d \varphi_t^{\err}$ is the discrepancy change due to the error columns between $\mathcal{P}_t$ and $\mathcal{P}$ (see \Cref{subsubsec:prefixErrorSet}). 
Correspondingly, 
\begin{align}\label{eq:LtDecomposition}
d L_t= \langle d\varphi_t^{\ASI}, \varphi_t \rangle + \langle d\varphi_t^{\err}, \varphi_t\rangle =: d L_t^{\ASI}  + d L_t^{\err} .
\end{align}

In our analysis, we will control the two terms $d L_t^{\ASI}$ and $d L_t^{\err}$ separately.

Roughly speaking, the ASI-guard part $d L_t^{\ASI}$ can be bounded better than \eqref{eq:linear_CauchySchwartz_BG17} because   $d\varphi_t^{\ASI}$ always satisfies affine spectral independence \eqref{eq:sdp:ASI}, despite that the ASI-guard $\mathcal{P}_t$ might change over time.  
To control the error part $d L_t^{\err}$, note that the $\GlobalIntervalTree$ data structure guarantees that there will be at most $O(s \log n) = O(\gamma_{\ASI}^{-1} \log^2 n)$ error columns throughout all time steps $t$, which allows for a standard Freedman-type analysis.  
We defer the details of how to bound these two parts of the linear term, depending on the specific problem assumptions, to subsequent sections.

\section{$\ell_\infty$ to $\ell_2$ Prefix Discrepancy}\label{sec:ell_inf-to-ell_2}

In this section, we prove \Cref{thm:ell_inf-to-ell_2}, which is restated below for convenience.

\ellinftoelltwo*

We present our algorithm for \Cref{thm:ell_inf-to-ell_2} in \Cref{subsec:ell_inf-to-ell_twoALG} and its analysis in \Cref{subsec:ell_inf-to-ell_twoAnalysis}. 
The proof of \Cref{thm:ell_inf-to-ell_2} will appear in \Cref{subsec:ellinf_to_ell2_putting_together}.

\subsection{Algorithm}\label{subsec:ell_inf-to-ell_twoALG}
Fix $\gamma_{\mathsf{ASI}} := 100\sqrt{d}\log n$ and $\tau := \Theta(d + d^{3/4} \log n + d^{1/4} \log^{3/2} n)$ the target discrepancy bound in \Cref{thm:ell_inf-to-ell_2} (with a large enough constant). The algorithm follows the framework in \Cref{subsec:basic_framework} and it also uses the $\GlobalIntervalTree$ data structure as described in \Cref{subsec:globalIntervalTree} to maintain a set of at most $\gamma_{\ASI}$ ASI-guarded prefixes $\mathcal{I}_t \subseteq W_t$. 

At each time step $t$, the algorithm does the following.

\begin{enumerate}
\item If there exists prefix $\mathcal{P} \in [n]$ whose $\lVert \varphi_t^{\mathcal{P}} \rVert_2$ exceeds the target $\tau$, it outputs $\mathsf{ABORT}$. 

\item Otherwise, it solves the SDP in \eqref{eq:FrameworkSDP}, where $H_t = \emptyset$ (i.e., there is no constraint of the form \eqref{eq:sdp:block-rows}) and the matrix $E_t \in \R^{d |\mathcal{I}_t| \times W_t}$ contains a row $A_i^{\mathcal{P}}(W_t)$ (viewed as a vector in $\R^{W_t}$ with coordinates $0$ in $W_t \setminus [\mathcal{P}]$) for every ASI-guarded prefix $\mathcal{P} \in \mathcal{I}_t$ and $i \in [d]$. 

\end{enumerate}

By \Cref{prop:boundASIConstraints} and the setting of $s = 20 \gamma_{\ASI}^{-1} d \log n$ in \Cref{subsubsec:average-interval-size}, the row dimension of $E_t$ is 
\[
d |\mathcal{I}_t| \leq  d  \cdot \frac{2|W_t| \log n}{s} = \frac{2 d |W_t| \log n}{20 \gamma_{\ASI}^{-1} d \log n}= 0.1 \gamma_{\ASI} |W_t| ,
\]

which satisfies the condition of \Cref{fact:sdpFeasibility}. Thus the SDP \eqref{eq:FrameworkSDP} is feasible at every time step $t$, and to bound the $\ell_2$ prefix discrepancy of this algorithm and prove \Cref{thm:ell_inf-to-ell_2}, it suffices to prove that the algorithm does not $\mathsf{ABORT}$ with high probability.

\subsection{Analysis}
\label{subsec:ell_inf-to-ell_twoAnalysis}

Fix any prefix $\mathcal{P} \in [n]$. We will show that, with probability $1 - 1/\mathsf{poly}(n)$, the $\ell_2$ prefix discrepancy $\lVert \varphi_{t}^{\mathcal{P}}\rVert_{2}$ is at most $\tau$ at every time step $t$. Since the algorithm runs for $\mathsf{poly}(n)$ steps and there are $n$ prefixes, a union bound across all time steps and prefixes completes the proof. In the following, we drop the superscript $\mathcal{P}$ whenever it is clear from the context.

\smallskip
\noindent \textbf{Road Map for the Analysis.}
Following the meta-analysis in \Cref{subsecn:proofFramework}, at each time step $t$, we decompose the change of the squared $\ell_2$ prefix discrepancy $d \|\varphi_t\|_2^2$ as 
\[
d \|\varphi_t\|_2^2 = d Q_t + 2 ( d L_t^{\ASI} + d L_t^{\err}).
\]
Our goal will be to show that for any time $t \in [0,n]$, with probability $1 - 1/\poly(n)$, 
\begin{align} \label{eq:stopping_condition}
Q_t \leq \tau^2/3 \ , \quad L_{t}^{\ASI} \leq \tau^2/6 \ \text{ and } \ \  L_{t}^{\err} \leq \tau^2/6  .
\end{align}
Once the above is shown, then by a union bound, one has $\|\varphi_t\|_2^2 \leq \tau^2$ for all time steps $t$, which implies that the algorithm doesn't $\mathsf{ABORT}$ with high probability. 

For technical reasons, in our analysis, we will use the more stringent stopping condition (than the algorithm $\mathsf{ABORT}$s) based on \eqref{eq:stopping_condition}. Namely, if any of the conditions in \eqref{eq:stopping_condition} is violated at any time $t$, we will freeze the process $x_t$ and thus $L_t^{\ASI}, L_t^{\err}$ and $Q_t$ will have zero increment onward.

\begin{abort}
\label{rule:ell_inf_to_ell_twoSTOP}
    We freeze the process $x_t$ onward if any of the three conditions in \eqref{eq:stopping_condition}, i.e., $Q_t \leq \tau^2/3$, $L_t^{\mathsf{ASI}} \leq \tau^2/6$, and $L_t^{\mathsf{err}} \leq \tau^2/6$, is violated. 
\end{abort}

Note that the modified process given by \Cref{rule:ell_inf_to_ell_twoSTOP} violates \eqref{eq:stopping_condition} at some time step if and only if the unmodified process $x_t$ does (at a possibly different time). Because of this equivalence, we will abuse notation and also refer to the modified process as $x_t$, and the quadratic and linear terms for the modified process as $Q_t$, $L_t^{\ASI}$, and $L_t^{\err}$. 

For the modified process, we always have 
\begin{align} \label{eq:modified_process_bound}
\|\varphi_t\|_2^2 \leq Q_t + L_t^{\ASI} + L_t^{\err} \leq \tau^2/3 + 3 (\tau^2/6 + \tau^2/6) = \tau^2 .
\end{align}
We will argue that with probability at most $1/\poly(n)$, the modified process violates \eqref{eq:stopping_condition}, and we will do so by taking a union bound over the probability that each one of the three conditions in \eqref{eq:stopping_condition} gets violated. 
As discussed in \Cref{subsecn:proofFramework}, the bound of $Q_t \leq O(d^2 + d \log n)\leq \tau^2/3$ with probability $1 - 1/\poly(n)$ was already shown in \cite{BG17} (see \Cref{fact:quadratic_term}). 

In Section \ref{subsubsec:boundingLinearASI} and Section \ref{subsubsec:boundingLinearErr} below, we bound $L_{t}^{\mathsf{ASI}}$ and $L_{t}^{\mathsf{err}}$ in Lemmas \ref{lem:bound_ell_infASI} and \ref{lem:bound_ellinf_error} respectively.

\subsubsection{Bounding the ASI-Guard Part $L_{t}^{\mathsf{ASI}}$}\label{subsubsec:boundingLinearASI}

We first bound the ASI-guard term $L_{t}^{\mathsf{ASI}}$ in the following lemma.

\begin{lemma}[Bounding the ASI-guard part, $\ell_\infty$ to $\ell_2$]\label{lem:bound_ell_infASI}
    Consider the algorithm in Section \ref{subsec:ell_inf-to-ell_twoALG}. For any time $t$, with probability $1 - 1/\poly(n)$, 
    \[
    L_{t}^{\mathsf{ASI}} \leq \tau^{2}/6.
    \]
\end{lemma}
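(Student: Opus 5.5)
We will treat $L_t^{\ASI}$ as a martingale with increments $dL_t^{\ASI} = \langle d\varphi_t^{\ASI}, \varphi_t\rangle$ and apply Freedman's inequality (\Cref{fact:freedman}). The key is a bound on its quadratic variation that improves on the Cauchy--Schwarz estimate \eqref{eq:linear_CauchySchwartz_BG17} by using \eqref{eq:sdp:ASI}. Because of \Cref{rule:ell_inf_to_ell_twoSTOP}, we have $\|\varphi_t\|_2^2 \le \tau^2$ for the modified process, and $dL_t^{\ASI}=0$ after freezing. The conditional mean is $\E_t[dL_t^{\ASI}] = 0$, since $\E[u_t]=0$. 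Each increment is tiny, of order $\sqrt{dt}\cdot \poly(n)$, so the range condition in Freedman's inequality is harmless.

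\textbf{Step 1 (quadratic variation via ASI).} The guard $\mathcal{P}_t$ lies in $\mathcal{I}_t$, so the vector $d\varphi_t^{\ASI} = (\langle A_i^{\mathcal{P}_t}(W_t), u_t\rangle)_{i\in[d]}\sqrt{dt}$ is a block of $E_t u_t\sqrt{dt}$. Since a principal submatrix of a matrix satisfying \eqref{eq:sdp:ASI} inherits the inequality, we get
\[
\Cov(d\varphi_t^{\ASI}) \preceq \gamma_{\ASI}\cdot \diag(\Cov(d\varphi_t^{\ASI})).
\]
Hence
\[
\E_t[(dL_t^{\ASI})^2] = \varphi_t^\top \Cov(d\varphi_t^{\ASI})\varphi_t \le \gamma_{\ASI}\sum_{i} \varphi_t(i)^2\,\E_t[(d\varphi_t^{\ASI}(i))^2].
\]
Using \eqref{eq:sdp:SI} and $\|A_i\|_\infty\le 1$, each term $\E_t[(d\varphi_t^{\ASI}(i))^2]$ is at most $10\sum_{j\in W_t} A_i(j)^2 \E_t[u_t(j)^2]\,dt \le 10\, dt\cdot \E_t\|u_t\|_2^2 \cdot O(1)$. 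The trace normalization gives $\E_t[u_t(j)^2] \le U_{jj}/\Tr(U_t) \le 10/|W_t|$. This yields $\E_t[(d\varphi_t^{\ASI}(i))^2] \lesssim dt$, so
\[
\E_t[(dL_t^{\ASI})^2] \lesssim \gamma_{\ASI}\|\varphi_t\|_2^2\,dt \le \gamma_{\ASI}\tau^2 dt.
\]

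\textbf{Step 2 (time horizon).} The total quadratic variation is $\gamma_{\ASI}\tau^2\cdot T$, where $T$ is the total time during which $\mathcal{P}$ is in the window. This is the main obstacle, since $T$ could be as large as $n$. The fix is to integrate the per-column energy instead of the time. A column $j$ accumulates $\int \E_t[u_t(j)^2]dt \le \E[x_n(j)^2]$-type energy bounded by $1$. Only the at most $10d$ columns of $W_{t^*}$ contribute after $t^*$, so summing $\sum_i\sum_{j\in W_{t^*}} A_i(j)^2\int u_t(j)^2dt$ against the weights $\varphi_t(i)^2$ gives
\[
\sum_t \E_t[(dL_t^{\ASI})^2]\lesssim \gamma_{\ASI}\tau^2\cdot d.
\]
More precisely, $\sum_{t}\sum_{j} u_t(j)^2dt = \sum_j d(x_t(j)^2)$, which is at most $|W_{t^*}|$, and this holds almost surely, not just in expectation. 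That almost-sure bound is exactly what Freedman's inequality needs. If this attempt runs into trouble with the $\diag$ weighting, the fallback is to bound $\max_i \varphi_t(i)^2\le\|\varphi_t\|_2^2$ crudely, which gives the same expression.

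\textbf{Step 3 (Freedman).} With variance proxy $\sigma^2 \lesssim \gamma_{\ASI} d\tau^2$, Freedman's inequality gives, with probability $1-1/\poly(n)$,
\[
L_t^{\ASI}\lesssim \sqrt{\gamma_{\ASI} d\tau^2\log n} = \tau\sqrt{100\, d^{3/2}\log^2 n} = O(\tau\, d^{3/4}\log n).
\]
This is at most $\tau^2/6$ provided $\tau \gtrsim d^{3/4}\log n$, which holds by the choice of $\tau$ with a large enough constant.
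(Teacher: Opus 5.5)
Your per-step variance bound (the per-column form $\E_t[(dL_t^{\ASI})^2]\lesssim \gamma_{\ASI}\tau^2\sum_{j\in\mathsf{Corr}}\E_t[u_t(j)^2]\,dt$) matches the paper. Your Step 3 arithmetic would also be fine. However, Step 2 has a genuine gap.

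\textbf{What fails in Step 2.} The identity $\sum_t\sum_j u_t(j)^2\,dt=\sum_j d(x_t(j)^2)$ is false. One has $d(x_t(j)^2)=2x_t(j)u_t(j)\sqrt{dt}+u_t(j)^2dt$, and the cross term does not vanish when summed over the fixed set $\mathsf{Corr}=W_{t^*}\cap[\mathcal{P}]$. Orthogonality $u_t\perp x_t$, where enforced, only kills it for the sum over all of $W_t$, and that sum just gives back the time horizon $t_0-t^*$ you were trying to avoid.

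More fundamentally, Freedman's inequality needs control of the \emph{predictable} quadratic variation $\sum_t\sum_{j\in\mathsf{Corr}}\E_t[u_t(j)^2]\,dt$. For a $[-1,1]$-valued martingale this is bounded by $1$ per column only in expectation. Pathwise, a column can oscillate for a long time before dying, with exponentially small but nonzero probability. So there is no almost-sure variance proxy $\gamma_{\ASI}d\tau^2$, and your claim ``this holds almost surely'' is exactly where the argument breaks. Note also that \Cref{fact:freedman} as stated cannot be applied to the driftless $L_t^{\ASI}$ itself: it requires a negative drift at least $\delta$ times the conditional second moment.

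\textbf{The missing idea.} The paper uses an energy regularization. Set $Y_t=L_t^{\ASI}-\beta_{\ASI}\sum_{j\in\mathsf{Corr}}x_t(j)^2$ with $\beta_{\ASI}=\Theta(\gamma_{\ASI}\log n)$. Then:
\begin{itemize}
\item the drift is $\E_t[dY_t]=-\beta_{\ASI}\sum_{j\in\mathsf{Corr}}\E_t[u_t(j)^2]\,dt$;
\item the second moment is $\E_t[(dY_t)^2]\lesssim(\gamma_{\ASI}\tau^2+\beta_{\ASI}^2)\sum_{j\in\mathsf{Corr}}\E_t[u_t(j)^2]\,dt$.
\end{itemize}
So \Cref{fact:freedman} applies with $\delta\asymp\beta_{\ASI}/(\gamma_{\ASI}\tau^2+\beta_{\ASI}^2)$ and $\xi=\tau^2/10$, giving a $1/\poly(n)$ tail. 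The regularizer is then paid for deterministically, since $\beta_{\ASI}\sum_{j\in\mathsf{Corr}}x_t(j)^2\le 10d\,\beta_{\ASI}=\Theta(d^{3/2}\log^2 n)\le\tau^2/15$. The drift term trades the unbounded quadratic variation for the bounded quantity $\sum_j x_t(j)^2$, which is what your almost-sure claim was implicitly trying to do.

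\textbf{An alternative repair.} You could keep the standard Freedman route, but you must first prove a high-probability bound $\sum_t\sum_{j\in\mathsf{Corr}}\E_t[u_t(j)^2]\,dt\lesssim d+\log n$. This can be done by a self-bounding argument on the martingale $\sum_t\sum_{j\in\mathsf{Corr}}x_t(j)u_t(j)\sqrt{dt}$, whose conditional variance is itself controlled by the quantity being bounded. With that in hand, your Step 3 goes through. The extra $\log n$ contributes $\tau\cdot O(d^{1/4}\log^{3/2}n)$, which is absorbed by the $d^{1/4}\log^{3/2}n$ term in $\tau$.
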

Our proof relies on the following Freedman-type  concentration inequality for super-martingales. 

\begin{fact}[Lemma 2.2 in \cite{Ban24}]  \label{fact:freedman}
Let $\{Z_t: t = 0,1, \cdots\}$ be a sequence of random variables with increments $\Delta Z_t := Z_t - Z_{t-1}$, such that $Z_0$ is deterministic and $\Delta Z_t \leq M$ for all $t \geq 1$.

If for all $t \geq 1$, we have 
\begin{align} \label{eq:freedman_drift_vs_variation}
\E_{t-1}[\Delta Z_t ] \leq - \delta\, \E_{t-1}[(\Delta Z_t)^2]
\end{align}
holds with $0 < \delta < 1/M$, where $\E_{t-1}[\cdot]$ denotes $\E[\cdot | Z_1, \cdots, Z_{t-1}]$. Then for all $\xi \geq 0$, 
\[
\p\big( Z_t - Z_0 > \xi \big) \leq \exp(- \delta \xi). \]
\end{fact}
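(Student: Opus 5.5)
We prove the tail bound in \Cref{fact:freedman} by the standard exponential supermartingale argument. The plan is to show that $Y_t := \exp(\delta (Z_t - Z_0))$ is a supermartingale with respect to the filtration generated by $Z_1, \cdots, Z_t$. Since $Y_0 = 1$ (as $Z_0$ is deterministic), this gives $\E[Y_t] \leq 1$, and Markov's inequality then yields
\[
\p(Z_t - Z_0 > \xi) = \p\big(Y_t > e^{\delta \xi}\big) \leq e^{-\delta \xi}\, \E[Y_t] \leq e^{-\delta \xi},
\]
which is the claim.

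The whole argument therefore reduces to the one-step bound $\E_{t-1}[e^{\delta \Delta Z_t}] \leq 1$. For this I will use the elementary inequality $e^{x} \leq 1 + x + x^2$, valid for all $x \leq 1$. To prove that inequality, set $g(x) = 1 + x + x^2 - e^x$. Then $g(0) = 0$ and $g'(x) = 1 + 2x - e^x$. For $x \leq 0$ we have $e^x \geq 1 + x$, so $g'(x) \leq x \leq 0$. For $0 \leq x \leq 1$, the function $g'$ is concave with $g'(0) = 0$ and $g'(1) = 3 - e > 0$, so $g' \geq 0$ on $[0,1]$. Hence $g$ is decreasing on $(-\infty, 0]$ and increasing on $[0,1]$, so $g \geq g(0) = 0$ on $(-\infty, 1]$.

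Now apply this with $x = \delta \Delta Z_t$. We have $x \leq \delta M < 1$, because $\Delta Z_t \leq M$ and $\delta < 1/M$. Note that $\Delta Z_t$ may be arbitrarily negative, but the inequality holds on all of $(-\infty,1]$, so only the one-sided bound is needed. Taking conditional expectations and using the drift condition \eqref{eq:freedman_drift_vs_variation},
\[
\E_{t-1}\big[e^{\delta \Delta Z_t}\big] \leq 1 + \delta\, \E_{t-1}[\Delta Z_t] + \delta^2\, \E_{t-1}[(\Delta Z_t)^2] \leq 1 - \delta^2 \E_{t-1}[(\Delta Z_t)^2] + \delta^2 \E_{t-1}[(\Delta Z_t)^2] = 1.
\]
Thus $\E_{t-1}[Y_t] = Y_{t-1}\, \E_{t-1}[e^{\delta \Delta Z_t}] \leq Y_{t-1}$. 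By the tower property this gives $\E[Y_t] \leq 1$, and the Markov step above completes the proof.

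The main obstacle is the elementary inequality, specifically making sure its constant is exactly $1$ on the quadratic term. With that constant, the drift $-\delta\,\E_{t-1}[(\Delta Z_t)^2]$ cancels the second-order term exactly, with no slack. If the available inequality had a worse constant (say $e^x \leq 1 + x + c x^2$ for some $c > 1$), the stated hypothesis would not suffice, and one would have to use a smaller exponent $\delta' = \delta/c$ and accept a weaker tail.
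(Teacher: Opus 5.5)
Your proof is correct, and it is the standard argument behind Lemma 2.2 of \cite{Ban24}, which the paper cites without proof. That argument shows $\E_{t-1}[e^{\delta \Delta Z_t}] \leq 1$ from $e^x \leq 1 + x + x^2$ (for $x \leq 1$) and the drift condition, so that $\exp(\delta(Z_t - Z_0))$ is a supermartingale, and then applies Markov's inequality. Proving the elementary inequality on all of $(-\infty, 1]$ rather than just $[-1,1]$ is exactly what the paper's one-sided hypothesis $\Delta Z_t \leq M$ requires.
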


\begin{proof}[Proof of \Cref{lem:bound_ell_infASI}]
Fix a time step $t_0 \geq 0$, we bound the probability that $L_{t_0}^{\mathsf{ASI}}$ exceeds $\tau^{2}/6$.

 Recall that $\varphi_t$ gets its first non-zero increment after the first time $t^{\star}$ when the sliding window $W_{t^*}$ contains $\mathcal{P}$. We may assume that $t_0 \geq t^*$, as otherwise $L_{t_0}^{\ASI}$ will be $0$.  Following \cite{BG17}, we denote $\mathsf{Corr} = \mathsf{Corr}^{\mathcal{P}}:= W_{t^*} \cap [\mathcal{P}]$ the columns in $W_{t^*}$ with index at most $\mathcal{P}$. Note that only columns in $\mathsf{Corr}$ contribute non-zero discrepancy to $\varphi_{t}$.
    Similar to the analysis in \cite{BG17,Ban24,BJ26}, we define the following regularized ASI-guard term
    \begin{align*}
Y_{t} := L_{t}^{\mathsf{ASI}} - \beta_{\mathsf{ASI}}\sum_{j\in \mathsf{Corr}}x_{t}^{2}(j), 
    \end{align*}
where we set $\beta_{\mathsf{ASI}} = \Theta(\gamma_{\mathsf{ASI}}\log n)$ for a sufficiently large constant. 
Below, we will show that the increment $dY_t$ satisfies condition \eqref{eq:freedman_drift_vs_variation} with a suitably chosen factor $\delta$. Here, we naturally map time steps $t \in [0,n]$ to $\{0,1, \cdots, n/d t\}$.

If at any time $t$, the modified process $x_t$ is frozen due to the condition \eqref{eq:stopping_condition} being violated prior to $t$, then $L_t^{\mathsf{ASI}}$, and hence $Y_t$, must have been zero increment at time $t$. Thus the increment $d Y_t = 0$ trivially satisfies the condition \eqref{eq:freedman_drift_vs_variation} in \Cref{fact:freedman} for any $\delta$ in this case.

In the other case where the modified process $x_t$ has not been frozen up till time $t$. We will show that condition \eqref{eq:freedman_drift_vs_variation} in \Cref{fact:freedman} holds for some factor $\delta > 0$ by computing the first and second moments of $d Y_t$ in the following.

Note that the increment of $Y_{t}$ is
\begin{align*}
    dY_{t} = \langle d\varphi_{t}^{\mathsf{ASI}}, \varphi_{t}\rangle -2\beta_{\mathsf{ASI}}\sum_{j \in \mathsf{Corr}}u_{t}(j)x_{t}(j)\sqrt{dt}  - \beta_{\mathsf{ASI}}\sum_{j \in \mathsf{Corr}}u_{t}(j)^2 \,dt.
\end{align*}

In what follows, all expectations are conditioned on the filtration $\mathcal{F}_t$, or outcome of randomness, up to time $t$, and is denoted as $\mathbb{E}_t[\cdot] = \mathbb{E}[\cdot \mid \mathcal{F}_t]$. 
Since $\mathbb{E}_t[u_{t}] = 0$,
the first moment of $dY_{t}$ is
\begin{align}
\mathbb{E}_t[dY_{t}] = -\beta_{\mathsf{ASI}}\,\sum_{j \in \mathsf{Corr}}\mathbb{E}_t[u_{t}(j)^2] dt\label{eq:LinearASIFirstMoment}.
\end{align}

The second moment of $d Y_t$ is bounded as
\begin{align*}
\mathbb{E}_t[(dY_{t})^{2}] &\lesssim \mathbb{E}_t\langle d\varphi_{t}^{\mathsf{ASI}}, \varphi_{t}\rangle^{2} + \beta_{\mathsf{ASI}}^{2}\cdot \mathbb{E}_t\bigl(\sum_{j \in \mathsf{Corr}}u_{t}(j)x_{t}(j)\bigr)^{2}dt \\
&\lesssim \mathbb{E}_t\langle d\varphi_{t}^{\mathsf{ASI}}, \varphi_{t}\rangle^{2} + \beta_{\mathsf{ASI}}^{2}\,\sum_{j \in \mathsf{Corr}}\mathbb{E}_t[u_{t}^{2}(j)]dt,
\end{align*}
where we drop lower order terms with scale $O((dt)^{3/2})$, use $(a+b)^2 \leqslant 2a^2 + 2b^2$ in the first inequality, and use $O(1)$-spectral independence of $u_t$ and that $|x_t(j)| \leq 1$ in the second. We bound the first term using affine spectral independence: 
\begin{align*}
\mathbb{E}_t\langle d\varphi_{t}^{\mathsf{ASI}}, \varphi_{t}\rangle^{2} 
&\leq \gamma_{\mathsf{ASI}}\,\sum_{i \in [d]}\varphi_{t}(i)^2 \, \E_t\big[ d\varphi_{t}^{\mathsf{ASI}}(i)^2 \big]\\
&= \gamma_{\mathsf{ASI}}\cdot\sum_{i \in [d]}\varphi_{t}(i)^2 \cdot \E_t \Big(\sum_{j \in \mathsf{Corr}}A_{i}(j) u_{t}(j)\Big)^{2} dt \lesssim \gamma_{\mathsf{ASI}}\,\lVert \varphi_{t}\rVert_{2}^{2}\sum_{j \in \mathsf{Corr}}\mathbb{E}_t[u_{t}^{2}(j)] dt,
\end{align*}
where the first step follows via $\gamma_{\mathsf{ASI}}$-affine spectral independence of $u_{t}$, and the final step follows by $O(1)$-spectral independence of $u_{t}$ and using $|A_{i}(j)| \leq 1$ for all $i \in [d],j \in [n]$.

Combining, we get the following bound on the second moment of $d Y_t$, 
\begin{align}
    \mathbb{E}_t[(dY_{t})^{2}] \lesssim (\gamma_{\mathsf{ASI}}\,\lVert \varphi_{t}\rVert_{2}^{2} + \beta_{\mathsf{ASI}}^{2})\sum_{j \in \mathsf{Corr} }\mathbb{E}_t[u_{t}^{2}(j)]\,dt \leq (\gamma_{\mathsf{ASI}}\,\tau^{2} + \beta_{\mathsf{ASI}}^{2})\sum_{j \in \mathsf{Corr}}\mathbb{E}_t[u_{t}^{2}(j)]\,dt,\label{eq:LinearASISecondMoment}
\end{align}
where we bound $\lVert \varphi_{t}\rVert_{2}^{2}$ by $\tau^{2}$ using \eqref{eq:modified_process_bound}.

Combining \eqref{eq:LinearASIFirstMoment} and \eqref{eq:LinearASISecondMoment}, we get condition \eqref{eq:freedman_drift_vs_variation} with 
\[
\mathbb{E}_{t}[dY_{t}] \lesssim -\Big(\frac{\beta_{\mathsf{ASI}}}{\tau^{2}\gamma_{\mathsf{ASI}} + \beta_{\mathsf{ASI}}^{2}}\Big)\cdot \mathbb{E}_{t}[(dY_{t})^{2}]. 
\]
Then applying Freedman-type inequality (\Cref{fact:freedman}) with deviation $\xi = \tau^2/10$, we obtain 
\begin{align} \label{eq:ellinf_to_ell2_ASI_tail}
    \Pr(Y_{t_0} - Y_{t^*} \geq \xi) \leq \exp\Big(\frac{-\xi \,\beta_{\mathsf{ASI}}}{\beta_{\mathsf{ASI}}^{2} + \gamma_{\mathsf{ASI}}\,\tau^{2}} \Big) .
\end{align}
To bound the exponent in \eqref{eq:ellinf_to_ell2_ASI_tail}, note that since $\xi = \tau^2/10 = \Omega(\sqrt{d} \log^3 n)$ and $\beta_{\ASI} = \Theta(\gamma_{\ASI} \log n) = \Theta(\sqrt{d} \log^2 n)$ (as $\gamma_{\ASI} = 100 \sqrt{d }\log n$), the first term is bounded as $(\xi \beta_{\ASI})/\beta_{\ASI}^2 = \Omega(\log n)$. 
The second term can be bounded as 
\[
\frac{\xi \beta_{\ASI}}{\gamma_{\ASI} \tau^2} = \frac{(\tau^2/10) \cdot \Theta(\gamma_{\ASI} \log n)}{\gamma_{\ASI} \tau^2} = \Theta(\log n) .
\]
Thus as long as the constant in $\beta_{\ASI}$ is big enough and the constant in $\xi$ is even (much) larger, the tail probability in \eqref{eq:ellinf_to_ell2_ASI_tail} is at most $1/\poly(n)$. 

Finally, to compare the target event of $L_{t_0}^{\ASI} \geq \tau^2/6$ with the event $Y_{t_0} - Y_{t^*} \geq \xi$ in \eqref{eq:ellinf_to_ell2_ASI_tail}, note that 
\begin{align*}
Y_{t_0} - Y_{t^*} 
& = L_{t_0}^{\ASI} - \beta_{\ASI} \sum_{j \in \mathsf{Corr}} x_{t_0}(j)^2 - \Big(0 -  \beta_{\ASI} \sum_{j \in \mathsf{Corr}} x_{t^*}(j)^2\Big) \\
& = L_{t_0}^{\ASI}- \beta_{\ASI} \sum_{j \in \mathsf{Corr}} (x_{t_0}(j)^2 - x_{t^*}(j)^2) 
\ \geq \ L_{t_0}^{\ASI} - \beta_{\ASI} (10d)  \ \geq \  L_{t_0}^{\mathsf{ASI}} - \tau^2/15 ,
\end{align*}
where we used that $\beta_{\ASI} d = \Theta(d^{3/2} \log^2 n)$ and $\tau^2 = \Omega(d^{3/2} \log^2 n)$ (with a much larger constant). 
Consequently, the event $L_{t_0}^{\ASI} \geq \tau^2 / 6$ implies that $Y_{t_0} - Y_{t^*} \geq \tau^2/10$, and thus,
\[
\Pr \big( L_{t_0}^{\ASI} \geq \tau^2 / 6 \big) \leq \Pr(Y_{t_0} - Y_{t^*} \geq \xi) \leq 1/\poly(n) .
\]
Taking a union bound over all time steps $t_0$ completes the proof of the lemma. 

\end{proof}

\subsubsection{Bounding the Error Part $L_{t}^{\mathsf{err}}$}\label{subsubsec:boundingLinearErr}

Next, we show how to bound the error term $L_t^{\err}$ in the following lemma. 

\begin{lemma}[Bounding the error part, $\ell_\infty$ to $\ell_2$] \label{lem:bound_ellinf_error}
Consider the algorithm in Section \ref{subsec:ell_inf-to-ell_twoALG}. With probability $1 -1/\poly(n)$, we have 
\[
L_t^{\err} \leq \tau^{2}/6.
\]
\end{lemma}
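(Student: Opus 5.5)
We mirror the proof of \Cref{lem:bound_ell_infASI}, replacing affine spectral independence by the bound on the error set from \Cref{prop:boundASIErrorSet}. Fix a time $t_0 \geq t^*$ and define the regularized error term
\[
Z_t := L_t^{\err} - \beta_{\err} \sum_{j \in \Error^{\mathcal{P}}_t} x_t(j)^2 ,
\]
where $\beta_{\err}$ is chosen below. More precisely, the potential charges $x_t(j)^2$ only for columns $j$ that are error columns at time $t$. Since a column can join the error set but never leave it, we implement this by summing over the current error set $\Error_t^{\mathcal{P}}$. This set only grows, so adding columns decreases $Z_t$ and preserves the supermartingale structure.

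When the process is frozen, $dZ_t = 0$. Otherwise $dL_t^{\err} = \langle d\varphi_t^{\err}, \varphi_t\rangle$ has zero mean, so the drift is
\[
\E_t[dZ_t] \leq -\beta_{\err}\sum_{j \in \Error_t}\E_t[u_t(j)^2]\,dt .
\]
For the second moment we use Cauchy--Schwarz as in \eqref{eq:linear_CauchySchwartz_BG17}, since no independence is available here:
\[
\E_t[(dL_t^{\err})^2] \leq \|\varphi_t\|_2^2\, \E_t\|d\varphi_t^{\err}\|_2^2 .
\]
Using \eqref{eq:sdp:SI}, $|A_i(j)|\le 1$ and summing over the $d$ rows, we get
\[
\E_t\|d\varphi_t^{\err}\|_2^2 \lesssim d\sum_{j\in \Error_t}\E_t[u_t(j)^2]\,dt .
\]
Together with $\|\varphi_t\|_2\le\tau$ from \eqref{eq:modified_process_bound}, and the regularizer's variance bounded by $\beta_{\err}^2\sum_{j\in\Error_t}\E_t[u_t(j)^2]dt$ as before, this gives
\[
\E_t[(dZ_t)^2] \lesssim (d\tau^2 + \beta_{\err}^2)\sum_{j\in\Error_t}\E_t[u_t(j)^2]\,dt .
\]
\Cref{fact:freedman} then yields
\[
\Pr(Z_{t_0}-Z_{t^*}\ge \xi)\le \exp\!\Big(-\frac{\xi\beta_{\err}}{d\tau^2+\beta_{\err}^2}\Big).
\]

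To convert this into a bound on $L^{\err}$, note that
\[
L_{t_0}^{\err} \leq Z_{t_0} - Z_{t^*} + \beta_{\err}|\Error^{\mathcal{P}}| ,
\]
and $|\Error^{\mathcal{P}}| \lesssim \gamma_{\ASI}^{-1} d\log^2 n = \sqrt d \log n/100$ by \Cref{prop:boundASIErrorSet}. Take $\xi = \tau^2/12$ and $\beta_{\err} = \Theta(\tau\sqrt{d\log n})$, which balances the two terms in the exponent. Then $\xi\beta_{\err}/(d\tau^2) \gtrsim \tau^3\sqrt{d\log n}/(d\tau^2)$, which is $\Omega(\log n)$ provided $\tau\gtrsim\sqrt{d\log n}$; this holds since $\tau\gtrsim d$ and $d\ge \log n$ is implicit in the regime (otherwise the $d^{1/4}\log^{3/2}n$ term dominates anyway). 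Also $\xi/\beta_{\err}\gtrsim \tau/\sqrt{d\log n}\gtrsim \log n$ needs $\tau\gtrsim \sqrt d\log^{3/2} n$. The regularization cost is
\[
\beta_{\err}|\Error| \lesssim \tau\sqrt{d\log n}\cdot\sqrt d\log n = \tau d\log^{3/2}n ,
\]
and we need this to be at most $\tau^2/12$, i.e.\ $\tau\gtrsim d\log^{3/2} n$.

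This last requirement is too strong, and it is the main obstacle: the crude Cauchy--Schwarz with the factor $d$ wastes too much. The fix is to not sum over rows with $d$. Since $|d\varphi_t^{\err}(i)|$ comes from only $|\Error_t|$ columns, we will instead bound
\[
\E_t\langle d\varphi^{\err}_t,\varphi_t\rangle^2 = \E_t\Big(\sum_{j\in\Error_t}u_t(j)\langle A^{(j)},\varphi_t\rangle\Big)^2 \lesssim \sum_{j\in\Error_t}\E_t[u_t(j)^2]\langle v_j,\varphi_t\rangle^2\,dt \le d\tau^2\sum_{j\in\Error_t}\E_t[u_t(j)^2]\,dt
\]
by spectral independence applied column-wise, using $\|v_j\|_2\le\sqrt d$. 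This is the same bound as before. The genuine gain must therefore come from balancing parameters exactly. Re-optimizing, we take
\[
\beta_{\err}\approx\tau^2/(|\Error|\cdot 12)
\]
and require $\xi\beta_{\err}/(d\tau^2)\gtrsim\log n$. This gives $\tau^2\gtrsim d|\Error|\log n \approx d^{3/2}\log^2 n$, i.e.\ $\tau\gtrsim d^{3/4}\log n$, which is exactly the middle term of $\tau$. The remaining condition $\xi/\beta_{\err}\approx|\Error|\gtrsim\log n$ is fine, or it is absorbed into the $d^{1/4}\log^{3/2}n$ term when $|\Error|$ is small. So the plan is to set $\beta_{\err} = c\,\tau^2/|\Error|_{\max}$ with $|\Error|_{\max} = (s/2)\log n$, check both exponent terms against the three summands of $\tau$, and union bound over $t_0$.
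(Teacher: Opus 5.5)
Your final plan is correct and is essentially the paper's proof. It uses the same regularized process $Z_t = L_t^{\err} - \beta_{\err}\sum_{j\in\Error_t} x_t(j)^2$, the same Cauchy--Schwarz second-moment bound $(d\tau^2+\beta_{\err}^2)\sum_{j\in\Error_t}\E_t[u_t(j)^2]\,dt$, Freedman's inequality, and the same conversion via $|\Error^{\mathcal{P}}| = O(\sqrt{d}\log n)$. The only difference is that the paper fixes $\beta_{\err}=\Theta(d\log n)$, which coincides with your $c\,\tau^2/|\Error|_{\max}$ when $\tau\approx d^{3/4}\log n$; your first attempt failed only because of the choice $\beta_{\err}=\Theta(\tau\sqrt{d\log n})$, not because the Cauchy--Schwarz step is too lossy.
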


\begin{proof}[Proof of \Cref{lem:bound_ellinf_error}]

Fix $t_{0} \geq 0$, we bound the probability that $L_{t_{0}}^{\mathsf{err}}$ exceeds $\tau^{2}/6$. Again we may assume that $t_0 \geq t^*$, where $t^*$ is the time step when prefix $\mathcal{P}$ first enters the time window.

Recall from \Cref{subsec:globalIntervalTree} that $ \Error^{\mathcal{P}}$ (resp. $\Error_t^{\mathcal{P}}$) is the set of columns that contribute to the error term $L_t^{\err}$ throughout the algorithm (resp. up till time $t$).
The set $\Error_t$, as a (random) function of $t$, is monotonically non-decreasing with several ``jumps'', corresponding to the times when the interval containing $\mathcal{P}$ merges in the $\mathcal{T}_{\mathsf{Global}}$ data structure. 

We drop the superscript $\mathcal{P}$ below. 

We define the following regularized process for $L_{t}^{\mathsf{err}}$, 
\begin{align*}
    Z_t := L_t^{\err} - \beta_{\err} \sum_{j \in \Error_t} x_t(j)^2,
\end{align*}
where $\beta_{\mathsf{err}} := \Theta(d\log n)$ for a large enough constant.
Note that whenever $\Error_t$ has a ``jump'', 
$Z_t$ decreases which is only helpful for us (and thus can be safely ignored henceforth). 

Similar to the proof of \Cref{lem:bound_ell_infASI}, the condition \eqref{eq:freedman_drift_vs_variation} is trivially satisfied if the modified process was frozen prior to time $t$. 
If the process was not frozen, the increment of $Z_t$ is given by
\begin{align*}
    d Z_t = \langle d\varphi_{t}^{\err}, \varphi_{t}\rangle - 2 \beta_{\err} \sum_{j \in \Error_t} u_t(j) x_t(j) \sqrt{dt} -  \beta_{\err} \sum_{j \in \Error_t} u_t(j)^2 d t . 
\end{align*}
Since both $d \varphi_t^{\err}$ and $u_t$ are mean-zero, the first moment of $d Z_t$ is:
\begin{align}
\mathbb{E}_t [dZ_{t} ] = -  \beta_{\err}\, \sum_{j \in \Error_t} \E_t [u_t(j)^2] d t. 
    \label{eq:linear_term-error}
\end{align}
The second moment can be bounded as
\begin{align*}
\mathbb{E}_t \big[ (dZ_{t})^2 \big] 
&= \mathbb{E}_t \Big(\langle d\varphi_{t}^{\err},\varphi_{t}\rangle - 2 \beta_{\err} \sum_{j \in \Error_t} u_t(j) x_t(j) \sqrt{dt} \Big)^{2} \nonumber \\
& \lesssim \mathbb{E}_t \big[\langle d\varphi_{t}^{\err},\varphi_{t}\rangle^{2} \big] + \beta_{\err}^{2} \cdot \E_t \Big(\sum_{j \in \Error_t} u_t(j) x_t(j) \Big)^2 dt \nonumber \\
& \lesssim \E_t \big[\|d\varphi_{t}^{\err}\|_2^2 \big] \cdot \|\varphi_t\|_2^2 + \beta_{\err}^{2} \cdot \sum_{j \in \Error_t} \E[u_t(j)^2]  dt,  
\end{align*}
where in the last inequality, we apply Cauchy-Schwartz inequality for the first term, and for the second term, we use that the vector $u_t$ is $O(1)$-spectrally independent and that each $|x_t(j)| \leq 1$. To bound $\E_t \big[\|d\varphi_{t}^{\err}\|_2^2\bigr]$ in the first term,  
we note that $d \varphi_t^{\err}(i) = \sum_{j \in \Error_t} A_i(j) u_t(j) \sqrt{d t}$. Then using that $u_t$ is $O(1)$-spectrally independent and each $|A_i(j)| \leq 1$, we can bound 
\begin{align*}
\E_t \big[\|d\varphi_{t}^{\err}\|_2^2 \big] 
& = \sum_{i \in [d]} \E_t \big[ (d\varphi_{t}^{\err}(i)) ^2 \big]    
\lesssim \sum_{i \in [d]} \sum_{j \in \Error_t} A_i(j)^2 \cdot \E_t[u_t(j)^2] dt \leq d\sum_{j \in \Error_t} \, \E_t[u_t(j)^2] dt.
\end{align*}
Plugging this into the above gives the bound
\begin{align} \label{eq:quadratic_term-error}
\mathbb{E}_t \big[ (dZ_{t})^2 \big] \leq \big(d \|\varphi_t\|_2^2 + \beta_\err^2 \big) \cdot \sum_{j \in \Error_t} \E_t[u_t(j)^2 ] dt.
\end{align}
Combining \eqref{eq:linear_term-error} and \eqref{eq:quadratic_term-error} gives us condition \eqref{eq:freedman_drift_vs_variation} with  
\[
\E_t[dZ_{t}] \leq - \Big(\frac{\beta_{\mathsf{err}}}{\tau^{2}d+ \beta_\err^2 }\Big) \cdot \E_t[(d Z_t)^2]. 
\]
Applying Freedman-type inequality (\Cref{fact:freedman}) for deviation $\xi = \tau^{2}/10$, we get 
\begin{align}
\Pr(Z_{t_0} - Z_{t^*} \geq \xi) \leq \exp \Big(\frac{-\beta_{\mathsf{err}}\,\xi}{\beta_{\mathsf{err}}^{2} + \tau^{2}d} \Big) \leq 1/\poly(n) ,
    \label{eq:TailBoundLinearASI}
\end{align}
where the last inequality is obtained by using $\beta_{\err} = \Theta(d \log n)$ (and thus $(\beta_{\err} \xi)/(\tau^2 d) = \Omega(\log n)$), and that $\xi = \tau^2/10 = \Omega(d^{3/2} \log^2 n)$ (which implies that $(\beta_{\err} \xi)/\beta_{\err}^2 = \Omega(d^{1/2} \log n)$). 

Finally, we note that the event $L_{t_0}^{\err} \geq \tau^2/6$ implies the event $Z_{t_0} - Z_{t^*} \geq \tau^2/10$, because
\begin{align*}
  Z_{t_0} - Z_{t^*} 
  & = L_{t_0}^{\err} - \beta_{\err} \sum_{j \in \Error_{t_0}} \big(x_{t_0}(j)^2 - x_{t^*}(j)^2 \big) \geq L_{t_0}^{\err} - \beta_{\err} \cdot |\Error| \\
  & \geq L_{t_0}^{\err} - \beta_{\err} \cdot O(\gamma_{\ASI}^{-1} d \log^2 n) = L_{t_0}^{\err} - O(1) \cdot \frac{d^2 \log^3 n}{\sqrt{d} \log n}  \geq L_{t_0}^{\err} - \tau^2/15 ,
\end{align*}
where the first inequality in the second line uses the error set bound of $|\Error| \leq O(\gamma_{\ASI}^{-1} d \log^2 n)$ in \Cref{prop:boundASIErrorSet}, the equality there uses our parameter setting $\beta_{\err} = \Theta(d \log n)$ and $\gamma_{\ASI} = 100 \sqrt{d} \log n$, and the final inequality uses $\tau^2 = \Omega(d^{3/2} \log^2 n)$ (with a large enough constant). 

Combining everything, we get $\Pr(L_{t_0}^{\err} \geq \tau^2/6) \leq 1/\poly(n)$ and this completes the proof.  
\end{proof}

\subsubsection{Putting Things Together}
\label{subsec:ellinf_to_ell2_putting_together}

Now we are ready to put things together and prove \Cref{thm:ell_inf-to-ell_2}. 

\begin{proof}[Proof of \Cref{thm:ell_inf-to-ell_2}]
We run the algorithm in \Cref{subsec:ell_inf-to-ell_twoALG}, and we have already shown there that the SDP is always feasible. 
Combining \Cref{fact:quadratic_term}, \Cref{lem:bound_ell_infASI} and \Cref{lem:bound_ellinf_error}, the probability that any of the three (modified) processes $Q_{t}, L_{t}^{\mathsf{ASI}}$ and $L_{t}^{\mathsf{err}}$ violates  \Cref{rule:ell_inf_to_ell_twoSTOP} and \eqref{eq:stopping_condition} is at most $1/\poly(n)$. 
As this stopping condition implies the target discrepancy bound $\|\varphi_t^{\mathcal{P}}\|_2 \leq \tau$,  
by taking a union bound over all time steps and prefixes $\mathcal{P}$, the algorithm never outputs $\mathsf{ABORT}$ with probability at least $1 - 1/\poly(n)$. 
Finally, rounding the coloring of each dead column to $\pm 1$ incurs at most $O(1)$ additive $\ell_2$ discrepancy. This proves the theorem. 
\end{proof}

\section{$\ell_{2}$ to $\ell_{2}$ Prefix Discrepancy}
\label{sec:ell2-to-ell2}

In this section, we prove Theorem \ref{thm:main-ell2-to-ell2}, which is restated below.

\elltwotoelltwo*

We give our algorithm for Theorem \ref{thm:main-ell2-to-ell2} in Section \ref{subsec:ell2-to-ell2ALG} and its analysis in Section \ref{subsec:ell2-to-ell2Analysis}. 
The proof of \Cref{thm:main-ell2-to-ell2} will appear in \Cref{subsec:ell2_to_ell2_putting_together}. 
For our analysis, we also need to use an $\ell_\infty$ bound on the prefix discrepancy $\lVert \varphi_{t}^{\mathcal{P}}\rVert_{\infty}$ for every prefix $\mathcal{P} \in [n]$, which will be given in \Cref{subsec:ell2-toellInfPrefixDiscrepancy}.

\subsection{Algorithm}
\label{subsec:ell2-to-ell2ALG}

Fix $\gamma_{\mathsf{ASI}} = 100\sqrt{d}\,(\log n)^{-1/2}$, and let $\tau := \Theta(\sqrt{d} + d^{1/4}\log^{7/4}n)$ be the target discrepancy bound in Theorem~\ref{thm:main-ell2-to-ell2} (with a large enough constant). Also set $\lambda := \Theta(\log^{3/2}n)$ to be our target $\ell_{\infty}$ prefix discrepancy bound (see \Cref{thm:ell2-to-ellinfMainThm}). Analogous to Algorithm \ref{subsec:ell_inf-to-ell_twoALG}, our algorithm here follows the framework in Sections~\ref{subsec:basic_framework} and \ref{subsecn:updateFramework}, where it uses the $\GlobalIntervalTree$ data structure from Section~\ref{subsec:globalIntervalTree} to maintain a set of at most $\gamma_{\mathsf{ASI}}$ ASI-guarded prefixes $\mathcal{I}_{t} \subseteq W_{t}$.

The main difference from \Cref{subsec:ell_inf-to-ell_twoALG} is that our algorithm also uses a slightly modified version of $\GlobalIntervalTree$, denoted as $\GlobalIntervalTree^\infty$, to choose a subspace $H_t$ to enforce the blocking constraints \eqref{eq:sdp:block-W}. This is only for controlling the $\ell_{\infty}$ prefix discrepancy, and we defer its detail to Section \ref{subsec:ell2-toellInfPrefixDiscrepancy}.

Formally, at each time step $t$, the algorithm does the following.
\begin{enumerate}
    \item If there exists $\mathcal{P} \in [n]$, for which either $\lVert \varphi_{t}^{\mathcal{P}}\rVert_{2}$ exceeds the target $\ell_2$ prefix discrepancy bound $\tau$, or $\lVert \varphi_{t}^{\mathcal{P}}\rVert_{\infty}$ exceeds the $\ell_\infty$ prefix discrepancy bound $\lambda$, it outputs $\mathsf{ABORT}$.
    
    \item Otherwise, it solves the SDP in \eqref{eq:FrameworkSDP}, where $H_{t}$ is the subspace output by $\ModifiedGlobalIntervalTree$ and the matrix $E_{t} \in \mathbb{R}^{d|\mathcal{I}_{t}|\times W_{t}}$ contains a row $A_{i}^{\mathcal{P}}(W_{t})$ for every ASI-guarded prefix $\mathcal{P} \in \mathcal{I}_{t}$ and $i \in [d]$.
\end{enumerate}

Analogous to Algorithm \ref{subsec:ell_inf-to-ell_twoALG}, the row dimension of $E_{t}$ is at most $0.1\gamma_{\mathsf{ASI}}|W_{t}|$ (as we use the same parameters in $\GlobalIntervalTree$). By Proposition \ref{prop:BoundErrorSetInfty}, the dimension of $H_{t}$ is at most $0.1|W_{t}|$. This satisfies the conditions of Fact \ref{fact:sdpFeasibility} and hence SDP \eqref{fact:sdpFeasibility} is always feasible.   In the remainder, we show that the algorithm does not $\mathsf{ABORT}$ with high probability.

\subsection{Analysis}
\label{subsec:ell2-to-ell2Analysis}

Fix any prefix $\mathcal{P} \in [n]$. We will show that, with probability $1 - 1/\mathsf{poly}(n)$, the $\ell_{2}$ prefix discrepancy is bounded as $\lVert \varphi_{t}^{\mathcal{P}}\rVert_{2} \leq \tau$ and the $\ell_\infty$ prefix discrepancy is bounded as $\lVert \varphi_{t}^{\mathcal{P}}\rVert_{\infty} \leq \lambda$ at any time step $t$. Since the algorithm runs for $\mathsf{poly}(n)$ steps and there are $n$ prefixes, a union bound across all time steps and prefixes completes the proof. In the following, we drop the superscript $\mathcal{P}$ whenever it is clear from the context.

\smallskip
\noindent \textbf{Road Map for the Analysis.}
Following the meta-analysis in Section \ref{subsecn:proofFramework} and the analysis in Section \ref{subsec:ell_inf-to-ell_twoAnalysis}, we decompose the
change of $\ell_{2}$ prefix discrepancy $d \lVert \varphi_{t}\rVert_{2}^{2}$ as,
\[
d\lVert \varphi_{t} \rVert_{2}^{2} = dQ_{t} + 2(dL_{t}^{\mathsf{ASI}} + dL_{t}^{\mathsf{err}}).
\]

Our goal will be to show for any time $t \in [0,n]$, with probability $1 - 1/\mathsf{poly}(n)$,
\begin{align} \label{eq:ell2stopping_condition}
Q_t \leq \tau^2/3 \ , \quad L_{t}^{\ASI} \leq \tau^2/6 \ , \quad L_{t}^{\err} \leq \tau^2/6   \ \text{ and } \ \ \lVert\varphi_{t}\rVert_{\infty} \leq \lambda ,
\end{align}
where the extra condition $\lVert\varphi_{t}\rVert_{\infty} \leq \lambda$ is needed for controlling the ASI-guard part $L_t^{\ASI}$. 

Once the above is shown, then by a union bound, one has $\lVert \varphi_{t} \rVert_{2}^{2} \leq \tau^{2}$ and $\lVert \varphi_{t} \rVert_{\infty} \leq \lambda$ for all time steps $t$, which implies that the algorithm doesn't $\mathsf{ABORT}$ with high probability. Analogous to Section \ref{subsec:ell2-to-ell2Analysis}, we use the following stopping condition for our analysis.

\begin{abort}
We freeze $x_{t}$ onward if any of the four conditions in \eqref{eq:ell2stopping_condition}, i.e. $Q_{t} \leq \tau^{2}/3$, $L_{t}^{\mathsf{ASI}} \leq \tau^{2}/6$, $L_{t}^{\mathsf{err}} \leq \tau^{2}/6$, and $\lVert \varphi_{t}\rVert_{\infty} \leq \lambda$ is violated.
\label{abort:ell2ABORT}
\end{abort}

The bound for $Q_{t} \leq O(d + \log n) \leq \tau^{2}/3$ with probability $1 - 1/\mathsf{poly}(n)$ was already shown in \cite{BG17} (see Fact \ref{fact:quadratic_term}). 
In Section \ref{subsubsec:ell2toell2ASI} and Section \ref{subsubsec:ell2toell2Err}, we bound $L_{t}^{\mathsf{ASI}}$ and $L_{t}^{\mathsf{err}}$ in Lemmas \ref{lem:ell2LinearASI} and Lemma \ref{lem:bound_ell2error} respectively.
Finally, in \Cref{subsec:ell2-toellInfPrefixDiscrepancy}, we will describe the modified data structure $\GlobalIntervalTree^\infty$ that allows us to obtain the following bound on $\lVert \varphi_{t} \rVert_{\infty}$.\footnote{We remark that in the current setting, there are algorithms (e.g., \cite{ALS21}) that achieve the better $\ell_{\infty}$ prefix discrepancy bound of $O(\log n)$. However, these algorithms and analysis are quite different from ours, and it is unclear how to achieve the same improvement for the framework that we use here.}

\begin{restatable}[$\ell_\infty$ prefix discrepancy bound]{lemma}{ellinfboundMainThm}
    \label{thm:ell2-to-ellinfMainThm}
Consider the algorithm in Section \ref{subsec:ell2-to-ell2ALG} and the data structure in \Cref{subsec:ell2-toellInfPrefixDiscrepancy}. For any time $t$, and every prefix $\mathcal{P} \in [n]$, with probability $1 - 1/\mathsf{poly}(n)$,
\[
\lVert \varphi_{t}^{\mathcal{P}}\rVert_{\infty} \leq O(\log^{3/2}n).
\]
\end{restatable}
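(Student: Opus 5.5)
We bound each row $\varphi_t^{\mathcal{P}}(i)$ separately, with failure probability $1/\poly(n)$, and then union bound over the $d \le n^2$ rows. The mechanism is the same prefix-versus-guard decomposition as in \Cref{subsecn:proofFramework}, but with blocking constraints in place of ASI. The data structure $\ModifiedGlobalIntervalTree$ runs the merging of \Cref{mergerule:global_binary_tree} with a base-interval size $s_\infty = \Theta(1)$ instead of $s$. For every current interval endpoint $\mathcal{P}' \in \mathcal{I}^\infty_t$ and every row $i$, we would like to block $A_i^{\mathcal{P}'}(W_t)$ in $H_t$. That is too many vectors, since $d|\mathcal{I}_t^\infty|$ can greatly exceed $0.1|W_t|$.

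So the first step is to block in a more economical way. Fix a guard $\mathcal{P}'$. The vectors $A_i^{\mathcal{P}'}(W_t)$, $i\in[d]$, are supported on $W_t\cap[\mathcal{P}']$, so their span has dimension at most $|W_t \cap [\mathcal{P}']|$. We block the span of all these vectors for the guards, together with $x_t$. If each interval of $\mathcal{I}^\infty_t$ contains $\Omega(1)$ alive columns on average, this is not yet enough. The correct choice is to block, for each guard, only the rows restricted to the \emph{interval} between consecutive guards; by the sliding-window constraint this is equivalent to blocking the prefix rows. Since intervals are disjoint, the dimension of the span of all such interval-restricted rows is at most $\sum_I \min(d,|I\cap W_t|)$.

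To keep this at most $0.1|W_t|$, we leave free the large intervals, those with more than $s_\infty/2$ alive columns, together with a constant fraction of the columns of each block. The bound is then \Cref{prop:BoundErrorSetInfty}: the dimension of $H_t$ is at most $0.1|W_t|$, and every prefix has an $\infty$-error set of size $O(\log n)$. We prove it by the same charging arguments as \Cref{prop:boundASIConstraints} and \Cref{prop:boundASIErrorSet} with $s_\infty = O(1)$; the error-set bound becomes $(s_\infty/2)\log n = O(\log n)$. In words, each prefix's discrepancy change comes only from its $O(\log n)$ error columns, because its guard's discrepancy is frozen by blocking.

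Given this, fix $\mathcal{P}$ and row $i$. Since the guard part is blocked, $d\varphi_t(i) = \sum_{j\in\Error^\infty_t} A_i(j)u_t(j)\sqrt{dt}$. We run the Freedman argument (\Cref{fact:freedman}) on $Z_t = \varphi_t(i) - \beta\sum_{j\in \Error^\infty_t}x_t(j)^2$, and symmetrically on $-\varphi_t(i)$. By $O(1)$-spectral independence and $|A_i(j)|\le 1$, the drift is $-\beta\sum_{j\in\Error^\infty_t}\E_t u_t(j)^2\,dt$ and the variance is $\lesssim (1+\beta^2)\sum_{j\in\Error^\infty_t}\E_t u_t(j)^2\,dt$. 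This gives $\delta \approx 1/\beta$, and jumps of $\Error_t$ only help.

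Taking $\beta = \Theta(\sqrt{\log n})$ gives deviation $\xi = O(\log n\cdot\beta)$ with probability $1/\poly(n)$, plus the regularizer $\beta|\Error^\infty| = O(\sqrt{\log n}\cdot\log n)$. Optimizing $\beta$ against $\xi\approx \beta\log n$ (needed for tail $n^{-c}$ when $\delta = 1/\beta$) gives a total $O(\log^{3/2} n)$, matching $\lambda$. The main obstacle is the dimension count in \Cref{prop:BoundErrorSetInfty}. One must verify that blocking interval-restricted rows for all non-large intervals costs at most $0.1|W_t|$ dimensions while still giving every prefix an $O(\log n)$-size error set. We would try the same left-to-right tree charging, charging each blocked small interval's dimension (at most its size) to free columns in large left leaves.
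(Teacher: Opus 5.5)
\textbf{There is a genuine gap, at exactly the step you flag as the main obstacle, and your design cannot close it.}

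You require each guard $\mathcal{P}'$ to satisfy $d\varphi_t^{\mathcal{P}'}(i)=0$ for all $d$ rows simultaneously. So $H_t$ must contain the span of all interval-restricted rows $A_i(I\cap W_t)$. Because the intervals are disjoint, this span has dimension $\sum_I \operatorname{rank}\big(A_{[d],\,I\cap W_t}\big)$, which for generic input equals $\sum_I \min(d,|I\cap W_t|)$. Equivalently, freezing a single prefix in every row imposes $\sum_j v_j u_t(j)=0$ over the (up to $d$) alive columns before it. When those $v_j$ are linearly independent, this forces $u_t(j)=0$ on all of them. Within the budget $\dim(H_t)\le 0.1|W_t|=d$, guards of this kind can therefore cover only about $d$ of the $10d$ window columns, and the remaining prefixes have error sets of size $\Theta(d)$.

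Your remedy does not help. If you leave the large intervals and a constant fraction of each block unblocked, any guard with an unblocked column to its left no longer has zero discrepancy change. A prefix then picks up every free column to its left, again $\Theta(d)$ columns rather than $O(\log n)$. A sanity check points the same way: if error sets of $O(\log n)$ columns were achievable, your own Freedman computation with $\beta=\Theta(1)$ would give $O(\log n)$. That is strictly better than the lemma, which should itself have been a warning sign.

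\textbf{The missing idea is that guards must be row-specific, with interval size measured by $\ell_2^2$ mass in that row.}

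The paper keeps a separate tree $\mathsf{Tree}_{i,t}$ for each row $i$, with base intervals of row-$i$ mass $s_0=20\log n$. For each guard $\mathcal{P}'\in\mathcal{J}_{i,t}$ it blocks only the single vector $A_i^{\mathcal{P}'}(W_t)$, so each (row, guard) pair costs one dimension. The charging argument of \Cref{prop:boundASIConstraints} gives average row-$i$ mass at least $s_0/(2\log n)=10$ per interval. Since every column has $\|v_j\|_2\le 1$, the total mass satisfies $\sum_i\|A_i(W_t)\|_2^2\le |W_t|$, hence $\sum_i|\mathcal{J}_{i,t}|\le 0.1|W_t|$. This is where the $\ell_2$ column normalization enters, and your argument never uses it.

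The price is that the bad set is controlled only in mass, $\|A_i(\Bad_i^{\mathcal{P}})\|_2^2=O(\log^2 n)$, not in cardinality. The final step is the bound of \cite[Theorem 1]{BG17}: run Freedman with variance bounded by $\sum_j A_i(j)^2\,\E_t[u_t(j)^2]$ and a regularizer weighted by $A_i(j)^2$. This gives $O(\sqrt{\log^2 n\cdot \log n})=O(\log^{3/2} n)$, which is where the $\log^{3/2}$ actually comes from, rather than from a suboptimal choice of $\beta$.
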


\subsubsection{Bounding the ASI-Guard Part $L_{t}^{\mathsf{ASI}}$}
\label{subsubsec:ell2toell2ASI}

We first show a bound on the $\mathsf{ASI}$-guard term $L_{t}^{\mathsf{ASI}}$ in the following lemma.
\begin{lemma}[Bounding the ASI-guard part, $\ell_2$ to $\ell_2$]    \label{lem:ell2LinearASI}
    Consider the algorithm in Section \ref{subsec:ell2-to-ell2ALG}. For any time $t$, with probability $1 - 1/\mathsf{poly}(n)$,
    \[
    L_{t}^{\mathsf{ASI}} \leq \tau^{2}/6.
    \]
\end{lemma}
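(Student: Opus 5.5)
The plan is to rerun the supermartingale argument from the proof of \Cref{lem:bound_ell_infASI} with two changes. First, the regularizer is weighted by column norms. Second, the ASI quadratic variation is controlled through the $\ell_\infty$ bound $\|\varphi_t\|_\infty \le \lambda$ rather than through $\|\varphi_t\|_2 \le \tau$; this is available because of \Cref{abort:ell2ABORT} and \Cref{thm:ell2-to-ellinfMainThm}. As before, fix $t_0 \ge t^*$ and let $\mathsf{Corr} := W_{t^*}\cap[\mathcal{P}]$. Writing $v_j$ for the $j$th column of $A$, so that $\sum_i A_i(j)^2 = \|v_j\|_2^2 \le 1$, I define
\[
Y_t := L_t^{\ASI} - \beta \sum_{j\in \mathsf{Corr}} \|v_j\|_2^2\, x_t(j)^2 ,
\]
where $\beta$ is a large absolute constant (not $\Theta(\gamma_{\ASI}\log n)$ as before). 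In the frozen case $dY_t=0$ and there is nothing to check. Otherwise, since $u_t$ is mean-zero, the drift is
\[
\E_t[dY_t] = -\beta \sum_{j\in\mathsf{Corr}} \|v_j\|_2^2\,\E_t[u_t(j)^2]\,dt .
\]

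For the second moment I split $dY_t$ into the linear ASI part and the cross term of the regularizer, using $(a+b)^2\le 2a^2+2b^2$.
\begin{itemize}
\item \emph{Cross term.} By $O(1)$-spectral independence of $u_t$, $|x_t(j)|\le 1$ and $\|v_j\|_2\le 1$,
\[
\beta^2\,\E_t\Big(\sum_{j} \|v_j\|_2^2 u_t(j)x_t(j)\Big)^2 \lesssim \beta^2 \sum_j \|v_j\|_2^4\,\E_t[u_t(j)^2] \le \beta^2 \sum_j \|v_j\|_2^2\,\E_t[u_t(j)^2].
\]
\item \emph{ASI part.} Here the ASI constraint \eqref{eq:sdp:ASI} applies to the current ASI-guard $\mathcal{P}_t$, whose columns in $W_t$ are contained in $\mathsf{Corr}$. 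It gives
\[
\E_t\langle d\varphi_t^{\ASI},\varphi_t\rangle^2 \le \gamma_{\ASI}\sum_i \varphi_t(i)^2\,\E_t[d\varphi_t^{\ASI}(i)^2] \le \gamma_{\ASI}\lambda^2 \sum_i \E_t[d\varphi_t^{\ASI}(i)^2].
\]
Spectral independence then bounds the last sum by $\lesssim \sum_j \sum_i A_i(j)^2\,\E_t[u_t(j)^2]\,dt = \sum_j \|v_j\|_2^2\,\E_t[u_t(j)^2]\,dt$.
\end{itemize}
Combining the two, $\E_t[(dY_t)^2] \lesssim (\gamma_{\ASI}\lambda^2+\beta^2)\sum_j\|v_j\|_2^2\,\E_t[u_t(j)^2]\,dt$. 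This is the same weighted quantity that appears in the drift, so \eqref{eq:freedman_drift_vs_variation} holds with $\delta \asymp \beta/(\gamma_{\ASI}\lambda^2+\beta^2)$.

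Applying \Cref{fact:freedman} with $\xi=\tau^2/10$ gives the tail bound $\exp(-\Omega(\xi\beta/(\beta^2+\gamma_{\ASI}\lambda^2)))$. I check the two regimes separately.
\begin{itemize}
\item The term $\xi/\beta = \Omega(\tau^2)$ is trivially $\Omega(\log n)$.
\item With $\gamma_{\ASI}=100\sqrt d(\log n)^{-1/2}$ and $\lambda^2=\Theta(\log^3 n)$ we get $\gamma_{\ASI}\lambda^2=\Theta(\sqrt d\log^{5/2} n)$. The requirement $\xi\beta/(\gamma_{\ASI}\lambda^2)\gtrsim \log n$ then becomes $\tau^2\gtrsim \sqrt d\log^{7/2} n$. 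This is exactly where the $d^{1/4}\log^{7/4}n$ term of $\tau$ comes from.
\end{itemize}
For the compensation, $Y_{t_0}-Y_{t^*} \ge L_{t_0}^{\ASI} - \beta\sum_{j\in\mathsf{Corr}}\|v_j\|_2^2 \ge L_{t_0}^{\ASI}-10\beta d$. Since $\beta=O(1)$, this is at least $L_{t_0}^{\ASI}-\tau^2/15$ once the $\sqrt d$ term of $\tau$ has a large enough constant. So $L_{t_0}^{\ASI}\ge\tau^2/6$ forces $Y_{t_0}-Y_{t^*}\ge\xi$, and a union bound over the $\poly(n)$ time steps finishes the proof.

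The main obstacle is the tension in choosing $\beta$. A large $\beta$ such as $\Theta(\log n)$ would make the compensation $\beta d$ exceed $\tau^2=\Theta(d)$. A small $\beta$ forces the variance term $\gamma_{\ASI}\lambda^2$ to be absorbed by $\xi$ alone. The resolution is the $\ell_\infty$ bound: it replaces $\|\varphi_t\|_2^2\le \tau^2$ by $\|\varphi_t\|_\infty^2\le\lambda^2$ in the ASI variance, which is what makes a constant $\beta$ workable. I also need to verify carefully that weighting by $\|v_j\|_2^2$ keeps the drift comparable to the variance even though the ASI-guard $\mathcal{P}_t$ changes over time. This should hold because every guard's columns lie in $\mathsf{Corr}$.
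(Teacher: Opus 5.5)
Your proposal is correct and follows essentially the same route as the paper: a regularized process $Y_t = L_t^{\ASI} - \beta\sum_{j\in\mathsf{Corr}}(\cdot)\,x_t(j)^2$ with constant $\beta$, the ASI variance controlled through $\|\varphi_t\|_\infty \le \lambda$ so that $\gamma_{\ASI}\lambda^2 = \Theta(\sqrt{d}\log^{5/2} n)$ forces the $d^{1/4}\log^{7/4} n$ term, then Freedman's inequality, and finally the $10\beta d$ compensation absorbed by the $\sqrt{d}$ part of $\tau$. Your weighting of the regularizer by $\|v_j\|_2^2$ and your choice $\xi = \tau^2/10$ (instead of the paper's $\xi = \Theta(\gamma_{\ASI}\log^4 n)$) are harmless cosmetic variations.
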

Our proof follows the same strategy as Lemmas \ref{lem:bound_ell_infASI} and \ref{lem:bound_ellinf_error}, where we apply Freedman-type concentration for super-martingales. 
\begin{proof}
For a time step $t_{0} \geq 0$, we bound the probability that $L_{t_{0}}^{\mathsf{ASI}}$ exceeds $\tau^{2}/6$.

We let $\mathsf{Corr} = \mathsf{Corr}^{\mathcal{P}} := W_{t^{\star}} \cap \mathcal{P}$ be the set of columns in $W_{t^{\star}}$ with index at most $\mathcal{P}$, where $t^{\star}$ is the first time the window $W_{t^{\star}}$ contains $\mathcal{P}$. Recall from the proof of Lemma \ref{lem:bound_ell_infASI} that $\varphi_{t}$ incurs a non-zero discrepancy only on columns in $\mathsf{Corr}$.
We define the corresponding regularized ASI-guard term as,
\[
    Y_{t} := L_{t}^{\mathsf{ASI}} - \beta_{\mathsf{ASI}}\sum_{j \in \mathsf{Corr}}x_{t}(j)^{2},
\]
where we set $\beta_{\mathsf{ASI}} = O(1)$ for a sufficiently large constant. Below, we show that $dY_{t}$ satisfies the conditions of Freedman-type concentration (Fact \ref{fact:freedman}) for a suitably chosen $\delta$.
Like before, we map time steps $t \in [0,n]$ to $\{0,1,\ldots n/dt\}$.

For time steps $t$, where $x_{t}$ is frozen due to \eqref{eq:ell2stopping_condition} being violated prior to $t$, then $dL_{t}^{\mathsf{ASI}}$ and $dY_{t}$ equal $0$ and trivially satisfy \eqref{eq:freedman_drift_vs_variation} in Fact
\ref{fact:freedman} for any $\delta$. In the other case, we show that condition \ref{eq:freedman_drift_vs_variation} in Fact \ref{fact:freedman} holds for an appropriate $\delta > 0$, by computing the first and the second moment of $dY_{t}$. The increment of $Y_{t}$ is,
\[
    dY_{t} = \langle d\varphi_{t}^{\mathsf{ASI}}, \varphi_{t}\rangle - 2\beta_{\mathsf{ASI}}\sum_{j \in \mathsf{Corr}}u_{t}(j)x_{t}(j)\sqrt{dt} - \beta_{\mathsf{ASI}}\, \sum_{j \in \mathsf{Corr}}u_{t}(j)^{2} dt.
\]
Conditioning on the filtration $\mathcal{F}_{t}$ (or equivalently the outcome of randomness up to time $t$), the first moment of $dY_{t}$ is
\begin{align}
\mathbb{E}_{t}[dY_{t}] = -\beta_{\mathsf{ASI}}\sum_{j \in \mathsf{Corr}}\mathbb{E}_{t}[u_{t}(j)^{2}].
\label{eq:ell2LinearASIFirstMoment}
\end{align}
The second moment of $dY_{t}$ is bounded as,
\begin{align*}
\mathbb{E}_{t}[dY_{t}^{2}] &\lesssim \mathbb{E}_{t} \langle d\varphi_{t}^{\mathsf{ASI}}, \varphi_{t}\rangle^{2} + \beta_{\mathsf{ASI}}^{2}\cdot \mathbb{E}_{t} \bigl(\sum_{j \in \mathsf{Corr}}u_{t}(j)x_{t}(j)\bigr)^{2} dt \\
&\lesssim \mathbb{E}_{t}\langle d\varphi_{t}^{\mathsf{ASI}}, \varphi_{t}\rangle^{2} + \beta_{\mathsf{ASI}}^{2}\sum_{j \in \mathsf{Corr}}\mathbb{E}_{t}[u_{t}(j)^{2}]dt,
\end{align*}
where we drop lower order terms with scale $O((dt)^{3/2})$, use $(a + b)^{2} \leq 2a^{2} + 2b^{2}$ in the first inequality, and use $O(1)$-spectral independence of $u_{t}$ and that $|x_{t}(j)| \leq 1$ in the second inequality. We bound the first term using affine spectral independence.
\begin{align*}
\mathbb{E}_{t}\langle d\varphi_{t}^{\mathsf{ASI}},\varphi_{t}\rangle^{2} &\leq \gamma_{\mathsf{ASI}}\sum_{i \in [d]}\varphi_{t}(i)^{2} \cdot \mathbb{E}_{t} \bigl[d\varphi_{t}^{\mathsf{ASI}}(i)^{2}\bigr] \\
&= \gamma_{\mathsf{ASI}}\cdot \sum_{i \in [d]}\varphi_{t}(i)^{2} \cdot \mathbb{E}_{t}\bigl(\sum_{j \in \mathsf{Corr}}A_{i}(j)u_{t}(j)\bigr)^{2} \lesssim \gamma_{\mathsf{ASI}}\lVert \varphi_{t} \rVert_{\infty}^{2}\sum_{j \in \mathsf{Corr}}\mathbb{E}_{t}[u_{t}(j)^{2}]dt,
\end{align*}
where the first step uses $\gamma_{\mathsf{ASI}}$-affine spectral independence of $u_{t}$, and the final step uses that $u_t$ is $O(1)$-spectral independence, $\sum_{i \in [d]}A_{i}(j)^{2} \leq 1$ for all $j \in [n]$ and the $\ell_{1}$-$\ell_{\infty}$ Hölder's inequality.

Combining, we get the following bound on the second moment of $dY_{t}$,
\begin{align}
\mathbb{E}_{t}[dY_{t}^{2}] \lesssim (\gamma_{\mathsf{ASI}}\lVert \varphi_{t} \rVert_{\infty}^{2} + \beta_{\mathsf{ASI}}^{2})\sum_{j \in \mathsf{Corr}}\mathbb{E}_{t}[u_{t}(j)^{2}]dt \leq (\gamma_{\mathsf{ASI}}\lambda^{2} + \beta_{\mathsf{ASI}}^{2})\sum_{j \in \mathsf{Corr}}\mathbb{E}_{t}[u_{t}(j)^{2}]dt,
    \label{eq:ell2LinearASISecondMomentBound}
\end{align}
where we bound $\lVert \varphi_{t} \rVert_{\infty}^{2} \leq \lambda^{2}$ by \Cref{thm:ell2-to-ellinfMainThm}.
Combining \eqref{eq:ell2LinearASIFirstMoment} and \eqref{eq:ell2LinearASISecondMomentBound}, we get condition \eqref{eq:freedman_drift_vs_variation} with,
\[
\mathbb{E}_{t}[dY_{t}] \lesssim -\bigl(\frac{\beta_{\mathsf{ASI}}}{\lambda^{2} \gamma_{\mathsf{ASI}} + \beta_{\mathsf{ASI}}^{2}}\bigr)\mathbb{E}_{t}[dY_{t}^{2}].
\]
Then applying Freedman-type inequality (Fact \ref{fact:freedman}), with deviation $\xi = \Theta(\gamma_{\mathsf{ASI}}\log^{4}n)$, we obtain
\begin{align}
    \Pr(Y_{t} - Y_{t^{\star}} \geq \xi) \leq \exp(\frac{-\xi \, \beta_{\mathsf{ASI}}}{\beta_{\mathsf{ASI}}^{2} + \gamma_{\mathsf{ASI}}\lambda^{2}}).
    \label{eq:ell2FreedmanBoundLinearASI}
\end{align}
To bound the exponent in \eqref{eq:ell2FreedmanBoundLinearASI}, note that since $\xi = \Theta(\gamma_{\mathsf{ASI}}\log^{4}n) = \Theta(\sqrt{d}\log^{3.5}n)$ and $\beta_{\mathsf{ASI}} = O(1)$ (for sufficiently large constants), the first term is bounded as $(\xi \beta_{\mathsf{ASI}})/\beta_{\mathsf{ASI}}^{2} = \Omega(\log n)$. The second term can be bounded as,
\[
\frac{\xi\beta_{\mathsf{ASI}}}{\gamma_{\mathsf{ASI}}\lambda^{2}} = \frac{\Theta(\gamma_{\mathsf{ASI}}\log^{4}n)}{\gamma_{\mathsf{ASI}} \cdot O(\log^{3}n)} = \Theta(\log n).
\]
Thus, the tail probability in \eqref{eq:ell2FreedmanBoundLinearASI} is at most $1/\mathsf{poly}(n)$ (provided that the constants in $O(\cdot)$ for $\beta_{\mathsf{ASI}}$ and $\xi$ are sufficiently large). Finally, as in the proof of \Cref{lem:bound_ell_infASI}, the tail bound for \(Y_{t_0} - Y_{t^{\star}} \geq \xi\) implies a bound on the probability of our target event \(L_{t_0}^{\mathsf{ASI}} \geq \tau^{2}/6\). Note that as before, 

\[
Y_{t_{0}} - Y_{t^{\star}} \geq L_{t_{0}}^{\mathsf{ASI}} - \beta_{\mathsf{ASI}}(10d).
\]
As $\tau^{2} = \Theta(d + \sqrt{d}\log^{3.5}n)$ (with a much larger constant), $\tau^{2}/6 \geq \beta_{\mathsf{ASI}}(10d) + \xi$. Consequently, the event $L_{t_{0}}^{\mathsf{ASI}} \geq \tau^{2}/6$ implies that $Y_{t_{0}} - Y_{t^{\star}} \geq \xi$ and thus,
\[
\Pr(L_{t_{0}}^{\mathsf{ASI}} \geq \tau^{2}/6) \leq \Pr(Y_{t_{0}} - Y_{t^{\star}} \geq \xi) \leq 1/\mathsf{poly}(n).
\]
Taking a union bound over all time steps $t_{0}$ completes the proof of the lemma.

\end{proof}

\subsubsection{Bounding the Error Part $L_{t}^{\mathsf{err}}$}
\label{subsubsec:ell2toell2Err}

Next, we bound the error term $L_{t}^{\mathsf{err}}$ in the following lemma.

\begin{lemma}[Bounding the error part, $\ell_2$ to $\ell_2$] \label{lem:bound_ell2error}
    Consider the algorithm in Section \ref{subsec:ell2-to-ell2ALG}. With probability $1 - 1/\mathsf{poly}(n)$, we have
    \[
    L_{t}^{\mathsf{err}} \leq \tau^{2}/6.
    \]
\end{lemma}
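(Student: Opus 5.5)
We follow the template of the proof of \Cref{lem:bound_ellinf_error}. Fix a time $t_0 \ge t^*$ and the prefix $\mathcal{P}$, and define the regularized process
\[
Z_t := L_t^{\err} - \beta_{\err}\sum_{j\in\Error_t} x_t(j)^2 .
\]
The constant $\beta_{\err}$ is to be chosen; we expect $\beta_{\err}=\Theta(\log n)$, replacing the $\Theta(d\log n)$ used in the $\ell_\infty$ setting. As before, jumps of $\Error_t$ only decrease $Z_t$. Frozen steps (under \Cref{abort:ell2ABORT}) trivially satisfy \eqref{eq:freedman_drift_vs_variation}. For unfrozen steps the first moment is
\[
\E_t[dZ_t] = -\beta_{\err}\sum_{j\in\Error_t}\E_t[u_t(j)^2]\,dt .
\]

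The main change is in the second moment. We apply Cauchy--Schwarz to get $\E_t\langle d\varphi_t^{\err},\varphi_t\rangle^2 \le \|\varphi_t\|_2^2\,\E_t\|d\varphi_t^{\err}\|_2^2$. Next, by $O(1)$-spectral independence \eqref{eq:sdp:SI},
\[
\E_t\|d\varphi_t^{\err}\|_2^2 = \sum_i \E_t\Big(\sum_{j\in\Error_t}A_i(j)u_t(j)\Big)^2dt \lesssim \sum_{j\in\Error_t}\Big(\sum_i A_i(j)^2\Big)\E_t[u_t(j)^2]\,dt \le \sum_{j\in\Error_t}\E_t[u_t(j)^2]\,dt,
\]
using $\|v_j\|_2\le1$. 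So the factor $d$ from the $\ell_\infty$ case disappears. Together with the regularizer term and $\|\varphi_t\|_2\le\tau$ (from the stopping condition), this gives
\[
\E_t[(dZ_t)^2]\lesssim(\tau^2+\beta_{\err}^2)\sum_{j\in\Error_t}\E_t[u_t(j)^2]\,dt,
\]
so \eqref{eq:freedman_drift_vs_variation} holds with $\delta\approx\beta_{\err}/(\tau^2+\beta_{\err}^2)$. We then apply \Cref{fact:freedman} with $\xi=\tau^2/10$. The exponent is $\beta_{\err}\xi/\tau^2=\Theta(\beta_{\err})=\Omega(\log n)$ and $\xi/\beta_{\err}=\Omega(\tau^2/\log n)\ge\Omega(\log n)$ since $\tau^2\gtrsim\log^2 n$. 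Hence the tail probability is $1/\poly(n)$. The bounded-increment requirement $M$ is handled, as in the earlier lemmas, by $dt$ being tiny.

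Finally we convert back to $L^{\err}$:
\[
Z_{t_0}-Z_{t^*}\ge L_{t_0}^{\err}-\beta_{\err}|\Error| .
\]
By \Cref{prop:boundASIErrorSet}, $|\Error|\le O(\gamma_{\ASI}^{-1}d\log^2 n)$, and with $\gamma_{\ASI}=100\sqrt d(\log n)^{-1/2}$ this is $O(\sqrt d\log^{5/2}n)$. Then $\beta_{\err}|\Error|=O(\sqrt d\log^{7/2}n)$. This is at most $\tau^2/15$, because $\tau^2=\Theta(d+\sqrt d\log^{7/2}n)$ with a large constant. So $L_{t_0}^{\err}\ge\tau^2/6$ implies $Z_{t_0}-Z_{t^*}\ge\tau^2/10$, and a union bound over $t_0$ finishes the proof.

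The main obstacle is this final accounting. The regularizer cost $\beta_{\err}|\Error|$ must fit within $\tau^2$, and it is exactly what produces the $d^{1/4}\log^{7/4}n$ term in $\tau$. So $\beta_{\err}$ must be $O(\log n)$, not $O(d\log n)$, and this relies crucially on the $\ell_2$ column bound $\sum_i A_i(j)^2\le1$ in the quadratic variation estimate. If that estimate turns out lossy (e.g.\ the spectral independence step requiring per-row rather than aggregated bounds), one would first try to bound $\E_t\|d\varphi^{\err}_t\|_2^2$ via $\Tr(A_{\Error}U A_{\Error}^\top)\le 10\Tr(A_{\Error}\diag(U)A_{\Error}^\top)$ directly, which gives the same aggregated bound.
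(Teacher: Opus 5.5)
Your proof is correct and follows the paper's argument essentially line by line. It uses the same regularized process $Z_t$ with $\beta_{\err}=\Theta(\log n)$, the same Freedman step with $\xi=\tau^2/10$, and the same final accounting $\beta_{\err}|\Error| = O(\sqrt{d}\log^{7/2}n)\le \tau^2/15$ via \Cref{prop:boundASIErrorSet}. The only cosmetic difference is the order of Cauchy--Schwarz and spectral independence: the paper applies spectral independence directly to $\sum_j u_t(j)\langle v_j,\varphi_t\rangle$ and then uses $\langle v_j,\varphi_t\rangle^2\le\|\varphi_t\|_2^2$, while you apply Cauchy--Schwarz first and then spectral independence row by row with $\sum_i A_i(j)^2\le 1$. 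Both yield the same bound $\|\varphi_t\|_2^2\sum_{j}\E_t[u_t(j)^2]\,dt$.
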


\begin{proof}
    For $t_{0} \geq 0$, we bound the probability that $L_{t_{0}}^{\mathsf{err}}$ exceeds $\tau^{2}/6$.

    Analogous to the proof of Lemma \ref{lem:bound_ellinf_error}, we define the sets $\mathsf{Error}^{\mathcal{P}}$ (which we denote by $\mathsf{Error}$ for shorthand). The regularized process for $L_{t}^{\mathsf{err}}$:
    \begin{align}
        Z_{t} = L_{t}^{\mathsf{err}} - \beta_{\mathsf{err}}\sum_{j \in \mathsf{Error}_{t}^{\mathcal{P}}}x_{t}(j)^{2},
    \end{align}
    where $\beta_{\mathsf{err}} = O(\log n)$.
    Analogous to the proofs of Lemma \ref{lem:bound_ell_infASI}, Lemma \ref{lem:bound_ellinf_error} and Lemma \ref{lem:ell2LinearASI}, we analyze the increments at a time $t$ before the process has frozen.
    The increment of $dZ_{t}$ is given by,
    \begin{align}
        dZ_{t} = \langle d\varphi_{t}^{\mathsf{err}}, \varphi_{t}\rangle - 2\beta_{\mathsf{err}}\sum_{j \in \mathsf{Error}_{t}}u_{t}(j)x_{t}(j)\sqrt{dt} - \beta_{\mathsf{err}}\sum_{j \in \mathsf{Error}_{t}}u_{t}(j)^{2}.
    \end{align}

    Since $d\varphi_{t}^{\mathsf{err}}$ and $u_{t}$ are mean-zero, the first moment of $dZ_{t}$ is:
    \begin{align}
        \mathbb{E}_{t}[dZ_{t}] = -\beta_{\mathsf{err}}\sum_{j \in \mathsf{Error}_{t}}\mathbb{E}_{t}[u_{t}(j)^{2}].
        \label{eq:ell2LinearErrFirstMoment}
    \end{align}

    The second moment can be bounded as,
    \begin{align*}
    \mathbb{E}_{t}[(dZ_{t})^{2}] &= \mathbb{E}_{t}\bigl( \langle d\varphi_{t}^{\mathsf{err}}, \varphi_{t}\rangle - 2\beta_{\mathsf{err}}\sum_{j \in \mathsf{Error}_{t}}u_{t}(j)x_{t}(j)\sqrt{dt}\bigr)^{2}\\
    &\lesssim \mathbb{E}_{t}[\langle d\varphi_{t}^{\mathsf{err}},\varphi_{t}\rangle^{2}] + \beta_{\mathsf{err}}^{2}\cdot \mathbb{E}_{t}\bigl(\sum_{j\in \mathsf{Error}_{t}} u_{t}(j)x_{t}(j)\bigr)^{2} dt\\
    &\lesssim \mathbb{E}_{t}[\langle d\varphi_{t}^{\mathsf{err}},\varphi_{t}\rangle^{2}] + \beta_{\mathsf{err}}^{2}\sum_{j \in \mathsf{Error}_{t}}\mathbb{E}[u_{t}(j)^{2}],
    \end{align*}
    where we ignore the terms of scale $O(dt^{3/2})$ as they are lower order and the second inequality follows via $O(1)$-spectral independence of $u_{t}$. We bound the first term as follows:
    \begin{align*}
    \mathbb{E}_{t}[\langle d\varphi_{t}^{\mathsf{err}}, \varphi_{t}\rangle^{2}] &= \mathbb{E}_{t}\bigl(\sum_{j \in \mathsf{Error}_{t} \cap \mathcal{V}_{t}}u_{t}(j)((W_{t}^{\mathcal{P}})^{\top}\varphi_{t})(j)\bigr)^{2} \\
    &\lesssim \sum_{j \in \mathsf{Error}_{t}\cap \mathcal{V}_{t}}\mathbb{E}_{t}\bigl[u_{t}(j)^{2}((W_{t}^{\mathcal{P}})^{\top}\varphi_{t})(j)^{2}\bigr] \leq \lVert \varphi_{t}\rVert_{2}^{2} \sum_{j \in \mathsf{Error}_{t} \cap \mathcal{V}_{t}}\mathbb{E}[u_{t}(j)^{2}].
    \end{align*}
    Combining we get,
    \begin{align}
    \mathbb{E}_{t}[dZ_{t}^{2}] \lesssim (\lVert \varphi_{t}\rVert_{2}^{2} + \beta_{\mathsf{err}}^{2})\sum_{j \in \mathsf{Error}_{t} \cap \mathcal{V}_{t}}\mathbb{E}[u_{t}(j)^{2}]\ \lesssim (\tau^{2} + \beta_{\mathsf{err}}^{2})\sum_{j \in \mathsf{Error}_{t} \cap \mathcal{V}_{t}}\mathbb{E}[u_{t}(j)^{2}],
        \label{eq:ell2LinearErrSecondMomentBound}
    \end{align}
where the second inequality follows as the process has not frozen yet.
    This satisfies the Condition \eqref{eq:freedman_drift_vs_variation} in Fact \ref{fact:freedman} as,
    \[
    \mathbb{E}_{t}[dZ_{t}] \lesssim \bigl(\frac{-\beta_{\mathsf{err}}\, \xi}{\beta_{\mathsf{err}}^{2} + \tau^{2}}\bigr)\mathbb{E}_{t}[dZ_{t}^{2}].
    \]

Applying Freedman-type inequality for deviation $\xi = \tau^{2}/10$, we get
\[
\Pr(Z_{t} - Z_{0} \geq \xi) \leq \exp(\frac{-\xi\, \beta_{\mathsf{err}}}{\beta_{\mathsf{err}}^{2} + \tau^{2}}) = 1/\mathsf{poly(n)},
\]

where the last inequality is obtained by using that $\beta_{\mathsf{err}} = O(\log n)$ and $\xi = \tau^{2}/10$, where $\tau^{2} = O(d + \sqrt{d}\log^{7/2}n)$. This implies that $(\xi \beta_{\mathsf{err}})/\beta_{\mathsf{err}}^{2} = \Omega(d^{1/2}\log^{7/2}n/\log n)$ which is  $\Omega(\log n)$ and $(\xi\beta_{\mathsf{err}}/\tau^{2}) = \beta_{\mathsf{err}}/10 = \Omega(\log n)$ provided the constant in $\beta_{\mathsf{err}}$ is large enough.

Finally, similar to the proof of Lemma \ref{lem:bound_ellinf_error}, the event $L_{t_{0}}^{\mathsf{err}} \geq \tau^{2}/6$ implies the event $Z_{t_{0}} - Z_{t^{\star}} \geq \tau^{2}/10$. In particular, using $\tau^{2} = \Theta(d + d^{1/2}\log^{7/2}n)$ for a large enough constant, we have 
\[
Z_{t_{0}} - Z_{t^{\star}} \geq L_{t_{0}}^{\mathsf{err}} - \beta_{\mathsf{err}}\cdot O(\gamma_{\mathsf{ASI}}^{-1}d\log^{2} n) = L_{t_{0}}^{\mathsf{err}} - O(1) \cdot \frac{d \log^{3}n}{\sqrt{d}(\log n)^{-1/2}} \geq L_{t_{0}}^{\mathsf{err}} - \tau^{2}/10 .
\]

Combining everything, we get $\Pr(L_{t_{0}}^{\mathsf{err}} \geq \tau^{2}/6) \leq 1/\mathsf{poly}(n)$, and this completes the proof. 
\end{proof}

\subsubsection{Putting Things Together}
\label{subsec:ell2_to_ell2_putting_together}

\begin{proof}[Proof of \Cref{thm:main-ell2-to-ell2}]
Consider the algorithm in \Cref{subsec:ell2-to-ell2ALG}. There, we have already shown that the SDP at each step of the algorithm is feasible.
Combining \Cref{fact:quadratic_term}, \Cref{thm:ell2-to-ellinfMainThm}, \Cref{lem:ell2LinearASI} and \Cref{lem:bound_ell2error}, the probability that any of the four conditions in \eqref{eq:ell2stopping_condition} is violated is at most $1/\poly(n)$. 
by taking a union bound over all time steps and prefixes $\mathcal{P}$, the algorithm never outputs $\mathsf{ABORT}$ with probability at least $1 - 1/\poly(n)$. 
Finally, rounding the coloring of each dead column to $\pm 1$ incurs at most $O(1)$ additive $\ell_2$ discrepancy, and this completes the proof. 
\end{proof}

\subsection{Bounding $\ell_{\infty}$ Prefix Discrepancy}
\label{subsec:ell2-toellInfPrefixDiscrepancy}

In this section, we describe the data structure $\GlobalIntervalTree^\infty$ and prove \Cref{thm:ell2-to-ellinfMainThm}, restated below. 

\ellinfboundMainThm*

To bound the $\ell_\infty$ prefix discrepancy, we will enforce a set of blocking constraints (corresponding to $H_t$ in \eqref{eq:sdp:block-rows}). 
We describe these constraints and how it controls $\|\varphi_t^{\mathcal{P}}\|_\infty$ in \Cref{subsubsec:DiscBeyondPartialColoring}. 
Then in \Cref{subsubsec:modifiedGlobalIntervalTree}, we present the modified data structure $\ModifiedGlobalIntervalTree$ for maintaining these blocking constraints and analyze its properties. Finally, we prove  \Cref{thm:ell2-to-ellinfMainThm} in \Cref{subsubsec:proof-ell2-to-ellinfMainThm}.

\subsubsection{Our Strategy}\label{subsubsec:DiscBeyondPartialColoring}

We follow the approach in \cite{BG17} of bounding the prefix discrepancy $|\varphi_t^{\mathcal{P}}(i)|$ for each row $i \in [d]$. 
Recall that $\varphi_t^{\mathcal{P}}(i)$ only incurs non-zero discrepancy when $\mathcal{P} \in W_t$ (due to constraint \eqref{eq:sdp:block-W}). 
To control its discrepancy while $\mathcal{P} \in W_t$, we impose a set of blocking constraints (i.e., $H_t \neq \emptyset$  in \eqref{eq:sdp:block-rows}).
We fix an arbitrary row $i \in [d]$ and describe the blocking constraints for row $i$ below.

At each time step $t$, we maintain a subset $\mathcal{J}_{i,t} \subseteq W_t$ and include the prefix $A_i^{\mathcal{P}'}$ of row $i$ into the subspace $H_t$ for each $\mathcal{P}' \in \mathcal{J}_{i,t}$, i.e., we ensure that $d \varphi_t^{\mathcal{P'}}(i) = 0$ for any $\mathcal{P}' \in \mathcal{J}_{i,t}$. 
We require $\sum_{i \in [d]} |\mathcal{J}_{i,t}| \leq 0.1 |W_t|$ to ensure that $\dim(H_t) \leq 0.1|W_t|$, as needed in \Cref{fact:sdpFeasibility}. 
The data structure $\ModifiedGlobalIntervalTree$ that maintains the sets $\{\mathcal{J}_{i,t}\}_{i \in [d]}$ will be given in \Cref{subsubsec:modifiedGlobalIntervalTree}. Here, we show to bound $|\varphi_t^{\mathcal{P}}(i)|$, and we drop the superscript $\mathcal{P}$ below when the context is clear. 

\smallskip
\noindent
\textbf{Bad Columns for $\mathcal{P}$, and Bounding $|\varphi_t(i)|$.} 
Borrowing terminology from \Cref{subsec:globalIntervalTree}, we call $ \mathcal{J}_{i,t}$ the {\em blocking-guarded prefixes} for row $i$, and call the remaining prefixes {\em unguarded} for row $i$. 
For each unguarded prefix $\mathcal{P} \in [W_t]$ for row $i$, we call $\max\{\mathcal{J}_{i,t} \cap [\mathcal{P}]\}$, i.e., the maximum-indexed blocking-guarded prefix prior to $\mathcal{P}$, its {\em blocking-guard} for row $i$. The alive columns between the blocking-guard and $\mathcal{P}$ will be called {\em bad columns}. Analogous to \Cref{defn:error-set}, we define the following notation of {\em bad set} for prefix $\mathcal{P}$ in row $i$. 

\begin{definition}[Bad set for $\mathcal{P}$ in row $i$] 
    The bad set for prefix $\mathcal{P}$ in row $i$ up to time $t$, denoted as $\Bad_{i,t}^{\mathcal{P}}$, is the union of all bad columns for $\mathcal{P}$ at all time steps $t' \leq t$. The bad set for $\mathcal{P}$ is defined as $\Bad^{\mathcal{P}}_{i} := \Bad^{\mathcal{P}}_{i,n}$. 
\end{definition}
Note that the blocking-guard of prefix $\mathcal{P}$ incurs zero discrepancy change at time $t$, and thus
\[
d \varphi_t(i) = d \varphi_t^{\bad}(i) ,
\]
where $d \varphi_t^{\bad}(i)$ is the discrepancy change due to the bad columns for row $i$ at time $t$ (see \Cref{fig:blockGuard}).

\begin{figure}[h]
    \centering
    \includegraphics[width=0.6\textwidth]{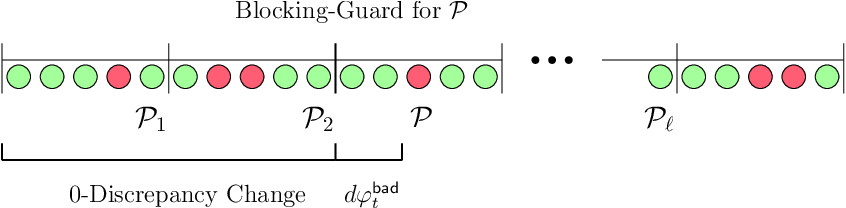}
    \caption{The blocking-guard for prefix $\mathcal{P}$ in row $i$. The discrepancy change $d \varphi_t^{\mathcal{P}}(i) = d \varphi_t^{\bad}(i)$ is entirely due to the set of bad columns.}

    \label{fig:blockGuard}
\end{figure}

It turns out that the contribution of the bad columns depends on their {\em $\ell_2^2$ mass}, where we define the $\ell_2^2$ mass of a set $S \subseteq [n]$ in row $i$ to be $\|A_i(S)\|_2^2= \sum_{j \in S} A_i(j)^2$.  
In particular, \cite[Theorem 1]{BG17} gives the bound $|\varphi_t(i)| \leq O(\sqrt{\|A_i(\Bad_i)\|_2^2 \cdot \log n})$ with probability $1- 1/\poly(n)$.

Therefore, to obtain the result in \Cref{thm:ell2-to-ellinfMainThm}, we need to bound $\|A_i(\Bad_i)\|_2^2 \leq O(\log^2 n)$. We will show how $\ModifiedGlobalIntervalTree$ achieves this condition in \Cref{subsubsec:modifiedGlobalIntervalTree}.

\subsubsection{Modified Global Interval Tree}
\label{subsubsec:modifiedGlobalIntervalTree}
In this section, we describe the data structure $\ModifiedGlobalIntervalTree$ for maintaining the blocking constraints $\{\mathcal{J}_{i,t}\}_{i \in [d]}$. 

The idea is similar to $\GlobalIntervalTree$, but we include a full description here for completeness.

Similar to the interval representation in \Cref{subsec:globalIntervalTree}, the set of blocking-guarded prefixes $\mathcal{J}_{i,t}$ (for row $i$) corresponds naturally to consecutive intervals that partition $[W_{t}]$. We abuse notation and also use $\mathcal{J}_{i,t}$ to denote the set of intervals in this partition. 
$\ModifiedGlobalIntervalTree$ maintains these intervals $\{\mathcal{J}_{i,t}\}_{i \in [d]}$ (that satisfy $\sum_{i \in [d]} |\mathcal{J}_{i,t}| \leq 0.1 |W_t|$) by adding and merging (small) intervals as before.

\smallskip
\noindent
\textbf{Initialization and Adding Intervals.} Before the process starts, for each row $i \in [d]$, we split $[n]$ into a collection $\mathcal{B}_{i,0}$ of base intervals, each with $\ell_{2}$-squared mass $s_0 = 20  \log n$.

Then, $\mathcal{J}_{i,0}$ will be the set of all base intervals that are completely contained in the first $10d$ columns.\footnote{We do not need an extra constraint at the end of the window as in \Cref{remark:WindowEndConstraint}, since the entire time window $W_t$ is blocked at each time step $t$ due to \eqref{eq:sdp:block-rows}.} 

Note that the total $\ell_2^2$ mass of $W_0$ over all $d$ rows is at most $|W_0|$ (as each column has $\ell_2$ norm at most $1$), we have $\sum_{i \in [d]} |\mathcal{J}_{i,0}| \leq 0.1 |W_0|$. 
As before, the data structure adds a base interval in $\mathcal{B}_{i,0}$ to $\mathcal{J}_{i,t}$ whenever it becomes active, i.e., it contains an active column and no dormant column.

\smallskip
\noindent
\textbf{Merging Intervals.}
Analogous to Section \ref{subsec:globalIntervalTree}, the intervals in $\mathcal{J}_{i,t}$ are merged using a global binary tree $\mathsf{Tree}_{i,t}$. Initially, $\mathsf{Tree}_{i,0}$ is the complete binary tree of height $O(\log n)$ whose leaves correspond to all the base intervals $\mathcal{B}_{i,0}$, and at each time $t$, $\mathsf{Tree}_{i,t}$ is a subtree of $\mathsf{Tree}_{i,0}$. As before, the active leaves of $\mathsf{Tree}_{i,t}$ correspond to intervals in $\mathcal{J}_{i,t}$. 

As before, sibling leaves of $\mathsf{Tree}_{i,t}$ are merged carefully when they become {\em small}, defined as follows. The {\em $\ell_2^2$ size} of an interval $I$ in row $i$ at time $t$ is defined to be the $\ell_2^2$ mass (in row $i$) of the alive columns contained in $I$ at time $t$.
We call an interval {\em small} if its $\ell_2^2$ size is $\leq s_0/2 = 10 \log n$ at time $t$. 

Analogous to \Cref{mergerule:global_binary_tree}, $\mathsf{Tree}_{i,t}$ is updated via the following merging rule.

\begin{mergerule}[Merging via $\ModifiedGlobalIntervalTree$] \label{rule:mergeRuleModifiedGlobalTree} At any time $t$,
\begin{enumerate}
    \item For any small left leaf $v \in \mathsf{Tree}_{t}$, delete it and merge its interval with the interval at the next leaf; In particular, if the right sibling of $v$ is also a leaf, then also delete the right siblind and use their parent to represent the merged interval.
    
    \item For any small right leaf $v \in \mathsf{Tree}_{i,t}$, if their left sibling of $v$ has been deleted, then delete $v$ and use its parent to represent its interval.
\end{enumerate}
\end{mergerule}

We end this subsubsection by proving the correctness of $\ModifiedGlobalIntervalTree$. In \Cref{prop:BlockingConstraintBound}, we bound the number of intervals in $\{\mathcal{J}_{i,t}\}_{i \in [d]}$ and in \Cref{prop:BoundErrorSetInfty}, we show that $\mathsf{Bad}_{i}^{\mathcal{P}}$ has small $\ell_2^2$ mass.

\begin{proposition}[Number of blocking constraints]     \label{prop:BlockingConstraintBound}At any time $t$, consider $\mathsf{Tree}_{i,t}$ after \Cref{rule:mergeRuleModifiedGlobalTree} is completed. Then the average $\ell_2^2$ size of active leaves in $\mathsf{Tree}_{i,t}$ is at least $s_0/(2 \log n) = 10$. Consequently, the total number of blocking constraints is    $\big|\sum_{i \in [d]}\mathcal{J}_{i,t} \big| \leq 0.1|W_{t}|$.
\end{proposition}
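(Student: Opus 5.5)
The plan is to reuse the charging argument of \Cref{prop:boundASIConstraints} verbatim, with ``size'' replaced by ``$\ell_2^2$ size'' in row $i$. Fix a row $i$ and consider $\mathsf{Tree}_{i,t}$ after \Cref{rule:mergeRuleModifiedGlobalTree} has been applied.

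First, I would note that no active left leaf is small, except possibly the last active leaf, which is treated as in \Cref{footnote:last-active-left-leaf}; any such leaf would have been deleted by rule (i). Second, take any small right leaf $v$. Its left sibling $w$ must still exist, since otherwise rule (ii) would delete $v$. The subtree rooted at $w$ must then contain an active non-small left leaf $w'$; if every active leaf there were small, rule (i) would keep merging, first at the leftmost leaf and then upward, until that subtree disappears. We charge $v$ to $w'$.

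Each small right leaf charged to $w'$ is the sibling of an ancestor of $w'$. The tree has height at most $\log n$, so $w'$ receives fewer than $\log n$ charges. Since $w'$ has $\ell_2^2$ size greater than $s_0/2$, the group consisting of $w'$ and its charged leaves has average $\ell_2^2$ size at least $s_0/(2\log n)=10$. Non-small right leaves already exceed this average, so the overall average $\ell_2^2$ size of active leaves is at least $10$.

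For the count, the active leaves of $\mathsf{Tree}_{i,t}$ partition $[W_t]$. Their total $\ell_2^2$ size is therefore at most $\sum_{j\in W_t} A_i(j)^2$, because only alive columns count and these are exactly $W_t$. This gives $|\mathcal{J}_{i,t}|\le \frac{1}{10}\sum_{j\in W_t}A_i(j)^2$. Summing over $i\in[d]$ and using $\sum_i A_i(j)^2=\|v_j\|_2^2\le 1$ yields $\sum_i|\mathcal{J}_{i,t}|\le 0.1|W_t|$.

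The main obstacle is the boundary bookkeeping: the trailing active leaf, and base intervals that are only partially inside $[W_t]$ or have just been added. These could contribute a few extra small intervals per row, which would cost up to $O(d)$ extra constraints overall. I would handle this by the convention of \Cref{footnote:last-active-left-leaf}. If needed, I would also absorb an $O(1)$-per-row slack by using the constant $20$ in $s_0$, arguing that the extra leaf per row only matters for rows with positive mass in $W_t$.
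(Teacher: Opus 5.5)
Your proposal is correct and is essentially the paper's proof: you use the same charging of each small right leaf to a non-small leaf in its left sibling's subtree (fewer than $\log n$ charges each), then $|\mathcal{J}_{i,t}| \le \frac{1}{10}\sum_{j \in W_t} A_i(j)^2$ summed over rows using $\|v_j\|_2 \le 1$, and you set the trailing active leaf aside by the same footnote convention. I would drop your optional fallback: in the worst case one extra leaf in each of $d$ rows already adds $d = 0.1|W_t|$ constraints, which no choice of the constant in $s_0$ can offset; it is also unneeded, since \Cref{thm:SDPFeasibility} tolerates $\delta$ up to $0.7$, well above the $0.2$ used.
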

\begin{proof}
The proof is essentially the same as the proof of Proposition \ref{prop:boundASIConstraints}. 
As before, there is no small left leaf. By the same charging argument there (that charges $<\log n$ small right leaves to any active non-small left leaf), the average $\ell_2^2$ size of the active leaves (or intervals in $\mathcal{J}_{i,t}$) is at least $s_0/(2 \log n) = 10$. The second statement in the proposition then follows immediately, by noting that the total $\ell_2^2$ size of the time window $W_t$ is at most $|W_t|$. 

\end{proof}

Next we show how to bound the $\ell_2^2$ mass of the bad set for $\mathcal{P}$ in row $i$. 

\begin{proposition}[Bounding the bad set]     \label{prop:BoundErrorSetInfty}
The $\ell_{2}^{2}$ mass of $\mathsf{Bad}^{\mathcal{P}}_{i}$ is at most $(s/2) \log n = O(\log^{2}n)$.
\end{proposition}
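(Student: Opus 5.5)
The plan is to mirror the proof of \Cref{prop:boundASIErrorSet} exactly, with the uniform base-interval size $s$ replaced by the row-$i$ $\ell_2^2$ mass $s_0 = 20\log n$. Here the $s$ in the statement should be read as $s_0$, so the target bound is $(s_0/2)\log n = 10\log^2 n = O(\log^2 n)$. Fix a row $i$ and a prefix $\mathcal{P}$, and let $v_{\mathcal{P}} \in \mathcal{B}_{i,0}$ be the base interval of $\mathsf{Tree}_{i,0}$ containing $\mathcal{P}$. Walk down the root-leaf path from the root of $\mathsf{Tree}_{i,0}$ to $v_{\mathcal{P}}$. Whenever a node on this path is a right child, add its left sibling to a set $\mathcal{G}_{\mathcal{P}}$.

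The first step is to note that $\mathsf{Tree}_{i,0}$ has height $O(\log n)$. This holds because the base intervals have $\ell_2^2$ mass $s_0 \geq 1$ and each column has $\ell_2$ norm at most $1$, so there are at most $n$ leaves. Hence $|\mathcal{G}_{\mathcal{P}}| < \log n - 1$ (up to the same convention as before). The subtrees rooted at $G_{\mathcal{P}}(1), \dots, G_{\mathcal{P}}(\ell)$, together with the partial base interval $v_{\mathcal{P}} \cap [\mathcal{P}]$, partition $[\mathcal{P}]$. Bad columns for $\mathcal{P}$ in row $i$ at any time lie between its blocking-guard and $\mathcal{P}$, so only these pieces can contribute to $\Bad_i^{\mathcal{P}}$.

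Next, bound the contribution of each piece.

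\begin{itemize}
\item \emph{A subtree $G_{\mathcal{P}}(k)$.} While the node $G_{\mathcal{P}}(k)$ has not been deleted, no interval of its subtree is merged into the interval containing $\mathcal{P}$. By \Cref{rule:mergeRuleModifiedGlobalTree}, merges proceed bottom-up inside the subtree and only produce the node itself as a leaf. The only way columns of this subtree enter the interval containing $\mathcal{P}$ is when $G_{\mathcal{P}}(k)$ has become an active left leaf that is small and is merged rightward. At that moment its alive columns have $\ell_2^2$ mass at most $s_0/2$.
\item \emph{The partial base interval.} Since bad columns are alive, and columns that die afterwards are never re-added, the bad columns contributed by this piece form a subset of the alive columns at the merge time. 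The partial base interval $v_{\mathcal{P}} \cap [\mathcal{P}]$ contributes at most its total mass $s_0$.
\end{itemize}

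Summing gives $(s_0/2)\ell + s_0 \leq (s_0/2)\log n$.

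The main subtlety is the claim that a subtree contributes only once, and only at mass at most $s_0/2$. One must argue the following:
\begin{itemize}
\item Before $G_{\mathcal{P}}(k)$ is deleted, the blocking-guard of $\mathcal{P}$ is always at or to the right of the left endpoint of the interval of $G_{\mathcal{P}}(k)$'s right neighbor on the path. This holds because that node's subtree boundary is a guarded prefix whenever it lies in the window.
\item After the merge, the columns of $G_{\mathcal{P}}(k)$ stay bad, but no new columns from it are added. It is a fixed set of alive columns at merge time, and later columns only die.
\end{itemize}
Both points are identical to the argument for \Cref{prop:boundASIErrorSet}, with mass replacing cardinality, so I expect no real obstacle beyond checking this monotonicity.
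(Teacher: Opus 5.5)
Your proposal is correct and is essentially the paper's argument. The paper only says to rerun the proof of \Cref{prop:boundASIErrorSet} in $\mathsf{Tree}_{i,0}$, splitting $[\mathcal{P}]$ into fewer than $\log n$ left-sibling subtrees along the root-leaf path, each contributing $\ell_2^2$ mass at most $s_0/2$; your reading of $s$ as $s_0 = 20\log n$ and your monotonicity remarks (the boundary at the right end of $G_{\mathcal{P}}(k)$ survives until $G_{\mathcal{P}}(k)$ is itself deleted as a small left leaf, and alive sets only shrink afterwards) supply the omitted details, with one small slip: ``at most $n$ leaves'' only gives $|\mathcal{G}_{\mathcal{P}}| \leq \log n$, and getting $<\log n - 1$ needs the fact that each base interval has mass about $s_0$ and hence at least about $s_0$ columns, though the $O(\log^2 n)$ bound is unaffected either way.
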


\begin{proof}
This proof is essentially the same as the proof of \Cref{prop:boundASIErrorSet}, by decomposing the prefix $\mathcal{P}$ into a union of $<\log n$ subtrees (for row $i$), and show that each subtree contributes at most $s/2$ to the $\ell_2^2$ mass of $\mathsf{Bad}^{\mathcal{P}}_{i}$. We omit the details. 

\end{proof}

\subsubsection{Proof of \Cref{thm:ell2-to-ellinfMainThm}}
\label{subsubsec:proof-ell2-to-ellinfMainThm}
Now we are ready to complete the proof of \Cref{thm:ell2-to-ellinfMainThm}. 

\begin{proof}[Proof of \Cref{thm:ell2-to-ellinfMainThm}]
Fix a prefix $\mathcal{P}$ and a row $i \in [d]$. 
\Cref{prop:BlockingConstraintBound} ensures that $\sum_{i \in [d]} |\mathcal{J}_{i,t}| \leq 0.1 |W_t|$ at each time $t$, so that the condition of \Cref{fact:sdpFeasibility} is met and the algorithm is feasible. 
By \Cref{prop:BoundErrorSetInfty},  the $\ell_{2}^{2}$ mass of $\Bad_i^{\mathcal{P}}$ is at most $O(\log^2 n)$. 
Then by \cite[Theorem 1]{BG17}, with probability $1 - 1/\poly(n)$, all time $t$ one has  
\[
|\varphi_t^{\mathcal{P}}(i)| \leq O \Big(\sqrt{\big\|A_i(\Bad_i^{\mathcal{P}}) \big\|_2^2 \cdot \log n} \Big) \leq O(\log^{3/2} n). 
\]
Taking a union bound over all prefixes $\mathcal{P} \in [n]$ and rows $i \in [d]$ completes the proof. 

\end{proof}

\section{Concluding Remarks}
\label{sec:conclusion_open_problems}

Our constructive bounds for $\ell_2$ prefix discrepancy and Steinitz problems match the conjectured bounds when $d \geq \polylog(n)$, where the $\polylog$ factors are larger than those in Banaszczyk's non-constructive results (where he only requires $d \geq \log n$). 
Below, we briefly explain the several reasons for this loss in our current analysis and mention some related open problems. 

\begin{enumerate}
    \item[(a)] First, in the $\GlobalIntervalTree$ data structure, we lower bound the average interval size by $\Omega(s/\log n)$ (even though the base intervals have size $s$), which results in an extra $\log n$ factor in the number of intervals we need to maintain; we also upper bound the size of the error set by $O(s \log n)$. Both these $\log n$ factors are due to the global binary tree having depth $\Theta(\log n)$. Nonetheless, as the sliding window has size $|W_t| = O(d)$, it may be possible to reduce these factors to $\log d$ if we could maintain a binary tree structure within $W_t$. 

    \item[(b)] Second, as we are controlling $\gamma_{\ASI} = \widetilde{\Theta}(\sqrt{d})$ many ASI-guarded prefixes simultaneously, this leads to an extra $\widetilde{\Theta}(d^{1/4})$ factor in the additive term (e.g., see \eqref{eq:ell_infty-to-ell_2-overview}).
    Getting rid of this factor would require attaining $O(1)$-affine spectral independence (or equivalently, satisfying \eqref{eq:sdp:ASI} for only $O(1)$ prefixes), and it is unclear how to achieve this. 

    \item[(c)] Third, we used the $\ell_\infty$ prefix discrepancy bound of $O(\log^{3/2} n)$ in \Cref{thm:ell2-to-ellinfMainThm}. 
    The extra $\sqrt{\log n}$ loss (from the current best constructive bound of $O(\log n)$ in \cite{ALS21}) is due to the $\Theta(\log n)$ depth of $\ModifiedGlobalIntervalTree$, and it may be possible to replace it by a $\log d$ factor as mentioned in (a). Alternatively, it would be interesting to see if the (rather) different ideas in \cite{ALS21} could be combined with our algorithmic framework to achieve a $O(\log n)$ bound directly.

\end{enumerate}

We now mention some open problems related to prefix discrepancy below. 

\medskip
\noindent \textbf{Matching Banaszczyk's Bound Algorithmically.} It is an intriguing open question to match Banaszczyk's non-constructive bound for $\ell_2$ to $\ell_2$ prefix discrepancy. 

\begin{question}[Matching Banaszczyk's bound for $\ell_2$ to $\ell_2$ prefix discrepancy]
    Given vectors $v_{1},\ldots, v_{n} \in \mathbb{R}^{d}$ with $\lVert v_{i}\rVert_{2} \leq 1$ for each $i \in [n]$, does there exist an efficient algorithm that finds $x \in \{\pm 1\}^{n}$ such that $\lVert \sum_{i = 1}^{t} x_{i}v_{i}\rVert_{2} \leq O(\sqrt{d} + \sqrt{\log n})$ for all prefix $t \in [n]$?
\end{question}

Similarly, for the $\ell_2$ to $\ell_\infty$ prefix discrepancy (i.e., the prefix version of Koml\'os problem), it is widely open how to attain Banaszczyk's $O(\sqrt{\log n})$ bound algorithmically. As mentioned earlier, the current best algorithmic bound is $O(\log n)$ \cite{ALS21}.

\begin{question}[Matching Banaszczyk's bound for prefix Koml\'os]
Given $v_{1},v_{2},\ldots v_{n} \in \mathbb{R}^{d}$ with $\lVert v_{i}\rVert_{2} \leq 1$ for each $i \in [n]$, does there exist an efficient algorithm that can find $x \in \{\pm 1\}^{n}$ such that $\lVert \sum_{i=1}^{t} x_{i}v_{i}\rVert_{\infty} \leq O(\sqrt{\log n})$ for all prefix $t \in [n]$?
\end{question}

More generally, any algorithm that can find a random coloring $x \in \{\pm 1\}^n$ such that every prefix discrepancy vector is $O(1)$-subgaussian can attain both bounds above (in fact, this prefix $O(1)$-subgaussian property is equivalent to Banaszczyk's result in \cite{Ban12}). 
Currently, the best algorithm can only achieve $O(\log n)$-subgaussianity for every prefix discrepancy \cite{ALS21}.

\medskip
\noindent \textbf{Beating Banaszczyk's Bound for Prefix Discrepancy.} For some time, one of the central goals of algorithmic discrepancy has been to match Banaszczyk's non-constructive bound algorithmically. But recently, \cite{BJ25b,BJ26} were able to go beyond and substantially improve upon Banaszczyk's result for the Beck-Fiala and Koml\'os problems. 
It would be very interesting (and significant) to improve upon Banaszczyk's non-constructive bounds for prefix discrepancy problems.

\begin{question}[Beating Banaszczyk's bound for $\ell_2$ to $\ell_2$ prefix discrepancy]
    Given $v_{1},\ldots, v_{n} \in \mathbb{R}^{d}$ with $\lVert v_{i}\rVert_{2} \leq 1$ for each $i \in [n]$, does there exist $x \in \{\pm 1\}^{n}$ with $\lVert \sum_{i = 1}^{t} x_{i}v_{i}\rVert_{2} \leq O(\sqrt{d}) + o(\sqrt{\log n})$?
\end{question}

Similarly, it is an open question to beat Banaszczyk's $O(\sqrt{d \log n})$ bound for the $\ell_\infty$ to $\ell_\infty$ prefix discrepancy problem. Interestingly, unlike the $\ell_2$ to $\ell_2$ setting, an algorithm matching Banaszczyk's bound in this setting is known \cite{BG17}. 

\begin{question}[Beating Banaszczyk's bound for $\ell_\infty$ to $\ell_\infty$ prefix discrepancy]
    Given $v_{1},\ldots, v_{n} \in \mathbb{R}^{d}$ with $\lVert v_{i}\rVert_{\infty} \leq 1$ for each $i \in [n]$, does there exist $x \in \{\pm 1\}^{n}$ with $\lVert \sum_{i = 1}^{t} x_{i}v_{i}\rVert_{\infty} \leq o(\sqrt{d \log n})$?
\end{question}

Another closely related question is Tusn\'ady's problem, which asks for the combinatorial discrepancy of all axis-parallel rectangles for an arbitrary set of $n$ points in $[0,1]^d$. 
For this problem, the current best lower bound is $\Omega(\log^{d-1} n)$ \cite{MN15}, while the current best non-constructive bound (which crucially relies on Banaszczyk's result) is $O(\log^{d-1/2} n)$ \cite{Nik17}, and the current best constructive bound is $O(\log^d n)$ \cite{BG17}. 
It would be quite interesting to improve any of these results.

\newpage

\appendix

\section{SDP Feasibility}
\label{appendix:SDPFeasibility}

In this subsection, we give the statement of the SDP feasibility theorem in \cite[Theorem A.4]{BJ26} and explain how \Cref{fact:sdpFeasibility} follows from it. 

\begin{theorem}[Theorem A.4 of \cite{BJ26}]\label{thm:SDPFeasibility}
Let $W \subset \mathbb{R}^{h}$ be a subspace with dimension $\mathrm{dim}(W) = \delta h,$ and $E_{s} \in \mathbb{R}^{r_{s}h\times h}$ for all $s \in [q]$ be a set of matrices with $r_{s} \geq 1$. Then, for any $0 \leq \kappa, \eta, \eta_{s} < 1$, where $s \in [q]$, such that $\eta + \kappa + \sum_{s = 1}^{q}\eta_{s} \leq 1 - \delta$, there is an $h\times h$ PSD matrix $U$ satisfying:
\begin{enumerate}
    \item $U_{j,j} \leq 1$ for all $j \in [h]$,
    \item $\mathrm{Tr}(U) \geq \kappa h$,
    \item $\langle ww^{\top}, U\rangle$ for all $w \in W$,
    \item $U \preceq \frac{1}{\eta}\mathsf{diag}(U)$, and
    \item $E_{s}UE_{s}^{\top} \preceq \frac{r_{s}}{\eta_{s}}\mathsf{diag}(E_{s}UE_{s}^{\top})$ for all $s \in [q]$.
\end{enumerate}
Furthermore, such a PSD matrix $U$ can be computed by solving a semidefinite program (SDP).
\end{theorem}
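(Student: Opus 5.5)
The plan is to prove the feasibility statement in \Cref{thm:SDPFeasibility} by conic (SDP) duality, following the template of Bansal--Dadush--Garg for spectral independence and extending it to the affine constraints.

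\textbf{Step 1: the pair of programs.} First I would rescale so that the target becomes an optimization problem: maximize $\mathrm{Tr}(U)$ subject to constraints 1, 3, 4, 5 and $U \succeq 0$. It then suffices to show that the optimum is at least $\kappa h$. The program is strictly feasible after dropping $W$ (take a small multiple of the identity restricted to $W^\perp$), so strong duality applies.

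The dual has the following variables:
\begin{itemize}
\item $y \in \R^h_{\geq 0}$ for the diagonal constraints;
\item a free multiplier for each direction $w \in W$, which together give a symmetric matrix $M$ supported on $W$;
\item a PSD matrix $G$ for constraint 4;
\item PSD matrices $G_s \in \R^{r_s h \times r_s h}$ for constraint 5.
\end{itemize}
Dual feasibility reads
\[
\mathrm{diag}(y) + M + \big(\tfrac{1}{\eta}\mathrm{diag}(G) - G\big) + \sum_{s} E_s^\top\big(\tfrac{r_s}{\eta_s}\mathrm{diag}(G_s) - G_s\big)E_s \succeq I ,
\]
with dual objective $\sum_j y_j$. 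Here the diagonal operator in each $G_s$-term acts on the $r_s h$-dimensional space, not on $\R^h$. The goal is to show that every dual-feasible tuple has $\sum_j y_j \geq \kappa h$.

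\textbf{Step 2: choosing a test subspace.} Next I would find a subspace $S$ of dimension at least $(1-\delta-\eta-\sum_s \eta_s)h \geq \kappa h$ on which all the ``negative'' parts are harmless. I would intersect the following:
\begin{itemize}
\item $W^\perp$, which kills $M$ and costs $\delta h$ dimensions;
\item the span of the eigenvectors of $G$ with eigenvalue at most $\mathrm{Tr}(G)/(\eta h)$; by Markov's inequality on the spectrum, at most $\eta h$ eigenvalues exceed this threshold;
\item for each $s$, the preimage under $E_s$ of the span of the eigenvectors of $G_s$ with eigenvalue at most $\mathrm{Tr}(G_s)/(\eta_s h)$; this costs at most $\eta_s h$ dimensions. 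The factor $r_s$ is exactly what makes this count work: $G_s$ lives in dimension $r_s h$, but only $\eta_s h$ of its eigenvalues can exceed $r_s \mathrm{Tr}(G_s)/(\eta_s r_s h)$.
\end{itemize}

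\textbf{Step 3: taking the trace on $S$.} Let $P$ be the projector onto $S$. Taking $\mathrm{Tr}(P\,\cdot\,)$ of the dual constraint gives
\[
\sum_j y_j \;\geq\; \mathrm{Tr}(P \,\mathrm{diag}(y) P) \;\geq\; \dim S \;-\; \Big[\tfrac{1}{\eta}\mathrm{Tr}(P\,\mathrm{diag}(G)) - \mathrm{Tr}(PGP)\Big] \;-\; (\text{analogous } G_s \text{ terms}).
\]
The task is then to show each bracket is $\leq 0$, or at least that the brackets sum to something nonpositive.

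\textbf{Main obstacle.} The bracket is not obviously signed, because $\tfrac{1}{\eta}\mathrm{diag}(G)$ is a positive term that $P$ need not make small. I would handle this with the same device as Bansal--Dadush--Garg: first reduce, by a normalization argument, to dual solutions in which $\mathrm{diag}(G)$ is dominated by a multiple of $\mathrm{diag}(y)$. Then the positive part can be absorbed by rescaling $y$, at a cost of exactly the $\eta$ share of the dimension budget.

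For the $G_s$ terms, the same absorption has to go through $E_s$, since the diagonal there lives in the larger space. This is where I expect the real work to be. My first attempt would be to replace $G_s$ by $E_s^\top G_s E_s$ and compare its diagonal with $E_s^\top \mathrm{diag}(G_s) E_s$, using the $r_s$ slack. If that comparison fails, the fallback is an iterative argument: build $U$ as a projector onto a subspace chosen greedily, verifying constraints 4 and 5 through the eigenvalue thresholds of Step 2.
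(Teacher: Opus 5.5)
The paper does not prove this statement; it imports it from \cite{BJ26}. So I assess your proposal on its own, and it has a genuine gap, beginning with the dual.

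Pairing $G\succeq 0$ with $\frac1\eta\mathrm{diag}(U)-U\succeq 0$ contributes $\langle \frac1\eta\mathrm{diag}(G)-G,\,U\rangle\ge 0$. In the Lagrangian of $\max\mathrm{Tr}(U)$, this term lands on the opposite side from where you put it, and the same holds for each $G_s$. The correct dual constraint is
\[
\mathrm{diag}(y)+M+G+\sum_s E_s^\top G_sE_s \;\succeq\; I+\Lambda,\qquad \Lambda:=\tfrac1\eta\mathrm{diag}(G)+\sum_s\tfrac{r_s}{\eta_s}E_s^\top\mathrm{diag}(G_s)E_s .
\]
With your sign, the goal of Step 1 is simply false. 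Take $y=0$, $M=0$, $G_s=0$ and $G=\frac{\eta}{1-\eta}I$: this satisfies your constraint with dual objective $0<\kappa h$. Your ``main obstacle'', an uncontrolled positive $\frac1\eta\mathrm{diag}(G)$, is therefore an artifact of the sign. In the correct dual that term sits on the right-hand side, and what must be controlled is the mass of $G$ and of $E_s^\top G_sE_s$ relative to $\Lambda$.

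A smaller point: $\epsilon P_{W^\perp}$ is not strictly feasible for constraint 4 when, e.g., some $e_j\in W$. Zero duality gap and primal attainment should instead come from strict feasibility of the dual (large $y$, everything else zero) together with the dual being bounded below.

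Even with the sign fixed, Steps 2--3 do not close.
\begin{itemize}
\item Your eigenvalue cut gives $\mathrm{Tr}(PGP)\le\dim S\cdot\mathrm{Tr}(G)/(\eta h)$. This compares $G$ with its \emph{average} diagonal entry. What you must beat is $\frac1\eta\sum_jP_{jj}G_{jj}$, which is small when $\mathrm{diag}(G)$ is concentrated on coordinates that $P$ nearly annihilates.
\item For $G_s$, thresholding eigenvalues of $G_s$ says nothing about $E_s^\top G_sE_s$ versus $E_s^\top\mathrm{diag}(G_s)E_s$. Moreover, $r_s$ cancels out of your count, so it is never actually used.
\item The ``normalization'' and ``greedy projector'' fallbacks are not arguments.
\end{itemize}

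The missing idea is a non-diagonal reweighting. Conjugate the dual constraint by $N:=(I+\Lambda)^{-1/2}$, so the right-hand side becomes $I$. Then take the trace against the projector $P$ onto $(NW)^\perp$, which has rank at least $(1-\delta)h$ and kills $NMN$, and drop $P$ from the remaining PSD terms. The three resulting bounds are:
\begin{itemize}
\item $\mathrm{Tr}(N\,\mathrm{diag}(y)N)=\sum_jy_j[(I+\Lambda)^{-1}]_{jj}\le\sum_jy_j$.
\item By operator monotonicity, $\mathrm{Tr}(NGN)\le\mathrm{Tr}\big(G(I+\tfrac1\eta\mathrm{diag}(G))^{-1}\big)=\sum_j\frac{\eta G_{jj}}{\eta+G_{jj}}\le\eta h$.
\item Let $D_s=\mathrm{diag}(G_s)$ (restricted to its support), $B=D_s^{1/2}E_s$ and $c=r_s/\eta_s$. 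Then $\mathrm{Tr}(NE_s^\top G_sE_sN)\le\mathrm{Tr}\big(D_s^{-1/2}G_sD_s^{-1/2}\cdot B(I+cB^\top B)^{-1}B^\top\big)\le\frac1c\cdot r_sh=\eta_sh$. This uses $B(I+cB^\top B)^{-1}B^\top\preceq\frac1cI$ and the fact that the unit-diagonal matrix $D_s^{-1/2}G_sD_s^{-1/2}$ has trace at most $r_sh$.
\end{itemize}
The last bound is exactly where $r_s$ enters: it pays for the dimension $r_sh$ of the space on which $G_s$ lives. Summing gives $\sum_jy_j\ge(1-\delta-\eta-\sum_s\eta_s)h\ge\kappa h$.
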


To obtain \Cref{fact:sdpFeasibility} from \Cref{thm:SDPFeasibility}, we set the parameters in \Cref{thm:SDPFeasibility} for our SDP \eqref{eq:FrameworkSDP} as follows. 
We set $\kappa = 0.1$ corresponding to the trace condition in \eqref{eq:sdp:trace}.

\textbf{Blocking Constraints.} The subspace $W$ in \Cref{thm:SDPFeasibility} will contain the subspace $H_t$ for the blocking constraints in \eqref{eq:sdp:block-rows}, which has $\dim(H_t) \leq 0.1|W_{t}|$, as well as the $d$ rows of the sliding window $W_t$ provided $|W_{t}| = 10d$ (as mentioned in \Cref{footnote:drop_sliding_window_constrarint}, if $|W_t| < 10d$, the constraint \eqref{eq:sdp:block-W} will be dropped). 

This corresponds to setting $\delta$ to be at most $0.2$ in \Cref{thm:SDPFeasibility}.

\textbf{SI Constraint.} The SI constraint \eqref{eq:sdp:SI} corresponds to setting $\eta = 0.1$  in \Cref{thm:SDPFeasibility}.

\textbf{ASI Constraint.} The constraint \eqref{eq:sdp:ASI} corresponds to setting $q = 1$ and $\eta_1 = 0.1$ in \Cref{thm:SDPFeasibility}, as the ratio between the row and column dimensions of $E_t$ is $r_1 = 0.1 \gamma_{\ASI}$. 

The parameter setting above clearly satisfies $\kappa + \delta + \eta + \eta_1 < 1$, and thus \Cref{thm:SDPFeasibility} implies the feasibility of our SDP \eqref{eq:FrameworkSDP} and hence \Cref{fact:sdpFeasibility}.

\section*{Acknowledgements}

We thank Nikhil Bansal and Yuhan Ye for many helpful discussions. 

\bibliographystyle{alphaurl}
\bibliography{bib.bib}

@book{Ver18book,
  title={High-dimensional probability: An introduction with applications in data science},
  author={Vershynin, Roman},
  volume={47},
  year={2018},
  publisher={Cambridge university press}
}

@inproceedings{BJM+21,
  title={Prefix Discrepancy, Smoothed Analysis, and Combinatorial Vector Balancing},
  author={Bansal, Nikhil and Jiang, Haotian and Meka, Raghu and Singla, Sahil and Sinha, Makrand},
  booktitle={13th Innovations in Theoretical Computer Science Conference (ITCS 2022)},
  pages={13--1},
  year={2022},
  organization={Schloss Dagstuhl--Leibniz-Zentrum f{\"u}r Informatik}
}

@inproceedings{BJSS20,
  title={Online vector balancing and geometric discrepancy},
  author={Bansal, Nikhil and Jiang, Haotian and Singla, Sahil and Sinha, Makrand},
  booktitle={Proceedings of the 52nd Annual ACM SIGACT Symposium on Theory of Computing},
  pages={1139--1152},
  year={2020}
}

@article{JKS19,
  title={Online geometric discrepancy for stochastic arrivals with applications to envy minimization},
  author={Jiang, Haotian and Kulkarni, Janardhan and Singla, Sahil},
  journal={arXiv preprint arXiv:1910.01073},
  year={2019}
}

@incollection{Cho94,
  title={Convergence as of rearranged random series in Banach space and associated inequalities},
  author={Chobanyan, Sergej},
  booktitle={Probability in Banach Spaces, 9},
  pages={3--29},
  year={1994},
  publisher={Springer}
}

@inproceedings{HS14,
  title={Near-optimal herding},
  author={Harvey, Nick and Samadi, Samira},
  booktitle={Conference on Learning Theory},
  pages={1165--1182},
  year={2014},
  organization={PMLR}
}

@inproceedings{Ber31,
  title={Zwei s{\"a}tze {\"u}ber ebene vektorpolygone},
  author={Bergstr{\"o}m, Viktor},
  booktitle={Abhandlungen aus dem Mathematischen Seminar der Universit{\"a}t Hamburg},
  volume={8},
  number={1},
  pages={206--214},
  year={1931},
  organization={Springer}
}

@article{HA89,
  title={Bibliography: Series of vectors and Riemann sums},
  author={Halperin, Israel},
  journal={(No Title)},
  year={1989}
}

@inproceedings{JR19,
  title={On integer programming and convolution},
  author={Jansen, Klaus and Rohwedder, Lars},
  booktitle={10th Innovations in theoretical computer science conference (ITCS 2019)},
  year={2018}
}

@article{EW19,
  title={Proximity results and faster algorithms for integer programming using the Steinitz lemma},
  author={Eisenbrand, Friedrich and Weismantel, Robert},
  journal={ACM Transactions on Algorithms (TALG)},
  volume={16},
  number={1},
  pages={1--14},
  year={2019},
  publisher={ACM New York, NY, USA}
}

@article{Sev94,
  title={On some geometric methods in scheduling theory: a survey},
  author={Sevast'janov, Sergey Vasil'evich},
  journal={Discrete Applied Mathematics},
  volume={55},
  number={1},
  pages={59--82},
  year={1994},
  publisher={Elsevier}
}

@article{DFG12,
  title={The master equality polyhedron with multiple rows},
  author={Dash, Sanjeeb and Fukasawa, Ricardo and G{\"u}nl{\"u}k, Oktay},
  journal={Mathematical programming},
  volume={132},
  number={1},
  pages={125--151},
  year={2012},
  publisher={Springer}
}

@article{BMMP12,
  title={Vectors in a box},
  author={Buchin, Kevin and Matou{\v{s}}ek, Ji{\v{r}}{\'\i} and Moser, Robin A and P{\'a}lv{\"o}lgyi, D{\"o}m{\"o}t{\"o}r},
  journal={Mathematical programming},
  volume={135},
  number={1},
  pages={323--335},
  year={2012},
  publisher={Springer}
}

@article{AB86,
  title={Regular hypergraphs, Gordon's lemma, Steinitz'lemma and invariant theory},
  author={Alon, Noga and Berman, Kenneth A},
  journal={Journal of Combinatorial Theory, Series A},
  volume={43},
  number={1},
  pages={91--97},
  year={1986},
  publisher={Elsevier}
}

@article{Ban87,
  title={The Steinitz constant of the plane.},
  author={Banaszczyk, Wojciech},
  year={1987},
  publisher={Walter de Gruyter, Berlin/New York Berlin, New York}
}

@article{GS80,
  title={Value of the Steinitz constant},
  author={Grinberg, Victor S and Sevast'yanov, Sergey V},
  journal={Functional Analysis and Its Applications},
  volume={14},
  number={2},
  pages={125--126},
  year={1980},
  publisher={Springer}
}

@article{Bar81,
  title={A vector-sum theorem and its application to improving flow shop guarantees},
  author={B{\'a}r{\'a}ny, Imre},
  journal={Mathematics of Operations Research},
  volume={6},
  number={3},
  pages={445--452},
  year={1981},
  publisher={INFORMS}
}

@article{Spe77,
  title={Balancing games},
  author={Spencer, Joel},
  journal={Journal of Combinatorial Theory, Series B},
  volume={23},
  number={1},
  pages={68--74},
  year={1977},
  publisher={Elsevier}
}

@inproceedings{BJ26,
  title={Decoupling via Affine Spectral-Independence: Beck-Fiala and Koml$\backslash$'os Bounds Beyond Banaszczyk},
  author={Bansal, Nikhil and Jiang, Haotian}, 
  booktitle={Symposium on Theory of Computing, STOC},
  year={2026},
  organization={ACM}
}

@article{Lev05,
  title={Sur les s{\'e}ries semi-convergentes},
  author={L{\'e}vy, Paul},
  journal={Nouvelles annales de math{\'e}matiques: journal des candidats aux {\'e}coles polytechnique et normale},
  volume={5},
  pages={506--511},
  year={1905}
}

@article{Ste13,
  title={Bedingt konvergente Reihen und konvexe Systeme.},
  author={Steinitz, Ernst},
  year={1913},
  publisher={Walter de Gruyter, Berlin/New York Berlin, New York}
}

@article{BG81,
  title={On some combinatorial questions in finite-dimensional spaces},
  author={B{\'a}r{\'a}ny, Imre and Grinberg, Victor S},
  journal={Linear Algebra and its Applications},
  volume={41},
  pages={1--9},
  year={1981},
  publisher={Elsevier}
}

@article{Ban12,
  title={On series of signed vectors and their rearrangements},
  author={Banaszczyk, Wojciech},
  journal={Random Structures \& Algorithms},
  volume={40},
  number={3},
  pages={301--316},
  year={2012},
  publisher={Wiley Online Library}
}

@inproceedings{BJ25b,
  title={An improved bound for the {Beck-Fiala} conjecture},
  author={Bansal, Nikhil and Jiang, Haotian},
  booktitle={66th IEEE Symposium on Foundations of Computer Science (FOCS)},
  year={2025},
url = {https://arxiv.org/abs/2508.01937},
organization={IEEE}
}

@article{Bar08,
  title={On the power of linear dependencies},
  author={B{\'a}r{\'a}ny, Imre},
  journal={Building Bridges: Between Mathematics and Computer Science},
  pages={31--45},
  year={2008},
  publisher={Springer}
}

@article{Nik17,
  title={Tighter bounds for the discrepancy of boxes and polytopes},
  author={Nikolov, Aleksandar},
  journal={Mathematika},
  volume={63},
  number={3},
  pages={1091--1113},
  year={2017},
  publisher={Wiley Online Library}
}

@inproceedings{BRS22,
  title={Flow time scheduling and prefix {Beck-Fiala}},
  author={Bansal, Nikhil and Rohwedder, Lars and Svensson, Ola},
  booktitle={Symposium on Theory of Computing, STOC},
  pages={331--342},
  year={2022}
}

@article{Ban24,
  title={On a Generalization of Iterated and Randomized Rounding},
  author={Bansal, Nikhil},
  journal={Theory of Computing},
  volume={20},
  number={1},
  pages={1--23},
  year={2024},
  publisher={Theory of Computing Exchange}
}

@inproceedings{BJ25a,
  title={{Quasi-Monte Carlo Beyond Hardy-Krause}},
  author={Bansal, Nikhil and Jiang, Haotian},
  booktitle={ Symposium on Discrete Algorithms (SODA)},
  pages={2051--2075},
  year={2025},
  organization={SIAM}
}

@book{CST14,
  title={A panorama of discrepancy theory},
  author={Chen, William and Srivastav, Anand and Travaglini, Giancarlo},
  volume={2107},
  year={2014},
  publisher={Springer}
}

@inproceedings{MN15,
  author       = {Jir{\'{\i}} Matousek and
                  Aleksandar Nikolov},
  title        = {Combinatorial Discrepancy for Boxes via the $\gamma_2$ Norm},
  booktitle    = {Symposium on Computational Geometry, {SoCG}},
    pages        = {1--15},
  year         = {2015}
}

@inproceedings{Ban22,
  title={Discrepancy theory and related algorithms},
  author={Bansal, Nikhil},
  booktitle={Proc. Int. Cong. Math},
  volume={7},
  pages={5178--5210},
  year={2022}
}

@inproceedings{DGLN16,
  author       = {Daniel Dadush and
                  Shashwat Garg and
                  Shachar Lovett and
                  Aleksandar Nikolov},
  title        = {Towards a Constructive Version of {B}anaszczyk's Vector Balancing Theorem},
  booktitle    = { {APPROX/RANDOM} 2016},
  pages        = {28:1--28:12},
  year         = {2016},
  url          = {https://doi.org/10.4230/LIPIcs.APPROX-RANDOM.2016.28},
  doi          = {10.4230/LIPICS.APPROX-RANDOM.2016.28}
}

@inproceedings{JSS23,
  author       = {Vishesh Jain and
                  Ashwin Sah and
                  Mehtaab Sawhney},
  title        = {Spencer's theorem in nearly input-sparsity time},
  booktitle    = { Symposium on Discrete Algorithms,
                  {SODA}},
  pages        = {3946--3958},
year={2023},
  url          = {https://doi.org/10.1137/1.9781611977554.ch152},
  doi          = {10.1137/1.9781611977554.CH152}}

@inproceedings{LRR17,
  title={Deterministic discrepancy minimization via the multiplicative weight update method},
  author={Levy, Avi and Ramadas, Harishchandra and Rothvoss, Thomas},
  booktitle={Integer Programming and Combinatorial Optimization (IPCO)},
  pages={380--391},
  year={2017}
}

@article{BS13,
  title={Deterministic discrepancy minimization},
  author={Bansal, Nikhil and Spencer, Joel},
  journal={Algorithmica},
  volume={67},
  pages={451--471},
  year={2013},
  publisher={Springer}
}

@book{Cha00,
  title={The Discrepancy Method: Randomness and Complexity},
  author={Chazelle, Bernard},
  year={2000},
  publisher={Cambridge University Press}
}

@book{Mat09,
  title={Geometric discrepancy: An illustrated guide},
  author={Matousek, Jiri},
  volume={18},
  year={2009},
  publisher={Springer Science \& Business Media}
}

@inproceedings{Ban10,
  title={Constructive algorithms for discrepancy minimization},
  author={Bansal, Nikhil},
  booktitle={Symposium on Foundations of Computer Science},
  pages={3--10},
  year={2010}
}

@article{Spe85,
  title={Six standard deviations suffice},
  author={Spencer, Joel},
  journal={Transactions of the American mathematical society},
  volume={289},
  number={2},
  pages={679--706},
  year={1985}
}

@article{Glu89,
  title={Extremal properties of orthogonal parallelepipeds and their applications to the geometry of Banach spaces},
  author={Gluskin, Efim Davydovich},
  journal={Mathematics of the USSR-Sbornik},
  volume={64},
  number={1},
  pages={85},
  year={1989}
}

@article{Rot17,
  title={Constructive discrepancy minimization for convex sets},
  author={Rothvoss, Thomas},
  journal={SIAM Journal on Computing},
  volume={46},
  number={1},
  pages={224--234},
  year={2017},
  publisher={SIAM}
}

@article{LM15,
  title={Constructive discrepancy minimization by walking on the edges},
  author={Lovett, Shachar and Meka, Raghu},
  journal={SIAM Journal on Computing},
  volume={44},
  number={5},
  pages={1573--1582},
  year={2015},
  publisher={SIAM}
}

@inproceedings{HSS14,
  title={Discrepancy without partial colorings},
  author={Harvey, Nicholas and Schwartz, Roy and Singh, Mohit},
  booktitle={ {APPROX/RANDOM}},
  year={2014}
}

@article{BF81,
  title={“{I}nteger-making” theorems},
  author={Beck, J{\'o}zsef and Fiala, Tibor},
  journal={Discrete Applied Mathematics},
  volume={3},
  number={1},
  pages={1--8},
  year={1981},
  publisher={Elsevier}
}

@article{Ban98,
  title={Balancing vectors and {G}aussian measures of n-dimensional convex bodies},
  author={Banaszczyk, Wojciech},
  journal={Random Structures \& Algorithms},
  volume={12},
  number={4},
  pages={351--360},
  year={1998},
  publisher={Wiley Online Library}
}

@inproceedings{BLV22,
  title={A Unified Approach to Discrepancy Minimization},
  author={Bansal, Nikhil and Laddha, Aditi and Vempala, Santosh},
      booktitle={ {APPROX/RANDOM}},
  pages={1--1},
  year={2022}
}

@inproceedings{BG17,
  title={Algorithmic discrepancy beyond partial coloring},
  author={Bansal, Nikhil and Garg, Shashwat},
  booktitle={ Symposium on Theory of Computing, {STOC}},
  pages={914--926},
  year={2017}
}

@article{BDG19,
  title={An algorithm for {K}oml{\'o}s conjecture matching Banaszczyk's bound},
  author={Bansal, Nikhil and Dadush, Daniel and Garg, Shashwat},
  journal={SIAM Journal on Computing},
  volume={48},
  number={2},
  pages={534--553},
  year={2019},
  publisher={SIAM}
}

@article{ES18,
  author    = {Ronen Eldan and
               Mohit Singh},
  title     = {Efficient algorithms for discrepancy minimization in convex sets},
  journal   = {Random Struct. Algorithms},
  volume    = {53},
  number    = {2},
  pages     = {289--307},
  year      = {2018}
}

@article{HSSZ24,
  title={Balancing covariates in randomized experiments with the {G}ram--{S}chmidt walk design},
  author={Harshaw, Christopher and S{\"a}vje, Fredrik and Spielman, Daniel A and Zhang, Peng},
  journal={Journal of the American Statistical Association},
  volume={119},
  number={548},
  pages={2934--2946},
  year={2024},
  publisher={Taylor \& Francis}
}

@inproceedings{PV23,
  author       = {Lucas Pesenti and
                  Adrian Vladu},
  title        = {Discrepancy Minimization via Regularization},
  booktitle    = {Symposium on Discrete Algorithms,
                  {SODA}},
  pages        = {1734--1758},
  year         = {2023},
  doi          = {10.1137/1.9781611977554.CH66}
}

@inproceedings{ALS21,
    AUTHOR = {Alweiss, Ryan and Liu, Yang P. and Sawhney, Mehtaab},
     TITLE = {Discrepancy minimization via a self-balancing walk},
 BOOKTITLE = {
              Symposium on Theory of Computing, {STOC}},
     PAGES = {14--20},
      YEAR = {2021},
      ISBN = {978-1-4503-8053-9},
       DOI = {10.1145/3406325.3450994},
       URL = {https://doi.org/10.1145/3406325.3450994}
}

@inproceedings{BDGL18,
  title={The {Gram-Schmidt} walk: a cure for the { Banaszczyk} blues},
  author={Bansal, Nikhil and Dadush, Daniel and Garg, Shashwat and Lovett, Shachar},
  booktitle={Symposium on Theory of Computing, {STOC}},
  pages={587--597},
  year={2018}
}

\end{document}